\documentclass[11pt]{article}

\usepackage[T1]{fontenc}
\usepackage[utf8]{inputenc}
\usepackage{lmodern}
\usepackage[margin=1in]{geometry}
\usepackage{amsmath,amssymb,amsthm,mathtools}
\usepackage{booktabs}
\usepackage{tikz}
\usepackage{graphics,graphicx,color}
\usepackage[dvipsnames]{xcolor}
\usetikzlibrary{arrows.meta,positioning}
\usepackage{enumitem}
\usepackage{microtype}
\usepackage[hidelinks]{hyperref}
\usepackage[
  backend=biber,
  style=alphabetic,
  sorting=nyt,
  maxalphanames=4,
  minalphanames=4,
  maxbibnames=10
]{biblatex}
\usepackage{comment}
\usepackage{tcolorbox}

\usepackage{keytheorems}
\newkeytheoremstyle{boxthm}
  {
    noteseparator={ },
    notebraces={(}{)},
    headpunct={.},
    bodyfont=\normalfont,
    tcolorbox={
    coltitle=black,
    toptitle=2pt,
    colback=OliveGreen!5,
  colframe=OliveGreen!25,
  boxrule=1pt,
  arc=6pt,
  left=6pt,right=6pt,top=6pt,bottom=6pt
    }, 
  }
\newkeytheorem{protocol}[style=boxthm]

\newkeytheoremstyle{minorboxthm}
  {
    noteseparator={ },
    notebraces={(}{)},
    headpunct={.},
    bodyfont=\normalfont,
    tcolorbox={
    coltitle=black,
    toptitle=2pt,
    colback=Gray!5,
  colframe=Gray!25,
  boxrule=1pt,
  arc=6pt,
  left=6pt,right=6pt,top=6pt,bottom=6pt
    }, 
  }
\newkeytheorem{minorprotocol}[name=Reduction,style=minorboxthm,sibling=protocol]

\usepackage{xcolor}
\usepackage{fontawesome5}

\allowdisplaybreaks
\numberwithin{equation}{section}

\newtheorem{theorem}{Theorem}[section]
\newtheorem{proposition}[theorem]{Proposition}
\newtheorem{lemma}[theorem]{Lemma}
\newtheorem{corollary}[theorem]{Corollary}
\newtheorem{remark}[theorem]{Remark}
\newtheorem{definition}[theorem]{Definition}

\newcommand{\E}{\mathop{\mathbb{E}}}
\newcommand{\F}{\mathbb{F}}

\newcommand{\cK}{\mathcal{K}}
\newcommand{\Tr}{\operatorname{Tr}}
\newcommand{\Id}{I}
\newcommand{\ket}[1]{\lvert #1\rangle}
\newcommand{\bra}[1]{\langle #1\rvert}
\newcommand{\braket}[2]{\langle #1\mid #2\rangle}
\newcommand{\proj}[1]{\lvert #1\rangle\!\langle #1\rvert}
\newcommand{\GL}{\mathsf{GL}}
\newcommand{\CV}{\mathsf{CV}}
\newcommand{\Bop}{\mathsf{B}}
\newcommand{\Cop}{\mathsf{C}}
\newcommand{\Jop}{\mathsf{J}}
\newcommand{\RePart}{\operatorname{Re}}

\title{Unclonable encryption from BB84 states:\\
a simultaneous Goldreich-Levin reduction}
\author{
Andrea Coladangelo\thanks{University of Washington. Email: \texttt{coladan@cs.washington.edu}}
\quad
Qipeng Liu\thanks{University of California San Diego. Email: \texttt{qipengliu0@gmail.com}}
\quad
Ziyi Xie\thanks{Tsinghua University. Email: \texttt{xie-zy21@mails.tsinghua.edu.cn}}
}
\date{}

\begin{document}
\maketitle

\begin{abstract}
Goldreich-Levin reductions are ubiquitous in cryptography: they convert an algorithm capable of guessing $\langle r, m \rangle$ (mod $2$) for a hidden string $m$ and a random challenge $r$, to one that is capable of extracting the entirety of $m$.

Here, we describe a ``simultaneous'' Goldreich-Levin reduction for two entangled parties who are capable of guessing $\langle r, m \rangle$ given uniformly random \emph{identical} challenges $r$. This allows to upgrade \emph{any} unclonable encryption scheme satisfying ``search'' security to one satisfying the gold standard of unclonable ``indistinguishability''. As a corollary, we show that the simplest candidate unclonable encryption scheme from BB84 states satisfies unclonable indistinguishability.

This result was discovered by GPT-5.6 Ultra after a few interactions. Our prompts included recent results on unclonable encryption by Ananth and Sahai~\cite{AS26} and Ragavan~\cite{Rag26}. 
\end{abstract}

\medskip
\medskip
\medskip
\medskip
\begin{center}
\setlength{\fboxsep}{8pt}
\fbox{%
  \parbox{0.90\textwidth}{%
    \small
    \textbf{AI usage statement.}
    The result of this work was discovered by GPT-5.6 Ultra. In our prompts, we included the two recent results by Ananth and Sahai, and Ragavan, and we asked GPT to prove security of a scheme based solely on BB84 states (rather than Pauli states). GPT returned a correct proof. We then asked it to find a different proof based on a simultaneous Goldreich--Levin reduction, which it also found.

    \medskip
    We find the new reduction to be quite magical. We have spent considerable time trying to understand why it works and to arrive at a good conceptual explanation of it. As Terry Tao recently put it, echoing Bill Thurston, mathematics is not just about producing theorems and proofs, but about developing ``human understanding''. This experience has reinforced our belief that at least part of our role as researchers in the future will indeed be centered around developing human understanding of AI-discovered proofs. We hope that our community will embrace this role alongside the other roles that we hope researchers will continue to play. In this work, we try to do our part by developing what we hope is a satisfying conceptual understanding of the security of unclonable encryption.
  }%
}
\end{center}

\newpage
\hypertarget{mytoc}{}
\setcounter{tocdepth}{2}
\tableofcontents

\newpage

\section{Introduction}
Unclonable encryption provides a security property that no classical encryption scheme can achieve: an adversary may arbitrarily process a ciphertext into two parts \emph{before} the secret key is revealed, yet, two non-communicating recipients should not be able to simultaneously recover the plaintext, \emph{even after learning the secret key}~\cite{Got03,BL20}. 

There are two natural formulations of security for such an encryption scheme. In the \emph{search} formulation, modeled in Figure~\ref{fig:gl-games} below by a game $G$, the challenger samples a key-message pair $(k,m)$ from a distribution $\pi$, where $m\in \{0,1\}^n$ (for some security parameter $n$); sends a ciphertext or ``token'' state $\rho_{k,m}$ to $\mathcal{A}$, who processes the state into two parts; the two recipients, upon learning $k$, must both guess the \emph{entire} message $m$. Security requires their joint winning probability in $G$ to be negligible in the security parameter. This guarantee, however, does not rule out substantial shared leakage about the plaintext. For example, both recipients might learn its first bit while remaining unable to reconstruct the remaining bits. Yet, that single bit could already reveal the answer to a sensitive yes/no question encoded in the encrypted message.

The \emph{decision} formulation, or unclonable \emph{indistinguishability}, provides a stronger notion of security. Here, a hidden challenge bit selects one of two equal-length messages, and both recipients must identify which of the two messages was encrypted. A coordinated random guess trivially succeeds with probability $1/2$, and security requires that every attack achieves at most a negligible advantage over this baseline. In this sense, unclonable indistinguishability captures the idea that, after the ciphertext is processed or ``split'', at least one of the two recipients learns essentially nothing about the encrypted message. Establishing this stronger guarantee from simultaneous-search security is a central search-to-decision problem in unclonable cryptography.

The BB84 monogamy-of-entanglement game of Tomamichel, Fehr, Kaniewski, and Wehner~\cite{TFKW13} provides a basic source of search security for unclonable encryption. In this game, a referee measures $n$ qubits in independently chosen computational or Hadamard bases, obtaining an outcome string $m\in \{0,1\}^n$. Two separated players subsequently learn the string of basis choices $\theta \in \{0,1\}^n$ and must both recover $m$. Tomamichel et al.~\cite{TFKW13} showed that the optimal joint success probability in this game is exactly $\cos^{2n}(\pi/8)$. Broadbent and Lord later adapted this game to construct unclonable encryption~\cite{BL20}: the ciphertext is the BB84 state $H^\theta\ket{m} = \bigotimes_i H^{\theta_i} \ket{m_i}$, and the secret key is the string of basis choices $\theta$.

A natural route to go from search security to decision security is suggested by the classical Goldreich--Levin theorem. Goldreich--Levin turns parity prediction into full recovery: an algorithm that predicts the (mod $2$) inner product $\langle r,m\rangle$ for a uniformly random mask $r\in \{0,1\}^n$ with noticeable advantage over $1/2$ can be transformed into one that recovers the hidden string $m$ with noticeable probability~\cite{GL89}. Thus, a random inner product, which can also be viewed as the parity of a random subset of the bits of $m$, is a ``hardcore bit'' of $m$: any noticeable ability to predict it can be leveraged to recover the entirety of $m$. It turns out that the Goldreich-Levin theorem is still valid, and is even more elegant, in the quantum world~\cite{AC02}: a single superposition query to the algorithm predicting the inner product suffices to recover the entirety of $m$!

At first sight, this is precisely the type of reduction needed to upgrade a search-secure unclonable encryption scheme to one satisfying the stronger \emph{indistinguishability} notion: given a search-secure scheme modeled by a game $G$, one can encrypt a bit $b$ by masking it with the inner product $\langle r,m\rangle$, i.e.\ using the latter as a one-time pad below). We denote by $\mathsf{GL}(G)$ the corresponding \emph{common-mask decision game} (see Figure~\ref{fig:gl-games} below): here, the challenger additionally samples a uniformly random mask $r$, and reveals $(k,r)$ to both parties; their goal is to both guess the bit $\langle r,m\rangle$. An attack on the one-bit encryption is equivalent to a strategy for $\mathsf{GL}(G)$. Thus, if one could apply Goldreich--Levin in this \emph{simultaneous} setting, such a strategy could then be converted into a strategy for the search game $G$ in which both recipients extract $m$, contradicting search security.

The subtlety is that the simultaneous setting requires a form of Goldreich--Levin that is not supplied by the standard reduction. In $\mathsf{GL}(G)$, after the split, the parties receive the original key $k$ as well as a \emph{common} random mask: the \emph{same} $r$ is revealed to both parties. 
The natural reduction in which Bob and Charlie each locally apply a standard quantum Goldreich-Levin reduction only works for \emph{independent} masks, i.e.\ when Bob and Charlie respectively receive some independently sampled $r$ and $s$, and are required to guess their respective inner products $\langle r, m\rangle$ and $\langle s, m\rangle$. This distinction is fundamental: earlier work on the feasibility of unclonable encryption already identified barriers showing that standard seeded-extractor and Goldreich--Levin compilers do not directly survive in the unclonable setting~\cite{AKLLZ22}.

\begin{figure}
\begin{minipage}[t]{0.48\textwidth}
\centering
\textbf{Search game $G$}\par\smallskip
\begin{tikzpicture}[
  >={Latex[length=2mm]},
  every node/.style={font=\small},
  box/.style={draw,rounded corners,align=center,minimum height=7mm},
  party/.style={box,minimum width=25mm}
]
  \node[box,minimum width=58mm,align=center] (state)
  {Challenger samples $(k,m) \leftarrow \pi$.\\
   Sends token state $\rho_{k,m}$ to $\mathcal A$.};

\node[box,below=5mm of state,minimum width=32mm] (alice)
  {$\mathcal A$ ``splits'' the token};

\draw[->] (state) -- (alice);

\node[party,below=11mm of alice,xshift=-18mm] (bob)
  {Bob\\receives $k$};
\node[party,below=11mm of alice,xshift=18mm] (charlie)
  {Charlie\\receives $k$};
\node[below=4mm of bob,align=center] (bout)
  {outputs\\$m_B$};
\node[below=4mm of charlie,align=center] (cout)
  {outputs\\$m_C$};

  \draw[->] (state) -- (alice);
  \draw[->] (alice) -- node[pos=0.62,left,font=\scriptsize] {$B$} (bob);
  \draw[->] (alice) -- node[pos=0.62,right,font=\scriptsize] {$C$} (charlie);
  \draw[->] (bob) -- (bout);
  \draw[->] (charlie) -- (cout);

  \node[below=38mm of alice,align=center]
    {$\mathrm{Win} \ \text{if and only if }  \, m_B=m_C=m$};
\end{tikzpicture}
\end{minipage}\hfill
\begin{minipage}[t]{0.48\textwidth}
\centering
\textbf{Common-mask GL game $\mathsf{GL}(G)$}\par\smallskip
\begin{tikzpicture}[
  >={Latex[length=2mm]},
  every node/.style={font=\small},
  box/.style={draw,rounded corners,align=center,minimum height=7mm},
  party/.style={box,minimum width=25mm}
]
   \node[box,minimum width=58mm,align=center] (state)
  {Challenger samples $(k,m) \leftarrow \pi$, and $r \leftarrow \{0,1\}^n$.\\
   Sends token state $\rho_{k,m}$ to $\mathcal A$.};

\node[box,below=5mm of state,minimum width=32mm] (alice)
  {$\mathcal A$ ``splits'' the token};

\draw[->] (state) -- (alice);
  \node[party,below=11mm of alice,xshift=-18mm] (bob)
    {Bob\\receives $(k,r)$};
  \node[party,below=11mm of alice,xshift=18mm] (charlie)
    {Charlie\\receives $(k,r)$};
  \node[below=4mm of bob,align=center] (bout)
    {outputs\\$b_B$};
  \node[below=4mm of charlie,align=center] (cout)
    {outputs\\$b_C$};

  \draw[->] (state) -- (alice);
  \draw[->] (alice) -- node[pos=0.62,left,font=\scriptsize] {$B$} (bob);
  \draw[->] (alice) -- node[pos=0.62,right,font=\scriptsize] {$C$} (charlie);
  \draw[->] (bob) -- (bout);
  \draw[->] (charlie) -- (cout);

  \node[below=38mm of alice,align=center]
    {$\mathrm{Win}\ \text{if and only if } \,    b_B=b_C=\langle r,m\rangle$};
\end{tikzpicture}
\end{minipage}
\caption{The search game $G$ (left) and its common-mask
Goldreich--Levin variant $\mathsf{GL}(G)$ (right).}
\label{fig:gl-games}
\end{figure}

\paragraph{Simultaneous Goldreich--Levin and the common-mask barrier.}
The single-party quantum Goldreich--Levin reduction was developed by
Adcock and Cleve~\cite{AC02}. Later, Coladangelo, Liu, Liu, and Zhandry
established a version allowing quantum auxiliary input, which is useful in other settings of unclonable cryptography~\cite{CLLZ21}. Kundu and Tan and,
independently, Ananth, Kaleoglu, and Liu extended this approach to the setting
of two non-communicating parties~\cite{KT25,AKL23}. Crucially, however,
these simultaneous reductions use \emph{independent} random masks $r$ and $s$
for the two parties. 

Extending these results to the \emph{common} mask setting has proved substantially more
elusive. Kundu and Tan explicitly observe that their techniques do not extend
to the case in which both recipients receive the same mask~\cite{KT25}.
Ananth and Behera formulate related identical-challenge simultaneous
inner-product statements as conjectures~\cite{AB24}, while Chevalier,
Hermouet, and Vu propose closely related conjectures in the context of
simultaneous compute-and-compare obfuscation~\cite{CHV24}. Most directly,
Ananth, Kaleoglu, and Yuen identify a common-mask simultaneous
Goldreich--Levin theorem as precisely the missing ingredient needed for their
classical-key Goldreich--Levin compiler~\cite{AKY25}. The importance of this distinction becomes concrete for the BB84
monogamy game. Coladangelo, Liu, and Xie~\cite{CLX26} studied two decision
variants of this game and showed that merely asking both players to predict a
fixed parity does not suffice: a fixed parity can be guessed with constant
bias for every $n$, so ordinary XOR repetition fails to amplify security. By
contrast, when a uniformly random mask $r$ is revealed identically to both
players, they proved exponentially small advantage for semi-classical
strategies and conjectured the same behavior for arbitrary quantum
strategies. Related ``no-telegraphing'' results establish security in similar models~\cite{CKNY25,CakanGoyal24}, but also do not
yield the fully quantum common-mask search-to-decision reduction needed here.

Thus, in this work, we focus on the following question:
\begin{center}
\emph{Does there exist a simultaneous Goldreich-Levin reduction \\for two entangled quantum parties in the common-mask setting?\\} 
\end{center}

\paragraph{Recent momentous progress on unclonable encryption.}
In the meantime, there has been rapid progress via other methods recently culminating in a proof of the unconditional existence of unclonable encryption satisfying the gold standard of indistinguishable-security for single-bit messages. 
First, Bhattacharyya and Culf obtained inverse-polynomial
security~\cite{BC26}. Then, in a breakthrough work, Bhattacharyya, Broadbent, and Culf obtained negligible
security, albeit with inefficient encryption and decryption~\cite{BBC26}. Two independent and concurrent AI-assisted breakthroughs of Ananth and
Sahai~\cite{AS26}, and Ragavan~\cite{Rag26} proved the existence of efficient
constructions from Pauli eigenstates with negligible security. Finally, Bhattacharyya, Broadbent, and Culf~\cite{bhattacharyya2026statistically} extended these results to unclonable encryption of arbitrary messages.

However, all of these results rely on direct, quite technical, analysis tailored to each of their proposed schemes, leaving open the elusive question of a general reduction from $\mathsf{GL}(G)$ to $G$. There are several appealing reasons for wanting such a reduction:
\begin{itemize}
\item The security of the simplest conjectured unclonable encryption scheme based on BB84 states~\cite{BL20, CLX26} is still open. A simultaneous Goldreich-Levin reduction would immediately prove its security.
\item Such a reduction would establish that, for unclonable encryption, search security can be generically upgraded to the stronger indistinguishability guarantee. This would be a clean, modular bridge from search to decision, avoiding the need for a separate analysis for each unclonable encryption construction. In particular, \emph{any} scheme with a search security game $G$ would automatically inherit security for the Goldreich--Levin compiled game $\mathsf{GL}(G)$ from a bound on the optimal value in $G$, which is often considerably simpler to establish directly. 
\item Finally, given the remarkable versatility of the classical Goldreich--Levin theorem, we expect that a common-mask simultaneous Goldreich--Levin reduction would find applications beyond unclonable encryption, as a useful tool for constructing and analyzing other unclonable cryptographic primitives. For example, such a reduction (if efficient) would also allow one to lift search security to decision security in the \emph{computational} setting, where a direct information-theoretic analysis, as in the recent sequence of papers~\cite{BBC26, AS26, Rag26}, may be hard to carry out.

\end{itemize}
\subsection{Our result}

We prove that a common-mask Goldreich--Levin reduction exists. Our
theorem is information-theoretic and is not tied to BB84 or any other specific construction. Let $G$ be any search unclonable-encryption game, and let $\mathsf{GL}(G)$ be its corresponding common-mask GL game (as in Figure~\ref{fig:gl-games}). We show the following.  
\begin{theorem}[Informal]\label{thm:intro-main}
Let $p_{\mathrm{Search}}$ and $p_{\mathrm{Pred}}$ be the optimal success probabilities in $G$ and $\mathsf{GL}(G)$, respectively, and define
\begin{equation*}
  \Delta:=2p_{\mathrm{Pred}}-1.
\end{equation*}
Then,
\begin{equation*}
  p_{\mathrm{Search}}
  \geq
  \Delta^4 \,.
\end{equation*}
\end{theorem}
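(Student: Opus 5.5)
The plan is to build the search strategy for $G$ directly from an optimal strategy for $\mathsf{GL}(G)$. Bob and Charlie each run the Adcock--Cleve single-query quantum Goldreich--Levin circuit on their own share, using \emph{independent} mask registers. The common-mask structure of $\mathsf{GL}(G)$ then enters only through a Fourier/Parseval identity, and a double Cauchy--Schwarz produces the fourth power.

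\textbf{Setup.} Fix a $\mathsf{GL}(G)$ strategy. Without loss of generality (by Naimark dilation), for each key $k$ and mask $r$, Bob's answer comes from a binary observable $B^k_r$ (a Hermitian unitary) on his register, and Charlie's from $C^k_r$. Let $\rho^{BC}_{k,m}$ be the post-split state. Then
\[
p_{\mathrm{Pred}}=\E_{k,m,r}\tfrac14\Tr\!\Big[\rho_{k,m}\big(I+\chi_r B_r\big)\otimes\big(I+\chi_r C_r\big)\Big],\qquad \chi_r:=(-1)^{\langle r,m\rangle}.
\]
Define the Fourier operators $\hat B(x)=\E_r(-1)^{\langle r,x\rangle}B_r$, and $\hat C(x)$ similarly. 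Expanding, $\Delta=2p_{\mathrm{Pred}}-1$ becomes an average of $\tfrac12\langle \hat B(m)\otimes I\rangle+\tfrac12\langle I\otimes\hat C(m)\rangle$ and the agreement term $\langle \E_r B_r\otimes C_r\rangle$, up to the constant $-\tfrac12$. The key identity for the last term is
\[
\E_r B_r\otimes C_r=\sum_x \hat B(x)\otimes \hat C(x),
\]
which rewrites the common-mask correlation as a \emph{diagonal} sum over Fourier modes.

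\textbf{Reduction.} In the search game, Bob prepares $\sum_r\ket r$ and applies $r\mapsto B_r$ controlled by the mask register. Because $B_r$ is a reflection, this is the phase-kickback $\pm$ with a purifying ancilla. He then Hadamards the mask register and measures it to obtain $m_B$; Charlie does the same independently. By Parseval, $\{\hat B(x)\}_x$ acts as a family of Kraus-type operators with $\sum_x \hat B(x)^\dagger \hat B(x)=I$, and likewise for $C$. Hence
\[
p_{\mathrm{Search}}\ \ge\ \E_{k,m}\big\|(\hat B(m)\otimes \hat C(m))\ket{\psi_{k,m}}\big\|^2\ \ge\ \E_{k,m}\big|\langle \hat B(m)\otimes\hat C(m)\rangle_{\psi_{k,m}}\big|^2 .
\]
It therefore suffices to show that $\E_{k,m}\langle \hat B(m)\otimes\hat C(m)\rangle \ge \Delta^2$, after which Jensen gives $\Delta^4$.

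\textbf{Main obstacle.} The hard step is exactly this inequality $\langle\hat B(m)\otimes\hat C(m)\rangle\gtrsim\Delta^2$. The agreement term involves all modes $x$, including the off-target modes $x\neq m$, where the two parties agree with each other but on the wrong parity. My first attempt is to change the observables. I would replace $B_r$ by the projector $P_r=\tfrac12(I+\chi_rB_r)$, which is "Bob correct", and $Q_r$ likewise for Charlie. Then $p_{\mathrm{Pred}}=\E_r\langle P_r\otimes Q_r\rangle$, and with $\hat P(x)=\E_r(-1)^{\langle r,x\rangle}P_r$ we get $\E_r\langle P_r\otimes Q_r\rangle=\sum_x\langle\hat P(x)\otimes\hat Q(x)\rangle$.

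The $x=0$ term is $\langle\hat P(0)\otimes\hat Q(0)\rangle$, where $\hat P(0)=\tfrac12(I+\hat B(m))$. For $x\neq0$, $\hat P(x)=\tfrac12\hat B(x+m)$. Bounding the off-diagonal sum by Cauchy--Schwarz together with Parseval ($\sum_x\|\hat P(x)\otimes I\,\psi\|^2\le\langle\E_rP_r\rangle$) should give
\[
p_{\mathrm{Pred}}-\tfrac12\cdot(\text{baseline})\ \lesssim\ \sqrt{\langle \hat B(m)^{\dagger}\hat B(m)\otimes\hat C(m)^\dagger\hat C(m)\rangle}.
\]
That is, $\Delta\le\|(\hat B(m)\otimes\hat C(m))\psi\|$, which is exactly the search amplitude; squaring yields $p_{\mathrm{Search}}\ge\Delta^2$, or at worst a square-root loss, giving $\Delta^4$. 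The delicate point is making the baseline $\tfrac12$ cancel exactly against the $I\otimes I$ contributions. If this direct accounting fails, my fallback is to first symmetrize, running the reduction conditioned on Bob's and Charlie's one-sided biases, and then apply Cauchy--Schwarz twice. That version loses the second square and is where I expect the exponent $4$ to come from.
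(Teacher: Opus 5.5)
\paragraph{The gap.} Your reduction is the plain independent-mask one: each party runs the Adcock--Cleve circuit on its own mask register. The inequality you flag as the main obstacle, $\E_{k,m}\langle \hat B(m)\otimes\hat C(m)\rangle\ge\Delta^2$ (and your target $\Delta\le\|(\hat B(m)\otimes\hat C(m))\ket{\psi}\|$), is false for general strategies. So no Cauchy--Schwarz/Parseval accounting can establish it.

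Appendix~\ref{app:cancellation} gives a two-qubit strategy with $p_{\mathrm{Pred}}=25/48$, i.e.\ $\Delta=1/24$. There $\hat B(m)=\tfrac12\proj{0}$, $\hat C(m)=\tfrac12\proj{-}$ and $(\bra{0}\bra{-})\ket{\psi}=0$, so the joint correct-output branch of your circuit is identically zero. The one-sided biases are genuine: $\langle\hat B(m)\rangle=\langle\hat C(m)\rangle=1/6$ and $\langle\E_r B_r\otimes C_r\rangle=3/4$. But the vectors $(\hat B(m)\otimes\Id)\ket{\psi}$ and $(\Id\otimes\hat C(m))\ket{\psi}$ are orthogonal.

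Your identity $\E_r B_r\otimes C_r=\sum_x \hat B(x)\otimes\hat C(x)$ is correct. However, the common-mask hypothesis controls only this diagonal sum and the marginals. In contrast, $\hat B(m)\otimes\hat C(m)=\E_{r,s}(-1)^{\langle r,m\rangle+\langle s,m\rangle}B_r\otimes C_s$ involves all pairs $(r,s)$. Your argument is per-strategy and never uses optimality, so it would have to hold on this example. (The theorem itself is not contradicted there, since $k=m$.) The fallback of ``conditioning on one-sided biases and applying Cauchy--Schwarz twice'' is unspecified and adds no ingredient that would fix this.

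\paragraph{The missing idea.} Bob's extraction must be coupled to the agreement operator $\Jop_k=\E_t V^B_{k,t}\otimes V^C_{k,t}$.

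\begin{itemize}
\item \textbf{The reduction.} In the paper, Bob first applies $W^B_{k,T}=V^B_{k,t_L}\cdots V^B_{k,t_1}$, with uniform $t_i$ and $L$ geometric with ratio $\lambda_\gamma=1/(1+2\gamma)$, and then runs GL. Charlie is unchanged.
\item \textbf{The filtered overlap.} In the analysis one multiplies Charlie's branch by the matching unitary $W^C_{k,T}$, which preserves the norm. Cauchy--Schwarz and Jensen then give $p_{\mathrm{Search}}\ge|\Tr(\varrho\Bop F_\gamma\Cop)|^2$, where $F_\gamma=\E_T[W^B_{k,T}\otimes W^C_{k,T}]=\gamma(D+\gamma\Id)^{-1}$ and $D=\tfrac12(\Id-\Jop)$.
\item \textbf{Splitting the real part.} With $S=\tfrac12(\Bop+\Cop)$ and $R=\tfrac12(\Bop-\Cop)$, one has $\RePart\Tr(\varrho\Bop F_\gamma\Cop)=\Tr(\varrho SF_\gamma S)-\Tr(\varrho RF_\gamma R)$.
\item \textbf{The $S$ term.} Using Cauchy--Schwarz against $F_\gamma^{-1}=\Id+D/\gamma$, together with $\Tr(\varrho S)=\Delta+p_{\neq}$ and $\Tr(\varrho D)=p_{\neq}$, the first term is at least $\gamma(\Delta+p_{\neq})^2/(\gamma+p_{\neq})\ge\gamma(\Delta+p_{\neq})$, since $\gamma\le\Delta$.
\item \textbf{The $R$ term.} Since $D\pm R\succeq 0$, a Schur complement gives $RF_\gamma R\preceq\gamma D$, so the second term is at most $\gamma p_{\neq}$.
\end{itemize}

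Hence the filtered overlap is at least $\gamma\Delta$. The fourth power comes from taking $\gamma=\Delta$ in $(\gamma\Delta)^2$, not from a second Cauchy--Schwarz on the unfiltered overlap, which can be $0$.
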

More concretely, there is an explicit reduction that transforms any strategy $\mathcal P$ for $\mathsf{GL}(G)$, winning with probability $\frac12 + \frac{\Delta}{2}$ into a strategy $\mathcal P'$ for $G$ winning with probability at least $\Delta^4$. Thus, as long as $\Delta$ is non-negligible, the reduction yields a non-negligible success probability in $G$.

While the complexity of the reduction is not important when adversaries are unbounded, we note that our reduction is efficient in the following sense: if a lower bound on $0<\gamma \leq \Delta$ is known, the reduction is uniform and has running time polynomial in the running time of $\mathcal{P}$, and $1/\gamma$. When a lower bound on $\Delta$ is not known, we show that there is still a uniform reduction with \emph{expected} running time linear in the running time of $\mathcal{P}$, but at the cost of a polynomial loss in the search success guarantee, which becomes $p_{\mathrm{Search}}\geq \Omega(\Delta^6)$.

We note that our reduction is rooted in some ideas already present in the security analysis of Ananth and Sahai~\cite{AS26}. In particular, one key idea of considering a certain \emph{filter} to reduce Bob and Charlie's ``disagreement'' appears in the security analysis of Ananth and Sahai, though it does not have any operational meaning there. The main new insight is that such a filter can be actually implemented to achieve a Goldreich-Levin reduction that applies to any unclonable encryption game.

While in this paper we focus on unclonable-encryption games as described above, we note that our reduction from Theorem~\ref{thm:intro-main} applies also to \emph{monogamy games}, essentially unchanged. We discuss this briefly in Remark~\ref{rem:1}.

Our Theorem~\ref{thm:intro-main} immediately yields two new constructions of unclonable encryption based on BB84 states~\cite{TFKW13} and coset states~\cite{CLLZ21,CV22}.
\begin{corollary}[Unclonable encryption from BB84 states]
\label{cor:intro-bb84-ue}
Consider the following one-bit encryption scheme. To encrypt a bit $b$, sample
$\theta,r,m\gets \{0,1\}^n$, use $(\theta,r)$ as the secret key, and output
the ciphertext
\[
\left(
H^\theta\ket m,
b\oplus \langle r, m \rangle
\right).
\]
Decryption measures the quantum register in the basis specified by $\theta$
to recover $m$ and then removes the $\langle r, m\rangle$ one-time pad.
This scheme satisfies unclonable
indistinguishability.  More precisely, the
optimal winning probability in $\mathsf{GL}(G)$ (where $G$ is the corresponding search game) is at most
\begin{equation*}
  \frac12+\frac12\cos^{n/2}\!\left(\frac{\pi}{8}\right).
\end{equation*}
\end{corollary}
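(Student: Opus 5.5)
The plan is to derive Corollary~\ref{cor:intro-bb84-ue} directly from Theorem~\ref{thm:intro-main}, combined with the Tomamichel--Fehr--Kaniewski--Wehner bound on the BB84 monogamy game. First, we identify the search game $G$ underlying the scheme. The key is $k=\theta$ and the message is $m$, both uniform, and the token is $\rho_{k,m}=H^\theta\proj{m}H^\theta$. Splitting the token and then having Bob and Charlie each output $m$ after learning $\theta$ is exactly the unclonable-encryption form of the BB84 monogamy game. We would justify this by the standard equivalence between the prepare-and-measure and entanglement-based versions. Concretely, the referee holding half of $n$ EPR pairs and measuring in basis $\theta$ produces a uniformly random $m$ together with the post-measurement state $H^\theta\ket m$ on the adversary's side. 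Any splitting channel applied by $\mathcal A$ to the token therefore becomes a strategy for the monogamy game with the same winning probability. By~\cite{TFKW13}, this gives $p_{\mathrm{Search}}\le \cos^{2n}(\pi/8)$.

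Second, we reduce an indistinguishability attack on the scheme to a strategy for $\mathsf{GL}(G)$. In the unclonable indistinguishability game, a bit $b$ is uniform, the ciphertext is $(H^\theta\ket m, c)$ with $c=b\oplus\langle r,m\rangle$, and after splitting both parties receive $(\theta,r)$ and must output $b$. Since $m$ is uniform and independent of $b$, and $\langle r,m\rangle$ already masks $b$, we would first check that $c$ is a uniform bit independent of the quantum register and of $(\theta,r)$. This independence would be verified conditioned on $r\neq 0$. The event $r=0$ would either be handled by a negligible additive term or, more cleanly, by having a $\mathsf{GL}(G)$ player sample $c$ itself.

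Given this, a $\mathsf{GL}(G)$ strategy is built as follows. The splitter samples a uniform bit $c$ and runs $\mathcal A$ on $(H^\theta\ket m, c)$, forwarding $c$ to both Bob and Charlie. Each of them runs the attack's receivers on $(\theta,r)$ and outputs $c\oplus b'$, where $b'$ is that receiver's guess for $b$. Writing $b=c\oplus\langle r,m\rangle$, the event that both guess $b$ correctly coincides with both guessing $\langle r,m\rangle$ correctly. It also has the same probability, because the joint distribution of $(\theta,m,r,c)$ matches the real game with $b:=c\oplus\langle r,m\rangle$. Hence the attack's success probability is at most $p_{\mathrm{Pred}}$.

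Finally, we combine the two steps. Theorem~\ref{thm:intro-main} gives $\Delta^4\le p_{\mathrm{Search}}\le\cos^{2n}(\pi/8)$, so $\Delta\le\cos^{n/2}(\pi/8)$ and therefore $p_{\mathrm{Pred}}=\frac12+\frac\Delta2\le\frac12+\frac12\cos^{n/2}(\pi/8)$. This bound is exponentially small in $n$ above $1/2$, which is the required unclonable indistinguishability.

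The main obstacle I expect is the bookkeeping in the second step: confirming that the formal game $\mathsf{GL}(G)$ matches the encryption game exactly, including that $r$ is part of the key and that $c$ carries no information. Everything else is a direct composition of results already in hand.
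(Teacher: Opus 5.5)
Your proposal is correct and follows the paper's argument: combine the quartic bound $p_{\mathrm{Search}}\ge\Delta^4$ with the TFKW bound $\cos^{2n}(\pi/8)$ on the BB84 search game. Your extra bookkeeping (the prepare-and-measure equivalence, and having the $\mathsf{GL}(G)$ splitter sample $c$ itself) fills in steps the paper treats as immediate. The one superfluous point is the $r\neq 0$ caveat: because $b$ is uniform and independent of $(\theta,m,r)$, the bit $c=b\oplus\langle r,m\rangle$ is exactly uniform and independent of $(\theta,m,r)$ for every $r$, so your simulation is exact with no additive loss.
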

The latter follows immediately from Theorem~\ref{thm:intro-main} along with security of the corresponding search game from \cite{TFKW13}, which says that the optimal winning probability in the search game is $\cos^{2n}\left(\frac{\pi}{8}\right)$. This also resolves a conjecture posed in \cite{CLX26} (there, the conjecture is posed for monogamy games, which our reduction also applies to).

\begin{corollary}[Unclonable encryption from subspace coset states]
\label{cor:intro-coset-ue}
Let $n$ be even.  Sample a uniformly random $n/2$-dimensional subspace
$A\subseteq\F_2^n$, a uniform $r\in\{0,1\}^n$, and uniform
$s,s'\in\F_2^n$.  For each $A$, fix an efficiently computable bijection
\[
  m_A:
  (\F_2^n/A)\times(\F_2^n/A^\perp)
  \longrightarrow \{0,1\}^n,
\]
where $\F_2^n/A$ and $\F_2^n/A^\perp$ are the spaces of cosets of $A$ and $A^\perp$ respectively, and abbreviate $m_A(s,s'):=m_A(s+A,s'+A^\perp)$.  Encrypt a bit $b$ using
secret key $(A,r)$ as
\[
  \left(
    \ket{A_{s,s'}},
    b\oplus\langle r,m_A(s,s')\rangle
  \right),
  \qquad
  \ket{A_{s,s'}}:=X^sZ^{s'}\ket A,
  \qquad
  \ket A:=\frac1{\sqrt{|A|}}\sum_{a\in A}\ket a.
\]
The scheme satisfies unclonable indistinguishability. 
The optimal winning probability in $\mathsf{GL}(G)$ (where $G$ is the corresponding search game) is at most
\begin{equation*}
  \frac12
  +
  \frac{e^{1/8}}2
  \cos^{n/4}\!\left(\frac{\pi}{8}\right).
\end{equation*}
Moreover, assuming post-quantum indistinguishability obfuscation and one-way
functions, the ciphertext may additionally contain obfuscated membership
programs for $A+s$ and $A^\perp+s'$ while retaining computational unclonable
indistinguishability against quantum polynomial-time adversaries. 
\end{corollary}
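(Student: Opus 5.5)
The bound should follow by combining Theorem~\ref{thm:intro-main} with a search-security bound for coset states. The key input is the coset monogamy-of-entanglement bound of Culf and Vidick~\cite{CV22}. Write $G$ for the search game underlying the scheme: the challenger samples $(A,s,s')$, sends $\ket{A_{s,s'}}$, and Bob and Charlie, each given $A$, must both output $m:=m_A(s,s')$. I would proceed in three steps.

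\emph{Step 1 (reduction to $\mathsf{GL}(G)$).} I first check that an attack on unclonable indistinguishability of the one-bit scheme is exactly a strategy for $\mathsf{GL}(G)$, as in the BB84 case. The classical part $b\oplus\langle r,m\rangle$ of the ciphertext is a uniformly random bit independent of everything else, whatever $b$ is, since $r$ is uniform and secret at splitting time. The splitter can therefore be taken to sample this bit $c$ itself. Recipients who learn the key $(A,r)$ and guess $b$ are then equivalent to recipients guessing $\langle r,m\rangle = b\oplus c$. The distinguishing success is thus at most the optimal value $p_{\mathrm{Pred}}$ of $\mathsf{GL}(G)$, with the key $k=A$ and $m=m_A(s,s')$ sampled jointly as the distribution $\pi$. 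Since $m_A$ is a bijection, $m$ is well defined from the cosets, and the pair $(k,m)$ fits the abstract framework of Theorem~\ref{thm:intro-main}.

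\emph{Step 2 (search bound).} Next I bound $p_{\mathrm{Search}}$ for $G$. Winning $G$ requires both players to output $m$, hence both cosets $s+A$ and $s'+A^\perp$. This implies winning the (weaker) coset monogamy game of~\cite{CLLZ21,CV22}, where Bob need only output a vector in $A+s$ and Charlie a vector in $A^\perp+s'$. Indeed, from $m$ and $A$ each player can efficiently compute a coset representative via $m_A^{-1}$. Culf and Vidick~\cite{CV22} show that the value of that game is at most $\sqrt{e}\,\cos^{n}(\pi/8)$, so $p_{\mathrm{Search}}\le e^{1/2}\cos^n(\pi/8)$. Theorem~\ref{thm:intro-main} then gives
\[
\Delta\le p_{\mathrm{Search}}^{1/4}\le e^{1/8}\cos^{n/4}(\pi/8),
\]
and so $p_{\mathrm{Pred}}=\tfrac12+\tfrac\Delta2\le \tfrac12+\tfrac{e^{1/8}}{2}\cos^{n/4}(\pi/8)$. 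The only care needed here is matching the normalizations: the subspace has dimension $n/2$ in $\F_2^n$, and the exponent $n$ in the Culf--Vidick bound must be stated with the same convention.

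\emph{Step 3 (computational statement).} This is the main obstacle, since it requires the efficient version of the reduction. Suppose a QPT adversary has advantage $\Delta(\lambda)\ge 1/q(\lambda)$ for infinitely many $\lambda$, against the scheme with obfuscated membership programs included in the ciphertext. I would apply the uniform reduction with known lower bound $\gamma=1/q$. It runs in time polynomial in the adversary's running time and $q$, and converts the adversary into a QPT search strategy with success at least $\gamma^4$. The obfuscated programs are simply carried along as auxiliary classical input to the splitter, so the reduction is unaffected. This search strategy outputs both cosets, and in particular wins the computational coset monogamy game of~\cite{CLLZ21} (with $\mathsf{iO}$-obfuscated membership oracles) with non-negligible probability. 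That contradicts its security under post-quantum $\mathsf{iO}$ and one-way functions. The point to verify is that the reduction uses the adversary only as a black box on its split registers, so that efficiency transfers. Alternatively, the expected-linear-time variant with the $\Omega(\Delta^6)$ guarantee suffices, after truncating its running time by Markov's inequality.
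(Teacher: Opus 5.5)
Your proposal is correct and follows the paper's route. The paper plugs the Culf--Vidick search bound $p_{\mathrm{Search}}\le\sqrt{e}\cos^n(\pi/8)$ into $p_{\mathrm{Search}}\ge\Delta^4$ to get $\Delta\le e^{1/8}\cos^{n/4}(\pi/8)$. Your Steps 1 and 3 spell out what the paper leaves implicit: the reduction of the indistinguishability game to $\mathsf{GL}(G)$, and the use of the efficient reduction (the truncated, worst-case polynomial-time variant) together with computational coset monogamy.

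There is one small slip in Step 1. The masked bit $c=b\oplus\langle r,m\rangle$ is uniform and independent of the splitter's view because the challenge bit is uniform, not ``whatever $b$ is''. For a fixed $b$ this fails: $c=b$ when $m=0$, and the fixed-$b$ experiment is $\mathsf{GL}(G)$ conditioned on $c\oplus\langle r,m\rangle=b$. Averaging over the uniform challenge bit, however, gives exactly the $\mathsf{GL}(G)$ success probability, so your conclusion stands.
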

The latter follows from \cite{CV22}, which shows that the optimal winning probability in the corresponding search game is $\sqrt{e}\cos^{n}\!\left(\frac{\pi}{8}\right)$.

We remark that the upper bounds in
Corollaries~\ref{cor:intro-bb84-ue}
and~\ref{cor:intro-coset-ue} are most likely not tight, as they are obtained by combining our general search-to-decision reduction, which pays a quartic loss, with the corresponding search bounds. In particular, numerical evidence strongly suggests that the tight upper bound for unclonable encryption from BB84 states is actually $\frac12+\frac12\cos^{2n}\!\left(\frac{\pi}{8}\right)$.

\section{Technical overview}
\label{sec:tech-overview}
\paragraph{The goal and notation.}
Fix a search game $G$ as in Figure~\ref{fig:gl-games} (a more detailed description is also given in Section~\ref{sec:model}). We use the same notation: $k$ is the key, $m\in\{0,1\}^n$ is the hidden string, $\rho^A_{k,m}$ is the token state, which is split by $\mathcal A$ into $\ket{\psi_{k,m}}^{BC}$ (here the superscripts indicate registers, where $B$ and $C$ can be of arbitrary dimension). We take the state after the split to be pure without loss of generality since $\mathcal{A}$ can always supply any purifying register to one of the two parties (and this can only increase their success probability). Recall that $\mathcal{A}$ does not have the key $k$, which is only revealed to Bob and Charlie, who cannot communicate. Their goal in $G$ is to guess $m$. In $\mathsf{GL}(G)$, Bob and Charlie additionally receive the same mask
$r\in\{0,1\}^n$, and their goal is to guess the bit $\langle r,m\rangle
  :=
  r_1m_1\oplus\cdots\oplus r_nm_n$.

Fix a strategy $\mathcal P$ for $\mathsf{GL}(G)$, and let $p_{\mathrm{Pred}}$ be its success probability. Set
\[
  \Delta:=2p_{\mathrm{Pred}}-1.
\]
Our reduction keeps the splitting procedure $\mathcal A$ unchanged and constructs
a strategy for $G$ where the probability of jointly recovering $m$ is at least $\Delta^4$.

\paragraph{Warmup: The single-party quantum Goldreich--Levin.}
We start by recalling the single-party quantum Goldreich--Levin reduction from Adcock and Cleve~\cite{AC02}. For a key $k$ and a mask $r$, let Bob's local strategy in $\mathsf{GL}(G)$ be described by projectors $\Pi^B_{k,r,0}$ and $\Pi^B_{k,r,1}$. We will refer to this as Bob's \emph{decoder} for simplicity. Equivalently, Bob's decoder in $\mathsf{GL}(G)$ consists of measuring the binary observable
\[
  V^B_{k,r}:=\Pi^B_{k,r,0}-\Pi^B_{k,r,1}.
\]
Now, notice that $V^B_{k,r}$ is nothing other than a \emph{reflection} (and hence a unitary). So, we can use it to define the following controlled-unitary. Let $R$ be an $n$-qubit auxiliary register. Define the ``mask-controlled'' reflection
\[
  \CV^B_k
  :=
  \sum_{r\in\{0,1\}^n}
  \bigl(\ket r\!\bra r\bigr)^R\otimes V^B_{k,r}\,,
\]
which acts coherently as
$\sum_r\alpha_r\ket r^R\ket\phi^B
  \ \longmapsto\
  \sum_r\alpha_r\ket r^R V^B_{k,r}\ket\phi^B$.
  
The GL reduction circuit of \cite{AC02} places $R$ in a uniform superposition, applies this
controlled reflection, and ``recombines'' the $R$ branches with a second
Hadamard transform:
\[
  \GL^B_k
  :=
  (H_R^{\otimes n}\otimes\Id_B)\,
  \CV^B_k\,
  (H_R^{\otimes n}\otimes\Id_B).
\]
For an arbitrary state $\ket\phi^B$, expanding these three operations gives
\begin{equation}
  \GL^B_k\bigl(\ket{0^n}^R\ket\phi^B\bigr)
  =
  \frac1{2^n}\sum_{y,r}
  (-1)^{\langle r,y\rangle}
  \ket y^R V^B_{k,r}\ket\phi^B.
  \label{eq:overview-basic-GL}
\end{equation}
We define $V^C_{k,r}$ and $\GL^C_k$ analogously for Charlie, introducing a separate auxiliary register $S$ (though we will not need Charlie just yet in the single-party setting). We denote the operator-valued Fourier coefficients of the two reflection families by
\begin{equation*}
  \widehat V^B_k(y)
  :=
  \frac1{2^n}\sum_r
  (-1)^{\langle r,y\rangle}V^B_{k,r},
  \qquad
  \widehat V^C_k(y)
  :=
  \frac1{2^n}\sum_s
  (-1)^{\langle s,y\rangle}V^C_{k,s}.
\end{equation*}
Then, Equation~\eqref{eq:overview-basic-GL} can be written more compactly as
\[
  \GL^B_k\bigl(\ket{0^n}^R\ket\phi^B\bigr)
  =
  \sum_y\ket y^R\widehat V^B_k(y)\ket\phi^B,
\]
and similarly for Charlie. Note then that the probability that Bob's GL ``extractor'' outputs $m$ is simply $\| \widehat V^B_k(m)\ket\phi^B  \|^2$.

Why does the reduction work in the single-party setting? The controlled reflection \\$\sum_{r\in\{0,1\}^n}
  \bigl(\ket r\!\bra r\bigr)^R\otimes V^B_{k,r}$ coherently converts Bob's guess for $\langle r,m\rangle$ into a phase on the branch indexed by $r$. If the decoder is usually correct, this phase is usually equal to $(-1)^{\langle r,m\rangle}$. The final Hadamard transform then cancels this phase precisely on the outcome $m$, causing the corresponding branches to interfere constructively. Equivalently, the correlation with the hidden inner product translates into weight on the branch $\widehat{V}^B_k(m)\ket\phi^B$ corresponding to the Fourier coefficient at $m$. This is the main idea behind the single-party quantum Goldreich--Levin reduction, and formalizing it is not difficult (we refer the reader, for example, to Appendix B.3 in \cite{CLLZ21} for the details).

\paragraph{Independent-mask simultaneous GL.}
We now consider the \emph{independent}-mask analogue of $\mathsf{GL}(G)$, in which Bob and Charlie receive independent uniformly random masks $r$ and $s$ respectively. In this setting, the simultaneous GL reduction is the most natural quite (and fairly straightforward to analyze): Bob and Charlie each locally apply their respective GL extractor, as described above.

Recall that we denote by $\ket{\psi_{k,m}}^{BC}$ the state of Bob and Charlie after the split. In the reduction, they apply $\GL^B_k$ and $\GL^C_k$ to
their respective registers $RB$ and $SC$. Then, Equation~\eqref{eq:overview-basic-GL}, applied jointly to registers $RB$ and $SC$ implies that the unnormalized leftover state on $BC$, conditioned on both parties outputting $m$ (i.e.\ both registers $R$ and $S$ containing $m$) is
\begin{equation}
  \begin{aligned}
  \ket{\eta^{\mathrm{ind}}_{k,m}}
  &:=
  \bigl(
    \widehat V^B_k(m)\otimes
    \widehat V^C_k(m)
  \bigr)\ket{\psi_{k,m}}
  \\
  &=
  \E_{\substack{r,s\gets\{0,1\}^n}}\!\left[
    (-1)^{\langle r,m\rangle}(-1)^{\langle s,m\rangle}
    \bigl(V^B_{k,r}\otimes V^C_{k,s}\bigr)
  \right]
  \ket{\psi_{k,m}}.
  \end{aligned}
  \label{eq:overview-independent-branch}
\end{equation}
Thus, the probability of simultaneous correct recovery of $m$ is precisely $\| \eta^{\mathrm{ind}}_{k,m}\|^2$.

It is useful now to separate the two individual correct-output
branches. Define 
\[
  \ket{b_{k,m}}:=\widehat V^B_k(m)\otimes\Id_C\ket{\psi_{k,m}},
  \qquad
  \ket{c_{k,m}}:=\Id_B\otimes\widehat V^C_k(m)\ket{\psi_{k,m}}.
\]
The vector $\ket{b_{k,m}}$ is the unnormalized leftover state when Bob alone
runs his GL circuit and obtains $m$; $\ket{c_{k,m}}$ has the analogous meaning
when Charlie alone runs his circuit.  Thus, $\|\ket{b_{k,m}}\|^2$ and
$\|\ket{c_{k,m}}\|^2$ describe the two individual correct-output probabilities.
Simultaneous recovery, instead, is captured by the branch $\ket{\eta^{\mathrm{ind}}_{k,m}} = \bigl(
    \widehat V^B_k(m)\otimes
    \widehat V^C_k(m)
  \bigr)\ket{\psi_{k,m}}$. Now, applying Cauchy--Schwarz gives
\begin{equation}
\label{eq:100}
  \bigl\|\eta^{\mathrm{ind}}_{k,m}\bigr\|^2
  \geq
  \left|
    \braket{\psi_{k,m}}{\eta^{\mathrm{ind}}_{k,m}}
  \right|^2
  =
  \left|\braket{b_{k,m}}{c_{k,m}}\right|^2.
\end{equation}
Thus, we can lower bound the joint success probability by the \emph{overlap between the individual correct-output branches} of Bob and Charlie. Crucially, large norms of these two vectors alone do not guarantee
a large joint success probability: the two vectors must also \emph{align} well. In the independent-mask setting, however, it turns out that successful joint prediction, on average over the independent challenges $r$ and $s$, guarantees precisely the required alignment. Indeed, expanding the overlap and the Fourier expressions gives
\[
\braket{b_{k,m}}{c_{k,m}}
=
\E_{r,s\gets\{0,1\}^n}
\left[
(-1)^{\langle r,m\rangle+\langle s,m\rangle}
\bra{\psi_{k,m}}
\bigl(V^B_{k,r}\otimes V^C_{k,s}\bigr)
\ket{\psi_{k,m}}
\right].
\]
Crucially, the independent-mask experiment samples exactly the same pairs $(r,s)$ that occur in the latter overlap! This is what allows us to bound the RHS in terms of the probability that Bob and Charlie are simultaneously correct. To see this, note that, for fixed $(k,m,r,s)$, the expression inside the expectation is the ``correlation'' between Bob's and Charlie's prediction outcomes after each outcome is multiplied by the sign corresponding to the correct parity. This correlation is $+1$ when Bob and Charlie are either both correct or both incorrect, and $-1$ when exactly one of them is correct. Letting $p_{kmrs}$ denote the probability that both players are correct, the latter correlation is therefore at least $p_{kmrs} - (1-p_{kmrs}) = 2 p_{kmrs} -1$.

Thus, letting $p_{\mathrm{Pred}}^{\mathrm{ind}}\geq1/2$ denote the probability that both players are correct, averaged over $k,m,r,s$, gives
\begin{equation}
  \E_{(k,m)\leftarrow\pi}
  \braket{b_{k,m}}{c_{k,m}}
  \geq
  2p_{\mathrm{Pred}}^{\mathrm{ind}}-1.
  \label{eq:overview-independent-correlation}
\end{equation} 
Combining Equation~\eqref{eq:overview-independent-correlation} with \eqref{eq:100} and Jensen's inequality gives the desired lower bound on the probability of successful extraction, $p_{\mathrm{Search}}$, and thus on the optimal success probability in $G$:
\[
  p_{\mathrm{Search}}
  \geq
  \bigl(2p_{\mathrm{Pred}}^{\mathrm{ind}}-1\bigr)^2 = \Delta^2 \,.
\]
The full calculation
appears in the proof of Proposition~\ref{prop:independent-mask-reduction}. 

\paragraph{The common-mask barrier.}
We are now ready to move to $\mathsf{GL}(G)$, the game that is relevant to unclonable encryption. Here, the crucial difference from earlier is that the \emph{same} mask $r$ is given to both parties.  Using the same notation as before, the winning probability in $\mathsf{GL}(G)$, i.e.\ the probability that both Bob and Charlie predict correctly, is
\begin{equation}
  p_{\mathrm{Pred}}
  =
  \E_{\substack{(k,m)\leftarrow\pi\\
                 r\gets\{0,1\}^n}}\!\left[
  \bra{\psi_{k,m}}
  \left(
    \frac{\Id_B+(-1)^{\langle r,m\rangle}V^B_{k,r}}2
    \otimes
    \frac{\Id_C+(-1)^{\langle r,m\rangle}V^C_{k,r}}2
  \right)
  \ket{\psi_{k,m}}
  \right].
  \label{eq:overview-common-prediction}
\end{equation}
Crucially, every term in this average involves a matching pair $(r,r)$. By contrast, if the parties simply run their two local GL circuits, the desired branch where both parties are correct corresponds instead to Equation~\eqref{eq:overview-independent-branch}, which averages over
every pair $(r,s)$. Thus, $p_{\mathrm{Pred}}$ gives no direct lower bound on the averaged overlap $\E_{(k,m)\leftarrow\pi}
  \braket{b_{k,m}}{c_{k,m}}$ appearing on the LHS of \eqref{eq:overview-independent-correlation}. 
Indeed, Appendix~\ref{app:cancellation} gives an example in which the averaged
overlap is zero even though $p_{\mathrm{Pred}} > 1/2$. 

A perhaps natural attempt to fix the reduction is to put the two coherent mask registers in the entangled state
\[
  \frac1{2^{n/2}}\sum_r\ket r^R\ket r^S.
\]
This forces Bob's and Charlie's GL extractor calls to use the same internal mask.
The problem is that now the final Hadamard transforms can only ``see'' the \emph{difference} between the two output strings: indeed, for an output pair
$(y,z)$, the two Fourier phases multiply to
\[
  (-1)^{\langle r,y\rangle}(-1)^{\langle r,z\rangle}
  =
  (-1)^{\langle r,y\oplus z\rangle}.
\]
Thus, for the desired pair $(y,z)=(m,m)$, the phase is $1$ for every $r$, but the same is true for every equal pair $(y,y)$, so no Fourier phase remains that distinguishes $(m,m)$ from the other equal pairs.  

\paragraph{The reduction from $\mathsf{GL}(G)$ to $G$.}
We now present the new reduction, taking a strategy for $\mathsf{GL}(G)$ to a strategy for $G$. We will then spend the rest of the technical overview explaining why the reduction works. Fix a known parameter
$0<\gamma\leq\Delta$ (i.e.\ some known lower bound on $\Delta$), and set
$\lambda_\gamma:=\frac1{1+2\gamma}$.
The reduction is as follows.

\begin{minorprotocol}
[Common-mask GL reduction]
\label{alg:gl-reduction-tech}

\smallskip
\begin{enumerate}[label=\arabic*.,leftmargin=2em,itemsep=0.6em]
  \item Keep the splitting procedure $\mathcal A$ unchanged.  
  \item Bob samples a string $L$ according to:
  \[
    \Pr[L=\ell]
    =
    (1-\lambda_\gamma)\lambda_\gamma^\ell,
    \qquad \ell=0,1,2,\ldots \,.
  \]
  Then, he samples a list of $L$ independent and uniformly random masks $T=(t_1,\ldots,t_L)$. 
  
  After learning $k$, Bob applies
  \[
    W^B_{k,T}
    :=
    \begin{cases}
      \Id_B, & L=0,\\
      V^B_{k,t_L}\cdots V^B_{k,t_1}, & L\geq 1.
    \end{cases}
  \]  He then applies the single-party Goldreich-Levin extractor $\GL^B_k$ using a fresh register
  $R$ initialized to $\ket{0^n}$, measures $R$, and outputs the resulting
  string $m_B$.
  \item Charlie just applies $\GL^C_k$ to a fresh register $S$ initialized
  to $\ket{0^n}$, measures $S$, and outputs the resulting string $m_C$.  He
  does not use $T$ or apply any other operation.
\end{enumerate}
\end{minorprotocol}

\paragraph{The main idea: a polynomial ``filter'' to align Bob and Charlie's individual correct-output branches.} \emph{What is this  mysterious reduction achieving?} The only difference from the reduction in the independent-mask setting is that Bob now applies the random sequence of unitaries $$W^B_{k,T}=V^B_{k,t_L}\cdots V^B_{k,t_1}$$ 
before running his Goldreich-Levin extractor. If we follow an analogous analysis as in the independent-mask setting for the new reduction, we see that the probability of successful extraction becomes
$$p_{\mathrm{search}} = \mathbb{E}_T\left\|
    (\widehat V^B_k(m) W^B_{k,T})\otimes \widehat V^C_k(m)\ket{\psi_{k,m}}
  \right\|^2\,.$$
(the only change is the insertion of $W^B_{k,T}$ and the averaging over $T$). Now, there is one degree of freedom that turns out to be crucial in our analysis. Observe that we can rewrite the latter as
$$\mathbb{E}_T\left\|
    (\widehat V^B_k(m) W^B_{k,T})\otimes (U_T^C \widehat V^C_k(m))\ket{\psi_{k,m}}
  \right\|^2\,,$$
for any arbitrary family of unitaries $U_T^C$ acting only on register $C$ (since these do not affect the norm). From here, as in the previous analysis, we can lower-bound the expression, again via Cauchy-Schwarz, by
$$\mathbb{E}_T \left|\braket{b_{k,m}}{W^B_{k,T} \otimes U^C_T \,|c_{k,m}}\right|^2\,.$$
where $\ket{b_{k,m}}$ and $\ket{c_{k,m}}$ are as before. Comparing this expression to the plain overlap $\left|\braket{b_{k,m}}{c_{k,m}}\right|^2$, the new overlap is \emph{filtered} by $W^B_{k,T} \otimes U^C_T$. The nice thing is that we now get to pick any $U_T^C$ of our choice to help us relate the expression to the success probability in the common-mask game. It turns out, as we will attempt to justify shortly, that one good choice of $U_T^C$ is nothing other than $U_T^C = W^C_{k,T}$! 

We can now further lower bound the last expression by Jensen's inequality, and reduce the problem to analyzing
$$\bra{b_{k,m}}
  \mathbb{E}_T\bigl(W^B_{k,T}\otimes W^C_{k,T}\bigr)
\ket{c_{k,m}} =: \bra{b_{k,m}}
  F_{\gamma, k}
\ket{c_{k,m}}
$$
(and relating the latter to the winning probability in the original decision game). 

Now, for fixed $k, \gamma$, the ``filter'' operator $F_{\gamma, k}$ can actually be understood as a \emph{polynomial}. Define 
\[
  \Jop_{k,t}:=V^B_{k,t}\otimes V^C_{k,t},
  \qquad
  \Jop_k:=\E_t[\Jop_{k,t}].
\]
Then, notice that, since the masks in the sequence $T$ are sampled independently of each other,
$$F_{\gamma,k}
  :=
  (1-\lambda_\gamma)
  \sum_{\ell\geq0}\lambda_\gamma^\ell\Jop_k^\ell \,.$$
where this power series is well-defined since $\| J_k \|_{op} \leq 1$.

Crucially, the ``shape'' of the polynomial is determined by the distribution over $L$ (the length of the list $T$). Then, the name of the game is to try pick the right polynomial so that the resulting filter $F_{\gamma,k}$ does a good job at aligning the vectors $\ket{b_{k,m}}$ and $F_{\gamma, k}
\ket{c_{k,m}}$. The filter will do so by applying the appropriate function to the eigenvectors of $\Jop_k$.  In our case, the power series is precisely the Taylor series of an \emph{inverse} function: 
$$ (1-\lambda_\gamma)
  \sum_{\ell\geq0}\lambda_\gamma^\ell\Jop_k^\ell = (1-\lambda_\gamma)(I-\lambda_\gamma \Jop_k)^{-1}  \,.$$
Now, when one is not worried about efficiency of the reduction, there is no issue with having an infinite series (which simply corresponds to arbitrarily long sequences of unitaries applied by Bob). When efficiency of the reduction is a concern, one has to necessarily truncate the polynomial appropriately (since efficiency is determined by the degree of the polynomial), and the hope is to find a good enough low-degree approximation of the desired filter. This idea is reminiscent of the design of quantum algorithms via the QSVT framework~\cite{gilyen2019quantum}, or, even more so, of constructions of ``approximate ground state projectors'' (AGSP) as polynomials in the Hamiltonian terms, e.g.\ in \cite{arad2012improved, anshu2022area}. Below, we do our best to explain why the filter resulting from the particular choice of a geometric distribution succeeds.

\paragraph{Relating the ``filtered'' overlap to the advantage $\Delta$.}
Note that a good choice of filter should not only make the overlap $\bra{b_{k,m}}
  F_{\gamma, k}
\ket{c_{k,m}}$ as large as possible, but it should also allow us to relate this overlap to the advantage $\Delta$ in $\mathsf{GL}(G)$. Eventually, we show that our filter $F_{\gamma, k}$ satisfies:
$$ \left| \mathbb{E}_{k,m} \bra{b_{k,m}}
  F_{\gamma, k}
\ket{c_{k,m}} \right| \geq \gamma \Delta \,,$$
which, in turn, implies $p_{\mathrm{search}} \geq (\gamma \Delta)^2$, based on the previous argument. Since we can run the reduction with any $0<\gamma \leq \Delta$, this ultimately shows $p_{\mathrm{search}} \geq \Delta^4$. So, our first goal will be to rewrite the overlap in a form that is more amenable to being related to $\Delta$.

Fix $k,m$ for now. For readability, we drop the $k,m$ subscripts for now (we will reintroduce them later!). Denote $\ket b = \ket{b_{k,m}}$ and $\ket c = \ket{c_{k,m}}$ and $F = F_{\gamma, k}$. Define
\[
  \ket g:=\frac{\ket b+\ket c}{2},
  \qquad
  \ket h:=\frac{\ket b-\ket c}{2}.
\]
Then $\ket b=\ket g+\ket h$ and $\ket c=\ket g-\ket h$. 
A first convenient observation is the following. We can rewrite
\[
  \bra{b}
  F
\ket{c}
  =
  \bra g F\ket g-\bra h F\ket h
  +\bra h F\ket g -\bra g F\ket h\,,
\]
and since the last two terms are purely imaginary, we have
\begin{equation*}
  \RePart \bra{b}
  F
\ket{c}
  =
  \bra g F\ket g-\bra h F\ket h.
  \label{eq:overview-average-difference}
\end{equation*}
Since $ |\bra{b}
  F
\ket{c}| \geq  \RePart \bra{b}
  F
\ket{c}$, it will suffice for us to bound the latter expression. Eventually, the right choice of $F$ will have to accomplish the following: increase the relative contribution of $\bra g F\ket g$ compared to $\bra h F\ket h$. Now, do these overlaps have any \emph{operational} meaning? Indeed, they do.

Recall that
\begin{align*}
    \ket b &= \ket{b_{k,m}} = \widehat V^B_k(m)\otimes\Id_C\ket{\psi_{k,m}} =: \Bop \ket{\psi} \,\,\,, \textnormal{ and} \\
    \ket c &= \ket{c_{k,m}} = \Id_B \otimes \widehat V^C_k(m)\ket{\psi_{k,m}} =: \Cop \ket{\psi} 
\end{align*}
Let us introduce the operators
\[
  S_{\mathrm{sum}}:=\frac12(\Bop+\Cop),
  \qquad
  R_{\mathrm{diff}}:=\frac12(\Bop-\Cop).
\]
Then, we have $\ket g=S_{\mathrm{sum}}\ket\psi$ and
$\ket h=R_{\mathrm{diff}}\ket\psi$, and thus
\begin{equation}
   \RePart \bra{b}
  F
\ket{c} = 
  \bra\psi S_{\mathrm{sum}}F S_{\mathrm{sum}}\ket\psi
  -\bra\psi R_{\mathrm{diff}}F R_{\mathrm{diff}}\ket\psi.
  \label{eq:overview-filtered-operator-decomposition}
\end{equation}
Observe that the $R_{\mathrm{diff}}$ operator captures the difference between the post-measurement states conditioned on Bob (resp.\ Charlie) correctly guessing $m$ and Charlie (resp.\ Bob) doing nothing. Concretely, $\| R_{\mathrm{diff}} \ket{\psi}\|$ is precisely this distance. It turns out that $R_{\mathrm{diff}}$ is a good proxy for the \emph{disagreement} between Bob and Charlie, in a formal sense. Recall that we defined
\[
  \Jop_{k,t}:=V^B_{k,t}\otimes V^C_{k,t},
  \qquad
  \Jop_k:=\E_t[\Jop_{k,t}].
\]
Note that $\Jop_{k,t}$ is an observable that precisely measures the ``agreement'' between Bob and Charlie in $\mathsf{GL}(G)$ (on key $k$ and mask $t$). This is because the observable has eigenvalue $+1$ on the eigenspace where Bob and Charlie return the same answer, and $-1$ where they return different answers. We can then define the operator $D_k =\frac12 (I-\Jop_k)$, which captures \emph{disagreement}. It turns out that one can show:
\begin{equation*}
  -D_k\preceq R_{\mathrm{diff}}\preceq D_k \,,
  \label{eq:overview-disagreement-controls-difference}
\end{equation*}
i.e.\ $R_{\mathrm{diff}}$ is dominated by the disagreement operator $D_k$.
On the other hand, $S_{\mathrm{sum}}$ does not quite capture ``agreement'', but it simply captures the sum of Bob and Charlie's individual correct-output probabilities. So, at a high level, the filter $F$ should accomplish the following: suppress the negative contribution to the overlap coming from the disagreement, while maintaining enough positive contribution from Bob and Charlie's individual success probabilities.

We now reintroduce the $k,m$ dependence for clarity and write $S^{k,m}_{\mathrm{sum}}$ and $R^{k,m}_{\mathrm{diff}}$. We also introduce the following compact notation:
\[
  \langle X\rangle_\pi
  :=
  \E_{(k,m)\leftarrow\pi}
  \bra{\psi_{k,m}}
  X
  \ket{\psi_{k,m}} \,.
\]
For example, 
$$
  \langle S_{\mathrm{sum}}\rangle_\pi
  :=
  \E_{(k,m)\leftarrow\pi}
  \bra{\psi_{k,m}}
  S^{k,m}_{\mathrm{sum}}
  \ket{\psi_{k,m}} \,.
$$

Now, a crucial observation is that, together, $S_{\mathrm{sum}}$ and $D_k$ are sufficient to determine the advantage $\Delta$. This observation is what lets us connect $\RePart \bra{b}
  F
\ket{c}$ to the advantage $\Delta$. Rewriting
Equation~\eqref{eq:overview-common-prediction} as
\begin{align*}
  p_{\mathrm{Pred}}
  &= \frac 14
  \E_{\substack{(k,m)\leftarrow\pi\\
                 r\gets\{0,1\}^n}}\!\left[
  \bra{\psi_{k,m}}
  \left(
    \frac{\Id_B+(-1)^{\langle r,m\rangle}V^B_{k,r}}2
    \otimes
    \frac{\Id_C+(-1)^{\langle r,m\rangle}V^C_{k,r}}2
  \right)
  \ket{\psi_{k,m}}
  \right]\\
  &= \frac 14
  \E_{(k,m)\leftarrow\pi}\!\left[
  \bra{\psi_{k,m}}
  \left(
    \Id_{BC}
    +
    \hat V^B_{k}(m)\otimes I_C
    +
    I_B\otimes \hat V^C_{k}(m) 
    + 
    \E_{\substack{r\gets \{0, 1\}^n}}\!
    V^B_{k,r}\otimes V^C_{k, r}
  \right)
  \ket{\psi_{k,m}}
  \right]\\
  &= \frac 14
  \big\langle 
    \Id_{BC}
    + 
    \Bop_{k,m} + \Cop_{k,m}
    +
    \Jop_k
  \big\rangle_\pi
\end{align*}
and using
$S^{k,m}_{\mathrm{sum}}=\frac12(\Bop_{k,m}+\Cop_{k,m})$, $D_k=\frac12(I-\Jop_k)$ and $\Delta = 2p_{\mathrm{Pred}} - 1$,  we have
\begin{equation}
  \Delta
  =
  \langle S_{\mathrm{sum}}\rangle_\pi -  \langle D\rangle_\pi.
  \label{eq:overview-advantage-sum-disagreement}
\end{equation}
While the above is a straightforward calculation, it establishes a somewhat surprising fact that is worth highlighting: it relates Bob and Charlie's advantage $\Delta$ in the decision game $\mathsf{GL}(G)$ to: their ``disagreement'' in  $\mathsf{GL}(G)$, captured by $D$, and the sum of their individual \emph{correct-extraction} probabilities when running a local Goldreich-Levin extractor, captured by $S_{\mathrm{sum}}$.

Notice now that Equation~\eqref{eq:overview-advantage-sum-disagreement} is quite similar to \eqref{eq:overview-filtered-operator-decomposition}, when we rewrite the latter using the new notation we introduced (and average over $k,m$):
\begin{equation}
\label{eq:12}
\RePart \langle \Bop F \Cop \rangle_{\pi} = \langle S_{\mathrm{sum}}F S_{\mathrm{sum}} \rangle_{\pi} -  \langle R_{\mathrm{diff}}F R_{\mathrm{diff}} \rangle_{\pi}\,.
\end{equation}
This suggests a proof strategy: choose a filter $F$ such that
\begin{enumerate}
\item $\langle R_{\mathrm{diff}}F R_{\mathrm{diff}} \rangle_{\pi} \leq \alpha  \langle D\rangle_{\pi}$, and
\item $\langle S_{\mathrm{sum}}F S_{\mathrm{sum}} \rangle_{\pi} \geq \alpha  \langle S_{\mathrm{sum}}\rangle_{\pi}$
\end{enumerate}
for some factor $\alpha>0$ that is not too small. We will show that one can obtain $\alpha = \gamma$, which implies the desired bound $\RePart \langle \Bop F \Cop \rangle_{\pi} \geq \gamma \Delta$.
\paragraph{Why the chosen filter works.} Recall that we defined $\lambda_\gamma:=\frac1{1+2\gamma}$, and that our reduction uses the particular filter 
\begin{align}
  F_{\gamma,k}
  &=
  (1-\lambda_\gamma)
    \sum_{\ell\geq0}(\lambda_\gamma\Jop_k)^\ell \nonumber\\
 &= (1-\lambda_\gamma)(I-\lambda_\gamma \Jop_k)^{-1} \nonumber\\
  &= f(D_k) \,, \label{eq:29}
\end{align}
where a straightforward calculation gives $f(d) = \frac{\gamma}{d + \gamma}$. 
In other words, $F_{\gamma,k}$ is diagonal in an eigenbasis of $D_k$, and its diagonal elements are obtained by applying $f$ to the eigenvalues of $D_k$. 
Note what such a filter $F$ achieves: it is close to the identity on components with little disagreement ($d<\gamma$) but suppresses, inversely in $d$, components with large disagreement ($d >> \gamma$). In this sense, $F$ is \emph{approximately a projection onto the low-disagreement} component of the state.

\medskip
\noindent \emph{The $R_{\mathrm{diff}}F_{\gamma,k}R_{\mathrm{diff}}$
bound.} We aim to show $$\langle R_{\mathrm{diff}}F R_{\mathrm{diff}} \rangle_{\pi} \leq  \gamma  \langle D\rangle_{\pi}\,.$$ 

Recall that $R_{\mathrm{diff}}$ is dominated by the disagreement operator $D_k$:
\[
  -D_k
  \preceq
  R_{\mathrm{diff}}
  \preceq
  D_k,
\]
Assume for simplicity that $R_{\mathrm{diff}}$ and $D_k$ commute, and thus share an eigenbasis. Then, we have  
\[
  R_{\mathrm{diff}}
  F_{\gamma,k}
  R_{\mathrm{diff}}
  \preceq D_k
  F_{\gamma,k}
  D_k \,.
\]
(since $F$ is positive and also diagonal in any eigenbasis of $D$).
From here, one can immediately see that
\[
  D_k
  F_{\gamma,k}
  D_k  \preceq \gamma D_k \,,
\]
since any component with eigenvalue $d$ gets transformed to $\frac{\gamma d^2}{d+\gamma}$ by the LHS, which is less than $\gamma d$. Thus, this gets us the desired bound. A short block-matrix argument (which we defer to the main body) makes this intuition valid even
when $R_{\mathrm{diff}}$ does not commute with $D_k$.

\medskip
\noindent \emph{The $S_{\mathrm{sum}}F S_{\mathrm{sum}}$
bound.} We are now left with showing 
\[
  \langle S_{\mathrm{sum}}F S_{\mathrm{sum}}\rangle_\pi
  \geq
  \gamma\langle S_{\mathrm{sum}}\rangle_\pi \,.
\]

Suppressing $k,m$ for a moment, let $\ket{x} =S_{\mathrm{sum}}\ket{\psi}$. The difficulty is that the LHS overlap $\bra{x}F\ket{x}$ may be very small due to the filter, even though the RHS overlap $\langle\psi| x \rangle = \langle \psi | S_{\mathrm{sum}}\ket{\psi}$ is large. This can happen if $\ket{x}$ has a lot of its weight on eigenspaces of $F$ corresponding to small eigenvalues.
More concretely, if $F$ has a tiny eigenvalue $\mu$ on some component, then that component
can contribute to $\bra{x}F\ket{x}$ at a ``discount rate'' of $\mu$. More generally, among all vectors $\ket{x}$ having a fixed overlap with $\ket{\psi}$, the vector with smallest contribution to the filtered overlap points in the direction
$F_{\gamma,k}^{-1}\ket{\psi}$. Formally, this follows from a Cauchy-Schwarz inequality:
\[
  \bigl\langle
    S_{\mathrm{sum}}F_{\gamma,k}S_{\mathrm{sum}}
  \bigr\rangle_\pi
  \geq
  \frac{
    \bigl|\langle S_{\mathrm{sum}}\rangle_\pi\bigr|^2
  }{
    \langle F_{\gamma,k}^{-1}\rangle_\pi
  }.
\]

Now, by Equation~\eqref{eq:overview-advantage-sum-disagreement}, the numerator on the RHS is $(\Delta + d)^2$, where $d=\langle D_k\rangle_\pi$. On the other hand, by Equation \eqref{eq:29}, the filter is designed so that the cost of applying the inverse filter is exactly linear in the disagreement $d$ (up to a rescaling by $\gamma$):
\[
  \langle F_{\gamma,k}^{-1}\rangle_\pi
  =
  \frac{\gamma + \langle D_k\rangle_\pi}{\gamma}
  =
  \frac{\gamma+d}{\gamma}.
\]
Using $\gamma \leq \Delta$, we get precisely
\[
  \bigl\langle
    S_{\mathrm{sum}}F_{\gamma,k}S_{\mathrm{sum}}
  \bigr\rangle_\pi
  \geq
  \gamma(\Delta+d)
  =
  \gamma\langle S_{\mathrm{sum}}\rangle_\pi \,.
\]
as desired. 

Finally, putting our $R_{\mathrm{diff}}F_{\gamma,k}R_{\mathrm{diff}}$ and $S_{\mathrm{sum}}F S_{\mathrm{sum}}$ bounds together with \eqref{eq:overview-advantage-sum-disagreement} and \eqref{eq:12} yields the desired bound $\RePart \langle \Bop F \Cop \rangle_{\pi}  \geq \gamma \Delta$. This, in turn gives $p_{\mathrm{search}} \geq \Delta^4$ (choosing $\gamma = \Delta$ in the reduction). 

As a final remark, notice that the filter achieves a very delicate balance: it suppresses high-disagreement components enough to control the negative contribution from $R_{\mathrm{diff}}F_{\gamma,k}R_{\mathrm{diff}}$, while its inverse grows only linearly with disagreement, which is just what is needed to have enough of the positive $S_{\mathrm{sum}}F S_{\mathrm{sum}}$ contribution survive.

\paragraph{Relation to the filter used by Ananth and Sahai~\cite{AS26}.} 
The filter used in our reduction is analogous to a filter used by Ananth and Sahai~\cite{AS26}. Both filters have the geometric series form
\[
  J=(1-\lambda_\gamma)\sum_{\ell=0}^{\infty}\lambda_\gamma^\ell\Jop^\ell = (1-\lambda_\gamma)(\Id-\lambda_\gamma\Jop)^{-1} \,,
\]
but they are used in different ways in their respective analyses. Ananth and Sahai prove that, for every integer $\ell \geq 0$,
\[
 \bigl\lVert\mathsf{E}_B\Jop^\ell\mathsf{E}_C\bigr\rVert_{\mathrm{op}}
\leq
 \bigl\lVert\mathsf{E}_B\mathsf{E}_C\bigr\rVert_{\mathrm{op}}\,,
\]
where $\mathsf{E}_B$ and $\mathsf{E}_B$ are operators capturing Bob and Charlie's individual success probabilities. They then use this inequality to bound certain ``agreement moments'' that arise in their analysis.

For us, the filter has a clear
operational meaning. It arises from Bob's actual sequence of unitaries $W^B_{k,T} = V_{k,t_L} \cdots V_{k,t_1}$ in our Goldreich-Levin reduction (Reduction~\ref{alg:gl-reduction-tech}) (after inserting matching operations on Charlie's side for the purpose of the analysis).

\section{The model and main result}
\label{sec:model}
We start by formulating ``search'' security for an unclonable encryption scheme. This can be modeled generically as a ``cloning game'' as in \cite{AKL23}. 

\begin{definition}[``Search'' unclonable encryption game]
\label{def:search-game}
Fix an integer $n\geq 1$ (which can be understood as the security parameter), a finite key space $\cK$, and a
distribution $\pi$ on $\cK\times\{0,1\}^n$.  
\begin{itemize}
\item (sampling) The challenger samples $(k,m)\leftarrow\pi$.
\item (token state generation) For every $(k,m)$, a challenger prepares a token state
$\rho^A_{k,m}$ on a finite-dimensional register $A$.  
\item (splitting) 
The adversary applies a fixed cloning/splitting procedure $\mathcal A$ producing
\begin{equation*}
  \sigma_{k,m}^{BC}:=\mathcal A(\rho^A_{k,m}).
  \label{eq:split-state}
\end{equation*}
The state $\sigma_{k,m}^{BC}$ may in general be mixed and arbitrarily entangled across $B$ and $C$. We think of registers $B$ and $C$ as going to two parties Bob and Charlie respectively. In the rest of the paper, we will take the state to be pure, and denote it as $\ket{\psi_{k,m}}^{BC}$. This is without loss of generality, since $\mathcal{A}$ can always provide any purifying register to one of the two parties (and this can only increase their success probability).
\item (Key is revealed and Bob and Charlie guess) The challenger reveals the key $k$ to Bob and Charlie, who respectively apply a local measurement on $B$ and $C$ to obtain a guess for $m$ (this measurement can be taken to be projective without loss of generality since $\mathcal{A}$ can provide any required auxiliary register).
\end{itemize}
The game is won if \emph{both} Bob and Charlie's guesses equal $m$.
\end{definition}

\begin{remark}
The simplest example of such a game is based on BB84 states. There, $k$ encodes a basis $\{0,1\}^n$ and $\rho^A_{k, m}$ is the state encoding $m$ under the basis $k$: $\bigotimes_i H^{k_i} \ket{m_i}$. The resulting game, considered by Broadbent and Lord in~\cite{BL20} in the setting of unclonable encryption, is ``secure'' (in the sense of having an exponentially small optimal winning probability). The latter follows from a monogamy-of-entanglement property of BB84 states established in~\cite{TFKW13}.
\end{remark}

We now define a Goldreich-Levin mapping from a search game $G$ as above to a decision game $\mathsf{GL}(G)$. The latter corresponds to the following mapping of an unclonable encryption scheme that satisfies ``search'' security into one that satisfies ``indistinguishability'' security. To encrypt the bit $b$: use the search secure scheme to encrypt a uniformly random string $m \in \{0,1\}^n$ via the ciphertext $|\mathsf{Enc}(k,m)\rangle$; sample another uniformly random ``mask'' $r \in \{0,1\}^n$; the new ciphertext is $\left(|\mathsf{Enc}(k,m)\rangle, r, \langle r, m \rangle \oplus b \right)$ (i.e.\ use $\langle r, m \rangle$ as a one-time pad).  Although this transformation is typical in classical (and quantum) cryptography, its validity for unclonable encryption games has remained a conjecture since~\cite{AKLLZ22}. As discussed, the primary source of difficulty and barrier is that, in the corresponding decision game, the ``mask'' $r$ is identical for Bob and Charlie (since it is part of the original ciphertext), rather than Bob and Charlie having their own independently sampled masks.
\begin{definition}[``Common-mask'' Goldreich-Levin unclonable encryption game]
\label{def:common-mask-game}
Let $G$ be a search unclonable-encryption game as in Definition \ref{def:search-game}. We define the game $\mathsf{GL}(G)$ as follows:
\begin{itemize}
\item The sampling, token state generation, and splitting phases are identical as in $G$. 
\item The only difference is in the final phase: the challenger additionally samples a uniformly random $r \in \{0,1\}^n$, and reveals $k$ and $r$ to Bob and Charlie. Then, Bob and Charlie each return a guess for $\langle r, m \rangle$. The game is won if both guesses are correct. 
\end{itemize}
We refer to $r$ as the ``mask''.
\end{definition}

The main result of this work is that the ``Common-mask'' Goldreich--Levin search-to-decision reduction works for any unclonable encryption game.
\begin{theorem}
\label{thm:main}
The following holds for any search unclonable encryption game $G$ (as in Definition~\ref{def:search-game}). 
Let $p_{\mathrm{Search}}$ and $p_{\mathrm{Pred}}$ be the optimal success probabilities in $G$ and $\mathsf{GL}(G)$ respectively (where $\mathsf{GL}(G)$ is as in Definition~\ref{def:common-mask-game}). Then, 
\begin{equation*}
  p_{\mathrm{Search}}
  \geq
  \bigl(2p_{\mathrm{Pred}}-1\bigr)^4.
  \label{eq:main-bound}
\end{equation*}
Note that $2p_{\mathrm{Pred}}-1$ is (twice) the optimal ``advantage'' over $\frac12$ (i.e.\ random guessing) in $\mathsf{GL}(G)$.

Moreover, there is an explicit reduction that takes any strategy for $\mathsf{GL}(G)$ that wins with probability $\frac12 + \frac 12 \Delta$ (for $\Delta \geq 0$) to a strategy for $G$ that wins with probability $\gamma^2 \Delta^2$, where $\gamma$ is any known lower bound on $\Delta$. The reduction is uniform; and takes expected polynomial time in the runtime of the strategy for $\mathsf{GL}(G)$ and $1/\gamma$.
\end{theorem}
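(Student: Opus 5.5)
We will follow the route laid out in the technical overview and turn it into a rigorous argument. First, we fix a strategy $\mathcal P$ for $\mathsf{GL}(G)$ with advantage $\Delta\ge 0$, which is given by the splitting map $\mathcal A$ (purified, so that the post-split state is $\ket{\psi_{k,m}}$) together with projective decoders that we write as reflections $V^B_{k,r}$ and $V^C_{k,r}$. We then run Reduction~\ref{alg:gl-reduction-tech} with a parameter $0<\gamma\le\Delta$. For a fixed list $T$, the probability that both outputs equal $m$ is $\|(\widehat V^B_k(m)W^B_{k,T})\otimes\widehat V^C_k(m)\ket{\psi_{k,m}}\|^2$. To bound this, we insert the norm-preserving unitary $W^C_{k,T}$ on Charlie's side. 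Since $\|\ket\psi\|=1$, Cauchy--Schwarz gives a lower bound of $|\bra{b_{k,m}}W^B_{k,T}\otimes W^C_{k,T}\ket{c_{k,m}}|^2$. Here we use that $V$ is Hermitian, so $\widehat V$ is Hermitian, and hence $\bra\psi (\widehat V^B W^B)\otimes(W^C\widehat V^C)\ket\psi=\bra{b}W^B\otimes W^C\ket c$. Jensen's inequality over $T$ and over $(k,m)$ then yields $p_{\mathrm{Search}}\ge|\E_{k,m}\bra{b_{k,m}}F_{\gamma,k}\ket{c_{k,m}}|^2$. For this step we need $\E_T[W^B_{k,T}\otimes W^C_{k,T}]=(1-\lambda)\sum_\ell\lambda^\ell\Jop_k^\ell=f(D_k)$ with $f(d)=\gamma/(d+\gamma)$, which is a direct computation because $\lambda_\gamma=1/(1+2\gamma)$ and $\Jop_k=I-2D_k$.

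The core of the proof is to show $\RePart\langle\Bop F\Cop\rangle_\pi\ge\gamma\Delta$. We begin with the polarization identity~\eqref{eq:12}. The cross terms are purely imaginary because $F$ is Hermitian (indeed PSD, since $0\preceq D_k\preceq I$). Next we establish the identity $\Delta=\langle S_{\mathrm{sum}}\rangle_\pi-\langle D\rangle_\pi$ by expanding $p_{\mathrm{Pred}}$ as in~\eqref{eq:overview-advantage-sum-disagreement}. For the positive term, we apply Cauchy--Schwarz in the inner product $(X,Y)\mapsto\E_{k,m}\bra{\psi_{k,m}}X^\dagger F Y\ket{\psi_{k,m}}$, pairing $F^{1/2}S_{\mathrm{sum}}\psi$ with $F^{-1/2}\psi$. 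This gives $\langle S F S\rangle_\pi\ge\langle S\rangle_\pi^2/\langle F^{-1}\rangle_\pi$. We then use $\langle F^{-1}\rangle_\pi=(\gamma+d)/\gamma$ and $\langle S\rangle_\pi=\Delta+d\ge\gamma+d$, which together give $\langle SFS\rangle_\pi\ge\gamma\langle S\rangle_\pi$.

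For the negative term, we need $R_{\mathrm{diff}}FR_{\mathrm{diff}}\preceq\gamma D_k$ without assuming commutation, and we expect this to be the main obstacle. We will proceed in two steps. First we prove $-D_k\preceq R_{\mathrm{diff}}\preceq D_k$ for each fixed $(k,m)$. To do this, we write $\widehat V^B(m)=\E_r \chi_r V^B_r$ with $\chi_r=(-1)^{\langle r,m\rangle}$. Then $D_k\pm R_{\mathrm{diff}}=\E_r\frac12\bigl(I-V^B_r\otimes V^C_r\pm\chi_r(V^B_r\otimes I-I\otimes V^C_r)\bigr)$. Each summand is $\frac12(I\mp\chi_rV^B_r)(I\pm\chi_r V^C_r)$ up to sign bookkeeping, i.e.\ a product of commuting PSD operators on different tensor factors, and is therefore PSD. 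Second, we use the block-matrix lemma: if $-D\preceq R\preceq D$ with $R$ Hermitian and $D\succeq0$, then $\begin{pmatrix}D & R\\ R & D\end{pmatrix}\succeq0$. Combined with $D+\gamma I\succ0$, a Schur-complement-type argument should yield $R(D+\gamma I)^{-1}R\preceq D(D+\gamma I)^{-1}D+{}$(something absorbable). The cleanest version we would try is the claim $R(D+\gamma)^{-1}R\preceq D$, proved via $\begin{pmatrix}D+\gamma & R\\ R& D\end{pmatrix}\succeq0$. That matrix is a sum of $\mathrm{diag}(\gamma,0)$ and a PSD matrix, so the Schur complement gives exactly $D-R(D+\gamma)^{-1}R\succeq0$. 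Multiplying by $\gamma$ then gives $R F R\preceq\gamma D$.

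Combining the two bounds yields $\RePart\langle\Bop F\Cop\rangle_\pi\ge\gamma\langle S\rangle_\pi-\gamma\langle D\rangle_\pi=\gamma\Delta$, so $p_{\mathrm{Search}}\ge\gamma^2\Delta^2$. Taking $\gamma=\Delta$ for an optimal strategy gives $p_{\mathrm{Search}}\ge(2p_{\mathrm{Pred}}-1)^4$; if $\Delta=0$ the bound is trivial. For the efficiency claim, $\E[L]=\lambda/(1-\lambda)=1/(2\gamma)$, so the expected number of decoder calls is $O(1/\gamma)$ plus the two GL calls. Each call is a coherent implementation of the decoder, which is polynomial in the runtime of $\mathcal P$. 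Finally, since $L$ is sampled uniformly by Bob, the reduction is uniform.
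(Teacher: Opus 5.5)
Your proposal is correct and follows the paper's proof essentially step for step. It inserts $W^C_{k,T}$ on Charlie's side before Cauchy--Schwarz and Jensen, and identifies $\E_T[W^B_{k,T}\otimes W^C_{k,T}]$ with the resolvent $\gamma(D+\gamma\Id)^{-1}$. It then splits the real part into an $SF_\gamma S$ term and an $RF_\gamma R$ term. The first is bounded below via Cauchy--Schwarz against $F_\gamma^{-1}$. The second is bounded by $\gamma D$ via the Schur complement of a block matrix built from $D\pm R\succeq 0$, which the paper writes as $2\Gamma_{+,-}$ and $2\Gamma_{-,+}$.

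The only blemishes are cosmetic. The factorization should read $D\pm R=\tfrac12\E_r\,(\Id\pm\chi_rV^B_r)\otimes(\Id\mp\chi_rV^C_r)$, which is harmless since both signs are PSD either way. The hedge about an ``absorbable'' extra term is unnecessary, because your final Schur-complement argument already gives $RF_\gamma R\preceq\gamma D$ exactly.
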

The runtime of the reduction can be made \emph{worst-case} polynomial-time at the cost of a factor of $\frac12$ loss in the $p_{\mathrm{Search}}$ lower bound (see Theorem~\ref{thm:efficient-capped}). Moreover, there is a reduction that does not need to know a lower bound $\gamma$ at all, and achieves $p_{\mathrm{Search}} \geq \Omega(\Delta^6)$ (see Theorem~\ref{thm:unknown-delta}).

Going forward, for convenience, we will often refer to $\Delta$ as the advantage of a strategy, even though it is technically twice the advantage. 

The rest of the paper is devoted to the proof of Theorem~\ref{thm:main}.

\section{A warmup: the independent-mask Goldreich--Levin reduction}
\label{sec:independent}
In this section, we consider a simpler version of the Goldreich-Levin unclonable encryption game from Definition~\ref{def:common-mask-game}. The only difference will be that here Bob and Charlie receive \emph{independently} sampled masks $r$ and $r'$ (rather than identical) and will have to respectively guess $\langle r, m \rangle$ and $\langle r', m \rangle$.

This ``independent-mask'' version of the game has been analyzed before. The analysis follows straightfowardly from the standard quantum Goldreich-Levin reduction of Adcock and Cleve~\cite{AC02}, as observed in, e.g.~\cite{CLLZ21, AKL23}. We recall this analysis here in broad strokes as it will help us introduce some relevant notation, as well as pinpoint the crux in analyzing to the common-mask version of the game.

\subsection{Setup}
All inner products are over $\F_2$, and we write
\begin{equation*}
  \chi_m(r):=(-1)^{\langle r,m\rangle},
  \qquad r\in\{0,1\}^n.
  \label{eq:character}
\end{equation*}
For the rest of this section, we fix the strategy for the ``splitting'' phase. For $k$ and $m \in \{0,1\}^n$ sampled by the challenger, let $\ket{\psi_{k,m}}$ denote the joint state of Bob and Charlie after the splitting phase (which we are assuming is pure without loss of generality). The reduction we will describe does not change the splitting phase at all, only Bob and Charlie's operations.

For key $k$ and mask $r\in\{0,1\}^n$, let $V^B_{k,r}$ and $V^C_{k,r}$ be the binary observables measured by Bob and Charlie respectively to computer their answer bit. Then, their corresponding projective measurements are $\{\Pi^B_{k,r,b}\}_{b \in \{0,1\}}$ and $\{\Pi^C_{k,r,b}\}_{b \in \{0,1\}}$ where
\begin{equation*}
  \Pi^B_{k,r,b}:=\frac{\Id+(-1)^b V^B_{k,r}}2,
  \qquad
  \Pi^C_{k,r,b}:=\frac{\Id+(-1)^b V^C_{k,r}}2.
  \label{eq:decoder-projectors}
\end{equation*}
The ``correct answer'' projectors are
\begin{equation*}
  \Pi^B_{k,r,\langle r,m\rangle}
  =
  \frac{\Id+\chi_m(r)V^B_{k,r}}2,
  \qquad
  \Pi^C_{k,r,\langle r,m\rangle}
  =
  \frac{\Id+\chi_m(r)V^C_{k,r}}2.
  \label{eq:correct-projectors}
\end{equation*}
The winning probability for this strategy is the following.
\begin{definition}[Independent-mask unclonable-encryption game winning probability]
\label{def:indep-game}
\begin{equation*}
  p_{\mathrm{Pred}}^{\mathrm{ind}}
  :=
  \E_{\substack{(k,m)\leftarrow\pi\\
                 r,s\gets\{0,1\}^n}}
  \left[
  \bra{\psi_{k,m}}
  \left(
    \frac{\Id+\chi_m(r)V^B_{k,r}}2
    \otimes
    \frac{\Id+\chi_m(s)V^C_{k,s}}2
  \right)
  \ket{\psi_{k,m}}\right] 
  \label{eq:common-prediction}
\end{equation*}
(where the expectation is over $r,s \in \{0,1\}^n$).
\end{definition}
Note that a winning probability of $1/2$ is trivially achievable: give the ciphertext to Bob and let Charlie do a random guess.  We write
\begin{equation*}
  \Delta^{\mathrm{ind}}:=2p_{\mathrm{Pred}}^{\mathrm{ind}}-1
  \label{eq:Delta}
\end{equation*}
for the advantage over this baseline. We henceforth consider only the case $\Delta^{\mathrm{ind}}\geq0$.

\subsection{Independent-mask GL reduction}
We now describe a reduction that takes a strategy achieving advantage $\Delta^{\mathrm{ind}}$ in the independent-mask unclonable-encryption game, and obtains a strategy that wins with probability at least $(\Delta^{\mathrm{ind}})^2$ in the ``search'' unclonable-encryption game of Definition~\ref{def:search-game}, i.e.\ Bob  and Charlie simultaneously extract $m$ with probability at least $(\Delta^{\mathrm{ind}})^2$. The reduction is simple: Bob and Charlie each locally run the \cite{AC02} Goldreich-Levin extractor, and return the output. We recall how this extractor works.

\paragraph{The \cite{AC02} Goldreich-Levin extractor.}
Let $R$ be an $n$-qubit register. For a key $k$, define Bob's controlled reflection
\begin{equation*}
  \CV^B_k
  :=
  \sum_{r\in\{0,1\}^n}
  \proj r^R\otimes V^B_{k,r} 
  \,.
  \label{eq:controlled-V-B}
\end{equation*}
Then, the \cite{AC02} Goldreich--Levin extractor applies the following unitary
\begin{equation}
  \GL^B_k
  :=
  (H_R^{\otimes n}\otimes\Id_B)
  \CV^B_k
  (H_R^{\otimes n}\otimes\Id_B)\,,
  \label{eq:GL-B}
\end{equation}
followed by a measurement of the $R$ register. The guarantee proven in \cite{AC02}, which follows from a straightforward calculation, is that, for any state $\ket{\phi}^B$ which provides advantage $\epsilon$ at guessing $\langle r, m \rangle$, the extractor above will output $m$ with probability at least $\epsilon^2$ when run on $\ket{0}^R \ket{\phi}^B$. In the independent-mask GL reduction, Bob and Charlie both locally run the \cite{AC02} extractor in parallel. So, let us define $\CV^C_k$ and $\GL^C_k$ analogously.

Now, for $y\in\{0,1\}^n$, define
\begin{equation}
  \widehat V^B_k(y)
  :=
  2^{-n}\sum_r(-1)^{\langle r,y\rangle}V^B_{k,r},
  \qquad
  \widehat V^C_k(y)
  :=
  2^{-n}\sum_r(-1)^{\langle r,y\rangle}V^C_{k,r}.
  \label{eq:Vhat}
\end{equation}
The following lemma will be important later.
\begin{lemma}
\label{lem:GL-expansion} For every state $\ket\phi^B$,
\begin{equation*}
  \GL^B_k
  \bigl(\ket{0^n}^R\ket\phi^B\bigr)
  =
  \sum_{y\in\{0,1\}^n}
  \ket y^R\widehat V^B_k(y)\ket\phi^B.
  \label{eq:GL-expansion}
\end{equation*}
Equivalently, the following operator identity holds for all $y$:
$$
  \widehat V^B_k(y). = (\bra y^R\otimes\Id_B)\GL^B_k
  (\ket{0^n}^R\otimes\Id_B)  \,.
$$
The analogous identities hold for Charlie.
\end{lemma}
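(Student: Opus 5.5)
The plan is a direct computation, tracking the three operations that make up $\GL^B_k$ in \eqref{eq:GL-B}, followed by reading off the operator identity by projecting onto $\ket y^R$. No earlier result is needed beyond the definitions of $\CV^B_k$ and $\widehat V^B_k(y)$ in \eqref{eq:Vhat}.

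First, I apply $H_R^{\otimes n}\otimes\Id_B$ to $\ket{0^n}^R\ket\phi^B$. This gives the uniform superposition $2^{-n/2}\sum_r\ket r^R\ket\phi^B$. Next, by definition of $\CV^B_k=\sum_r\proj r^R\otimes V^B_{k,r}$, the controlled reflection acts branchwise. The result is $2^{-n/2}\sum_r\ket r^R V^B_{k,r}\ket\phi^B$. Finally, I apply the second Hadamard transform using $H^{\otimes n}\ket r=2^{-n/2}\sum_y(-1)^{\langle r,y\rangle}\ket y$. This produces
\[
\frac1{2^n}\sum_{y,r}(-1)^{\langle r,y\rangle}\ket y^R V^B_{k,r}\ket\phi^B.
\]
Exchanging the two finite sums and grouping by $y$, the coefficient of $\ket y^R$ is exactly $\widehat V^B_k(y)\ket\phi^B$. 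This yields the first identity.

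For the operator form, I note that the above holds for every $\ket\phi^B$ and that both sides are linear in $\ket\phi^B$. I then apply $\bra y^R\otimes\Id_B$ to both sides. By orthonormality of the computational basis, only the $y$-term survives. Hence $(\bra y^R\otimes\Id_B)\GL^B_k(\ket{0^n}^R\otimes\Id_B)\ket\phi^B=\widehat V^B_k(y)\ket\phi^B$ for all $\ket\phi^B$, which is the claimed operator identity. Charlie's version is verbatim the same with $S$, $C$, and $V^C_{k,s}$ in place of $R$, $B$, and $V^B_{k,r}$.

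There is no real obstacle here. The only points requiring care are the normalization, since the two factors of $2^{-n/2}$ combine into the $2^{-n}$ appearing in \eqref{eq:Vhat}, and the sign convention $(-1)^{\langle r,y\rangle}$ of the Hadamard transform, which is symmetric in $r$ and $y$.
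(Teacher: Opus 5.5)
Your proposal is correct and follows essentially the same route as the paper: expand the three operations of $\GL^B_k$, group the terms by $y$ to obtain $\widehat V^B_k(y)$, and then project onto $\ket y^R$ for arbitrary $\ket\phi^B$ to get the operator identity. Your version writes out the intermediate states and the combination of the two factors $2^{-n/2}$ into $2^{-n}$ more explicitly than the paper does, but the argument is the same.
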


\begin{proof}
Fix an arbitrary input state $\ket\phi^B$. Then,
\begin{align*}
  \GL^B_k\bigl(\ket{0^n}^R\ket\phi^B\bigr)
  &= (H_R^{\otimes n}\otimes\Id_B)
  \CV^B_k
  (H_R^{\otimes n}\otimes\Id_B) \\
 &= 2^{-n}\sum_{r,y\in\{0,1\}^n}
  (-1)^{\langle r,y\rangle}
  \ket y^R V^B_{k,r}\ket\phi^B \\
  &=\sum_{y\in\{0,1\}^n}
  \ket y^R\widehat V^B_k(y)\ket\phi^B
\end{align*}
The sums are finite, so we may group together all terms with the same value
of $y$:
\begin{align*}
  \GL^B_k\bigl(\ket{0^n}^R\ket\phi^B\bigr)
  &=
  \sum_{y\in\{0,1\}^n}
  \ket y^R
  \left(
    2^{-n}\sum_{r\in\{0,1\}^n}
    (-1)^{\langle r,y\rangle}V^B_{k,r}
  \right)
  \ket\phi^B \\
  &=
  \sum_{y\in\{0,1\}^n}
  \ket y^R\widehat V^B_k(y)\ket\phi^B \,.
\end{align*}

Finally, projecting the $R$ register onto $\bra{y}$, and noting that the resulting identity holds for all $\ket{\phi}$, yields the operator identity 
$$
  (\bra y^R\otimes\Id_B)\GL^B_k
  (\ket{0^n}^R\otimes\Id_B)
  =
  \widehat V^B_k(y) \,.
$$
The argument is analogous on Charlie's side.
\end{proof}

\paragraph{The independent-mask GL reduction.}
As mentioned earlier, the independent-mask Goldreich-Levin reduction is simple: Bob and Charlie each locally run the \cite{AC02} Goldreich-Levin extractor.

\begin{proposition}[Independent-mask GL reduction]
\label{prop:independent-mask-reduction}
Let $p_{\mathrm{Pred}}^{\mathrm{ind}}$ be the probability that Bob and Charlie win the independent-mask game (as in Definition~\ref{def:indep-game}). Assume $p_{\mathrm{Pred}}^{\mathrm{ind}}\geq1/2$ and set $\Delta^{\mathrm{ind}}
  :=
  2p_{\mathrm{Pred}}^{\mathrm{ind}}-1$.
If Bob and Charlie each run the Goldreich-Levin extractor locally, they both output the correct $m$ with probability at least
\begin{equation*}
  p_{\mathrm{Search}}
  \geq
  (\Delta^{\mathrm{ind}})^2
  =
  \bigl(2p_{\mathrm{Pred}}^{\mathrm{ind}}-1\bigr)^2.
  \label{eq:independent-search-bound}
\end{equation*}
\end{proposition}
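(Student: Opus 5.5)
We follow the route sketched in the technical overview. First, we express the joint success probability exactly. Bob applies $\GL^B_k$ on $RB$ and Charlie applies $\GL^C_k$ on $SC$, each with their mask register initialized to $\ket{0^n}$, and both registers are then measured. By Lemma~\ref{lem:GL-expansion}, applied on each side (the two unitaries act on disjoint registers, so they commute and the operator identities tensor), the unnormalized post-measurement state on $BC$ for outcome $(y,z)$ is $\bigl(\widehat V^B_k(y)\otimes\widehat V^C_k(z)\bigr)\ket{\psi_{k,m}}$. Hence
\[
p_{\mathrm{Search}}=\E_{(k,m)\leftarrow\pi}\bigl\|\ket{\eta^{\mathrm{ind}}_{k,m}}\bigr\|^2,
\qquad
\ket{\eta^{\mathrm{ind}}_{k,m}}:=\bigl(\widehat V^B_k(m)\otimes\widehat V^C_k(m)\bigr)\ket{\psi_{k,m}}.
\]

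Second, we lower bound each term by an overlap. Since $\ket{\psi_{k,m}}$ is a unit vector, Cauchy--Schwarz gives $\|\eta^{\mathrm{ind}}_{k,m}\|^2\ge|\braket{\psi_{k,m}}{\eta^{\mathrm{ind}}_{k,m}}|^2$. Each $V$ is Hermitian, so $\widehat V^B_k(m)$ and $\widehat V^C_k(m)$ are Hermitian (real coefficients). Therefore
\[
\braket{\psi_{k,m}}{\eta^{\mathrm{ind}}_{k,m}}=\braket{b_{k,m}}{c_{k,m}},
\]
where $\ket{b_{k,m}}$ and $\ket{c_{k,m}}$ are as in the overview. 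This quantity is real. Expanding the Fourier coefficients, it equals
\[
\E_{r,s}\Bigl[\chi_m(r)\chi_m(s)\bra{\psi_{k,m}}\bigl(V^B_{k,r}\otimes V^C_{k,s}\bigr)\ket{\psi_{k,m}}\Bigr].
\]

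Third, we relate this correlation to $\Delta^{\mathrm{ind}}$. The cleanest way is an operator identity rather than the inequality in the overview. Writing $P_B=\frac{\Id+\chi_m(r)V^B_{k,r}}2$ and $P_C=\frac{\Id+\chi_m(s)V^C_{k,s}}2$, we have
\[
\chi_m(r)\chi_m(s)\,V^B_{k,r}\otimes V^C_{k,s}=(2P_B-\Id)\otimes(2P_C-\Id)=4P_B\otimes P_C-2P_B\otimes\Id-2\Id\otimes P_C+\Id.
\]
Taking expectations gives $4p_{\mathrm{both}}-2p_B-2p_C+1$. Since $p_B,p_C\le 1$ and $p_B,p_C\ge p_{\mathrm{both}}$, we only get $4p_{\mathrm{both}}-2p_B-2p_C+1\ge 4p_{\mathrm{both}}-3$, which is not strong enough. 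We therefore use the complementary form instead. The correlation equals $\Pr[\text{both correct}]+\Pr[\text{both wrong}]-\Pr[\text{exactly one correct}]$, which is at least $p_{\mathrm{both}}-(1-p_{\mathrm{both}})=2p_{\mathrm{both}}-1$, because the nonnegative "both wrong" term can be dropped. Averaging over $(k,m,r,s)$ yields
\[
\E_{(k,m)\leftarrow\pi}\braket{b_{k,m}}{c_{k,m}}\ge 2p_{\mathrm{Pred}}^{\mathrm{ind}}-1=\Delta^{\mathrm{ind}}\ge 0.
\]

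Finally, Jensen's inequality (convexity of $x\mapsto x^2$) gives
\[
p_{\mathrm{Search}}\ge\E_{(k,m)}|\braket{b_{k,m}}{c_{k,m}}|^2\ge\bigl(\E_{(k,m)}\braket{b_{k,m}}{c_{k,m}}\bigr)^2\ge(\Delta^{\mathrm{ind}})^2,
\]
where the last step uses that the average is nonnegative.

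The step needing the most care is the third one. It requires checking that the matching of the sampled pairs $(r,s)$ with the Fourier expansion is exact, which holds because the masks are independent and uniform. It also requires keeping the sign bookkeeping so that the "both wrong" event contributes nonnegatively. Everything else is routine.
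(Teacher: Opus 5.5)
Your proof is correct and follows the paper's argument essentially step for step. You use Lemma~\ref{lem:GL-expansion} to identify the joint correct-output branch, Cauchy--Schwarz against $\ket{\psi_{k,m}}$ to reduce to $\braket{b_{k,m}}{c_{k,m}}$, the identity $\E_{(k,m)}\braket{b_{k,m}}{c_{k,m}}=\E[XY]\geq 2p_{\mathrm{Pred}}^{\mathrm{ind}}-1$ obtained by dropping the both-wrong term, and then Jensen. (The operator-identity detour you abandoned actually works too: inclusion--exclusion gives $p_B+p_C\leq 1+p_{\mathrm{both}}$, so $4p_{\mathrm{both}}-2p_B-2p_C+1\geq 2p_{\mathrm{both}}-1$, but the step is not needed.)
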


\begin{proof}
A direct consequence of Lemma~\ref{lem:GL-expansion} is that, upon running their respective Goldreich-Levin extractors and obtaining the same correct outcome $m$, the unnormalized leftover state on registers $B$ and $C$ is
$$\widehat V^B_k(m)\otimes \widehat V^C_k(m)\ket{\psi_{k,m}}\,.$$
Consequently, the probability that Bob and Charlie output the same correct outcome $m$ given key $k$ in the first place is precisely
\begin{align}
  p_{k,m}
  &:={}
  \left\|
    \widehat V^B_k(m)\otimes \widehat V^C_k(m)\ket{\psi_{k,m}}
  \right\|^2
  \nonumber\\
  &\geq
  \left|
    \bra{\psi_{k,m}}
    \widehat V^B_k(m)\otimes \widehat V^C_k(m)
    \ket{\psi_{k,m}}
  \right|^2
  \nonumber\\
  &=
  \left|\braket{b_{k,m}}{c_{k,m}}\right|^2.
  \label{eq:independent-branch-bound}
\end{align}
where the first inequality is by Cauchy--Schwarz, and in the last equality we defined 
\begin{equation*}
  \ket{b_{k,m}}:= \widehat V^B_k(m) \otimes I\ket{\psi_{k,m}},
  \qquad
  \ket{c_{k,m}}:= I \otimes \widehat V^C_k(m)\ket{\psi_{k,m}}.
  \label{eq:b-c-vectors}
\end{equation*}

Let $X$ and $Y$ denote the $\pm1$-valued random variables for Bob and Charlie respectively guessing their own inner product correctly (where $+1$ denotes a correct guess). Then, notice that for fixed $k,m,r,s$, we have
\begin{equation*}
  \E[XY\mid k,m,r,s]
  =
  \chi_m(r)\chi_m(s)
  \bra{\psi_{k,m}}
    V^B_{k,r}\otimes V^C_{k,s}
  \ket{\psi_{k,m}}.
  \label{eq:independent-signed-correlation}
\end{equation*}
Using the definitions of $\widehat V^B_k(m)$ and $\widehat V^C_k(m)$ from \eqref{eq:Vhat}, we have
\begin{align}
 \E_{(k,m)\leftarrow\pi}
  \braket{b_{k,m}}{c_{k,m}} &= {}
  \E_{\substack{(k,m)\leftarrow\pi\\r,s}}
  \left[
    \chi_m(r)\chi_m(s)
    \bra{\psi_{k,m}}
      V^B_{k,r}\otimes V^C_{k,s}
    \ket{\psi_{k,m}}
  \right] \nonumber\\
  &={}\E[XY]
  \label{eq:independent-factorization}
\end{align}
Now, let $P_{u,v}=\Pr[X=u,Y=v]$. By definition,
$P_{+1,+1}=p_{\mathrm{Pred}}^{\mathrm{ind}}$, and therefore
\begin{align}
  \E[XY]
  &=
  P_{+1,+1}+P_{-1,-1}-P_{+1,-1}-P_{-1,+1}
  \nonumber\\
  &=
  2\bigl(P_{+1,+1}+P_{-1,-1}\bigr)-1
  \nonumber\\
  &\geq
  2p_{\mathrm{Pred}}^{\mathrm{ind}}-1
  =
  \Delta^{\mathrm{ind}}.
  \label{eq:independent-agreement-correlation}
\end{align}
Finally, averaging Equation~\eqref{eq:independent-branch-bound} over
$(k,m)\leftarrow\pi$, applying Jensen's inequality and Equations \eqref{eq:independent-factorization} and \eqref{eq:independent-agreement-correlation}, we have
\begin{align*}
  p_{\mathrm{Search}}
  &\geq
  \E_{(k,m)\leftarrow\pi}
  \left|\braket{b_{k,m}}{c_{k,m}}\right|^2
  \\
  &\geq
  \left|
    \E_{(k,m)\leftarrow\pi}
    \braket{b_{k,m}}{c_{k,m}}
  \right|^2
  \geq
  (\Delta^{\mathrm{ind}})^2 \,,
\end{align*}
as desired.
\end{proof}

\paragraph{What fails in the common-mask setting.} We draw the reader's attention to the key assumption that made the above argument go through: the challenges $r$ and $s$ are drawn independently (and $(\Delta^{\mathrm{ind}})^2$ is the probability that they succeed simultaneously on this independent-challenge distribution). This is what allows us to relate the overlap $\braket{b_{k,m}}{c_{k,m}}$ to $\E[XY | k,m]$ in Equation~\eqref{eq:independent-factorization}. In the ``common-mask''
game, which we will discuss next, the hypothesis is that Bob and Charlie have advantage $\Delta$ when receiving \emph{identical} challenges. Thus, the success hypothesis only controls the diagonal terms $r=s$, and
therefore need not bound the same overlap. In this case, what can happen is that $\ket{b_{k,m}}$ and $\ket{c_{k,m}}$ can each have substantial norm, but their overlap is tiny or even zero. In Appendix~\ref{app:cancellation}, we give an explicit two-dimensional example in which the common-mask simultaneous success probability is $25/48$, yet $\ket{b_{k,m}}$ and $\ket{c_{k,m}}$ are orthogonal. To overcome this, we will have to devise a new reduction that works in the common-mask setting.

\section{The common-mask GL reduction}
\label{sec:reduction}
For a search unclonable encryption game $G$, recall the definition of the common-mask unclonable encryption game $\mathsf{GL}(G)$ from Definition~\ref{def:common-mask-game}: the only difference from the independent-mask game discussed in Section~\ref{sec:independent} is that Bob and Charlie now receive identical challenges $r$. In this section, we will describe how to modify the reduction from the previous section so that it works in the common-mask setting. This will allows us to prove our main Theorem~\ref{thm:main}, relating the success probability in $G$ to the success probability in $\mathsf{GL}(G)$.

\subsection{The reduction algorithm}
\label{sec:reduction-algorithm}
Fix a strategy for Bob and Charlie. We use the same notation as in Section~\ref{sec:independent} to describe it: for key $k$ and mask $r\in\{0,1\}^n$, we let $V^B_{k,r}$ and $V^C_{k,r}$ be the binary observables (i.e.\ reflections) that they measure to compute their answer bit. For key $k$ and $m\in\{0,1\}^n$, we denote by $\ket{\psi_{k,m}}$ the state on Bob and Charlie's registers after the splitting phase. As before, we denote by $p_{\mathrm{Pred}}$ the winning probability of this strategy in $\mathsf{GL}(G)$, and let $\Delta=2p_{\mathrm{Pred}}-1$.

The reduction algorithm is ``slightly'' non-black-box. It needs to know a lower bound $0<\gamma\leq\Delta$ on the advantage. The success probability $p_{\mathrm{Search}}$ of the resulting search strategy will then depend on $\gamma$: we will eventually show $p_{\mathrm{Search}} \geq \gamma^2 \Delta^2$. 

Moreover, the reduction algorithm is actually (fundamentally) randomized. As part of the reduction, Bob samples a non-negative integer $L$ according to the following geometric distribution. Let
\begin{equation*}
  \lambda_\gamma:=\frac1{1+2\gamma}.
  \label{eq:lambda-gamma}
\end{equation*}
Then, 
\begin{equation}
  \Pr[L=\ell]
  =
  (1-\lambda_\gamma)\lambda_\gamma^\ell,
  \qquad \ell=0,1,2,\ldots.
  \label{eq:L-distribution}
\end{equation}

Here is the reduction algorithm.

\begin{minorprotocol}
[Common-mask GL reduction]
\label{alg:gl-reduction}
\smallskip
\textbf{Input:} A strategy for $\mathsf{GL}(G)$ (for which we use the notation introduced above), and a parameter $0<\gamma\leq\Delta$.

\smallskip
\textbf{Output:} The following strategy for $G$.

\medskip
The splitting phase is identical as in $\mathsf{GL}(G)$, and results in Bob and Charlie obtaining a state $\ket{\psi_{k,m}}$ on registers $B,C$, for some key $k$ and $m \in \{0,1\}^n$. Then:

\smallskip
\begin{enumerate}[label=\arabic*.,leftmargin=2em,itemsep=0.6em]
  \item 
  Bob samples $L$ from the distribution in \eqref{eq:L-distribution}.
  He then samples
  $T=(t_1,\ldots,t_\ell)$ where the $t_i \leftarrow \{0,1\}^n$ are independent and uniformly random masks. 

  \item After receiving $k$, Bob applies unitary
  \[
    W^B_{k,T}=V^B_{k,t_\ell}\cdots V^B_{k,t_1},
  \]
  to register $B$. Then Bob initializes a register $R$ to $\ket{0^n}$, and apply the \cite{AC02} Golreich-Levin unitary $\GL^B_k$ (as in Equation~\ref{eq:GL-B}) on registers $RB$. Finally, he measures $R$ in the standards basis, and returns the output.

  \item After receiving $k$, Charlie initializes a register $S$ to $\ket{0^n}$. He applies $\GL^C_k$ on registers $SC$, measures $S$ in the computational basis, and returns the output. Note that, unlike Bob, Charlie does not apply any other unitary before $\GL^C_k$.  
\end{enumerate}

\end{minorprotocol}

\medskip
\noindent We will spend most of the remainder of this paper proving the following theorem.

\begin{theorem}
\label{thm:parametric-reduction}
Let $p_{\mathrm{Pred}}$ be the winning probability of a strategy $S$ for game $\mathsf{GL}(G)$. Let $\Delta = 2p_{\mathrm{Pred}}-1$, and assume $\Delta>0$. Let $0<\gamma\leq\Delta$, and let $S'$ be the strategy for game $G$ obtained by running Reduction~\ref{alg:gl-reduction} on input $S$ and $\gamma$. Let $p_{\mathrm{Search}}$ be the winning probability of $S'$ in $G$. Then, 
\begin{equation*}
  p_{\mathrm{Search}}
  \geq
  \gamma^2\Delta^2.
  \label{eq:parametric-success}
\end{equation*}
\end{theorem}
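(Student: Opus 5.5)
The plan follows the roadmap of the technical overview, making each step rigorous. First, I would write the success probability of Reduction~\ref{alg:gl-reduction} exactly. By Lemma~\ref{lem:GL-expansion}, conditioned on $(k,m,T)$ the probability that both output $m$ is $\|(\widehat V^B_k(m)W^B_{k,T})\otimes \widehat V^C_k(m)\ket{\psi_{k,m}}\|^2$. Since $W^C_{k,T}$ is unitary, this norm is unchanged if we left-multiply by $\Id\otimes W^C_{k,T}$. Note that $\widehat V^B_k(m)$ is Hermitian, as is $\widehat V^C_k(m)$, and that $W^B_{k,T}$ commutes with $\widehat V^C_k(m)$ because they act on different registers. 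Cauchy--Schwarz against $\ket{\psi_{k,m}}$ then gives the lower bound $|\bra{b_{k,m}}(W^B_{k,T}\otimes W^C_{k,T})\ket{c_{k,m}}|^2$. Averaging over $T$ and $(k,m)$ and applying Jensen twice yields $p_{\mathrm{Search}}\ge |\langle \Bop F\Cop\rangle_\pi|^2$. Here $F=F_{\gamma,k}=\E_T[W^B_{k,T}\otimes W^C_{k,T}]=(1-\lambda_\gamma)\sum_\ell \lambda_\gamma^\ell \Jop_k^\ell$, using independence of the $t_i$ and $\|\Jop_k\|_{op}\le 1$ for convergence. It then suffices to show $\RePart\langle \Bop F\Cop\rangle_\pi\ge\gamma\Delta$.

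Next I would record the algebraic facts. $F=f(D_k)$ with $f(d)=\gamma/(d+\gamma)$ and $D_k=\frac12(\Id-\Jop_k)$, which satisfies $0\preceq D_k\preceq \Id$; so $F$ is positive definite and commutes with $D_k$. Writing $\Bop=S_{\mathrm{sum}}+R_{\mathrm{diff}}$ and $\Cop=S_{\mathrm{sum}}-R_{\mathrm{diff}}$, all Hermitian, the cross terms $\bra\psi S F R\ket\psi-\bra\psi R F S\ket\psi$ are purely imaginary. This gives identity \eqref{eq:12}. Expanding $p_{\mathrm{Pred}}$ as in the overview gives $\Delta=\langle S_{\mathrm{sum}}\rangle_\pi-\langle D\rangle_\pi$.

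For the positive term, I would use Cauchy--Schwarz in the $F$-inner product: $|\langle \psi|S\psi\rangle|^2\le \langle S\psi|F|S\psi\rangle\langle\psi|F^{-1}|\psi\rangle$. I would apply this jointly over $(k,m)$, treating the direct sum over $(k,m)$ with weights $\pi$ as one Hilbert space. This gives $\langle SFS\rangle_\pi\ge \langle S\rangle_\pi^2/\langle F^{-1}\rangle_\pi$. Since $F^{-1}=\Id+D_k/\gamma$, the denominator is $(\gamma+d)/\gamma$ with $d=\langle D\rangle_\pi$, and the numerator is $(\Delta+d)^2$. Because $\gamma\le\Delta$, this is at least $\gamma(\Delta+d)$, which is at least $\gamma(\Delta+d)\cdot\frac{\Delta+d}{\gamma+d}\cdot\frac{\gamma+d}{\Delta+d}$; more simply, $(\Delta+d)^2/(\gamma+d)\ge \Delta+d$. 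Hence $\langle SFS\rangle_\pi\ge\gamma\langle S\rangle_\pi$.

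The main obstacle is the negative term $\langle RFR\rangle_\pi\le\gamma\langle D\rangle_\pi$, which needs two ingredients. The first is the operator domination $-D_k\preceq R_{\mathrm{diff}}\preceq D_k$. I would try to prove it by writing $R_{\mathrm{diff}}=\frac12\E_r\chi_m(r)(V^B_{k,r}\otimes\Id-\Id\otimes V^C_{k,r})$ and observing that, for each $r$, $\pm\frac12(V^B\otimes\Id-\Id\otimes V^C)\preceq\frac12(\Id-V^B\otimes V^C)$. This holds because on the joint eigenspaces with eigenvalues $(\pm1,\pm1)$ the left side is $0$ or $\pm1$ exactly where the right side is $0$ or $1$. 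Averaging over $r$ then gives the claim. The second ingredient is removing the commutativity assumption. Given $-D\preceq R\preceq D$ with $D\succeq 0$, I would show $RFR\preceq \gamma D$ where $F=\gamma(D+\gamma)^{-1}$. Equivalently, $R(D+\gamma)^{-1}R\preceq D$. Consider the block matrix $\begin{pmatrix}D+\gamma & R\\ R& D\end{pmatrix}$. Its positivity, via the Schur complement, is exactly this inequality. Positivity follows by writing it as $\begin{pmatrix}\gamma&0\\0&0\end{pmatrix}+\begin{pmatrix}D&R\\R&D\end{pmatrix}$, where the latter is unitarily equivalent, via the Hadamard-type conjugation $\frac1{\sqrt2}\begin{pmatrix}1&1\\1&-1\end{pmatrix}$, to $\mathrm{diag}(D+R,D-R)\succeq0$. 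Combining the two bounds gives $\RePart\langle\Bop F\Cop\rangle_\pi\ge\gamma(\langle S\rangle_\pi-\langle D\rangle_\pi)=\gamma\Delta$, and hence $p_{\mathrm{Search}}\ge\gamma^2\Delta^2$.
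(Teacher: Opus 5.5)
Your proposal is correct and follows essentially the same route as the paper: the unitary freedom $W^C_{k,T}$ on Charlie's side with Cauchy--Schwarz and Jensen to reach the filtered overlap, the $S\pm R$ split with $\Delta=\langle S\rangle_\pi-\langle D\rangle_\pi$, the weighted Cauchy--Schwarz bound $\langle SFS\rangle_\pi\ge\langle S\rangle_\pi^2/\langle F^{-1}\rangle_\pi$, and the Schur-complement bound $RFR\preceq\gamma D$ from $D\pm R\succeq 0$ (your per-mask eigenspace check is the paper's observation that $D+R=2\Gamma_{+,-}$ and $D-R=2\Gamma_{-,+}$ are positive). Your closing step $(\Delta+d)^2/(\gamma+d)\ge\Delta+d$ is the right way to finish; the paper's displayed intermediate inequality $\gamma(\Delta+p_{\neq})^2/(\gamma+p_{\neq})\ge\Delta(\gamma+p_{\neq})$ fails in general (when $\gamma<\Delta$ it needs $\gamma\Delta\ge p_{\neq}^2$), although the paper's final conclusion $\ge\gamma\Delta$ still holds.
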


For a high level (but detailed) explanation of what this reduction is accomplishing, see the Technical Overview first (Section~\ref{sec:tech-overview}).

\begin{remark}[GL reduction for monogamy-of-entanglement games]
\label{rem:1}
We remark that this reduction, and its analysis, are essentially unchanged for \emph{monogamy-of-entanglement} games. In a monogamy game, it is $\mathcal{A}$ who first prepares a state on registers $ABC$, sending $A$ to the challenger, and $B$ and $C$ to Bob and Charlie respectively. Then, the challenger measures $A$ in a basis determined by the key $k$, obtaining an outcome $m$. Everything else is identical. One can then analogously define a decision monogamy game $\mathsf{GL}(G)$ from a search monogamy game $G$. The GL reduction for monogamy games is analogous to Reduction~\ref{alg:gl-reduction}: keep $\mathcal{A}$ unchanged, and have Bob and Charlie act exactly as in Reduction~\ref{alg:gl-reduction}. The analysis of this reduction is identical, and achieves the same guarantee, 
since the analysis only needs to consider the state on $BC$ and the respective operations performed by Bob and Charlie on it.
\end{remark}

\section{Analysis of the reduction, part 1: the ``filtered'' overlap}
\label{sec:bridge}
In this section, we begin the analysis of Reduction~\ref{alg:gl-reduction}. We will formalize the high-level argument from the Technical Overview, lower-bounding the probability of successful joint extraction in terms of an appropriately ``filtered'' overlap. We fix a strategy for $\mathsf{GL}(G)$, and use the notation introduced at the start of Section~\ref{sec:reduction} to describe it.

For a key $k$ and a mask $t\in\{0,1\}^n$, define
\begin{equation}
  \Jop_{k,t}
  =
  V^B_{k,t}\otimes V^C_{k,t},
  \qquad
  \Jop_k
  =
  \E_t[\Jop_{k,t}].
  \label{eq:Jt-Gk}
\end{equation}
The operator $\Jop_k$ is Hermitian and satisfies
$\lVert \Jop_k\rVert_{\mathrm{op}}\leq 1$ since $V^B_{k,t}$ and $V^C_{k,t}$ are reflections.  For a finite sequence
$T=(t_1,\ldots,t_\ell)$, define the unitaries
\begin{equation}
  W^B_{k,T}
  =
  V^B_{k,t_\ell}\cdots V^B_{k,t_1},
  \qquad
  W^C_{k,T}
  =
  V^C_{k,t_\ell}\cdots V^C_{k,t_1}.
  \label{eq:sequence-unitaries}
\end{equation}
We let both unitaries be the identity when $T$ is empty. 

As before, we denote by $p_{\mathrm{Pred}}$ the winning probability of the strategy in $\mathsf{GL}(G)$, and let $\Delta=2p_{\mathrm{Pred}}-1$. For a reduction parameter $0<\gamma\leq\Delta$, recall that we defined $ \lambda_\gamma:=\frac1{1+2\gamma}$. Here, we also introduce the following operator, which will be crucial in our analysis: 
\begin{equation*}
  F_{\gamma,k}
  :=
  (1-\lambda_\gamma)
  \sum_{\ell=0}^{\infty}
  \lambda_\gamma^\ell \Jop_k^\ell,
  \label{eq:F-global-preview}
\end{equation*}
where $\Jop_k^0=\Id_{BC}$.  Since $0<\lambda_\gamma<1$ and
$\lVert \Jop_k\rVert\leq1$, the series converges in operator norm.

We have the following straightforward identities.
\begin{lemma}
\label{lem:sequence-power}
For $\ell \in \mathbb{N}$,
\[
  \E_{T:\,|T|=\ell}
  \left[
    W^B_{k,T}\otimes W^C_{k,T}
  \right]
  =
  \Jop_k^\ell.
\]
Consequently,
\[
  \E_T
  \left[
    W^B_{k,T}\otimes W^C_{k,T}
  \right]
  =
  F_{\gamma,k}.
\]
\end{lemma}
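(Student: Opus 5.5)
The plan is a direct computation: use the mixed-product property of tensor products, the independence of the masks in $T$, and then average over the geometric distribution of $L$.

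\emph{First identity.} Fix $\ell\geq 1$ and $T=(t_1,\ldots,t_\ell)$. By the definition in \eqref{eq:sequence-unitaries},
\[
W^B_{k,T}\otimes W^C_{k,T}
=(V^B_{k,t_\ell}\cdots V^B_{k,t_1})\otimes(V^C_{k,t_\ell}\cdots V^C_{k,t_1}).
\]
Since $(A_1A_2)\otimes(B_1B_2)=(A_1\otimes B_1)(A_2\otimes B_2)$, this equals $\Jop_{k,t_\ell}\cdots\Jop_{k,t_1}$. This regrouping works precisely because Bob and Charlie use the \emph{same} mask $t_i$ at each position. The $t_i$ are independent and uniform, and the product is multilinear in the factors. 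So the expectation of the product factorizes into the product of expectations, giving $\E_{t_\ell}[\Jop_{k,t_\ell}]\cdots\E_{t_1}[\Jop_{k,t_1}]=\Jop_k^\ell$. All sums here are finite, so no convergence issue arises. The case $\ell=0$ is immediate, since both sides equal $\Id_{BC}$.

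\emph{Second identity.} Condition on $L=\ell$. Given $L=\ell$, the sequence $T$ consists of $\ell$ independent uniform masks, so the first identity applies. The law of total expectation, with $\Pr[L=\ell]=(1-\lambda_\gamma)\lambda_\gamma^\ell$, then gives
\[
\E_T\bigl[W^B_{k,T}\otimes W^C_{k,T}\bigr]=\sum_{\ell\ge0}(1-\lambda_\gamma)\lambda_\gamma^\ell\,\Jop_k^\ell=F_{\gamma,k}.
\]

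\emph{Convergence.} The only mildly delicate point is exchanging the countable sum with the expectation. Each $W^B_{k,T}\otimes W^C_{k,T}$ is unitary, so it has operator norm $1$. The same holds for each $\Jop_k^\ell$: it has operator norm at most $1$. Therefore
\[
\sum_{\ell\ge0}(1-\lambda_\gamma)\lambda_\gamma^\ell\,\bigl\|\Jop_k^\ell\bigr\|_{\mathrm{op}}\le\sum_{\ell\ge0}(1-\lambda_\gamma)\lambda_\gamma^\ell=1.
\]
So the series converges absolutely in operator norm, which justifies the exchange (for instance, by dominated convergence applied entrywise in finite dimension).
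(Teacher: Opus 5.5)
Your proposal is correct and follows the paper's proof essentially verbatim. Like the paper, you regroup $W^B_{k,T}\otimes W^C_{k,T}$ as $\Jop_{k,t_\ell}\cdots\Jop_{k,t_1}$ via the mixed-product property, factor the expectation using independence of the masks, and average over the geometric law of $L$; your absolute-convergence justification for exchanging the countable sum with the expectation is a detail the paper leaves implicit.
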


\begin{proof}
Let $X_t:=V^B_{k,t}\otimes V^C_{k,t}$.  The definitions of the
synchronized-sequence unitaries in
Equation~\eqref{eq:sequence-unitaries}, including their matching order, give
\[
  W^B_{k,T}\otimes W^C_{k,T}
  =
  X_{t_\ell}\cdots X_{t_1}.
\]
Successively averaging the independent variables
$t_\ell,\ldots,t_1$ yields
\[
  \E_{T: |T| = \ell}[X_{t_\ell}\cdots X_{t_1}]
  =
  \bigl(\E_t X_t\bigr)^\ell
  =
  \Jop_k^\ell.
\]
Averaging this conditional identity over the geometric length law
$\Pr[L=\ell]=(1-\lambda_\gamma)\lambda_\gamma^\ell$ yields
$F_{\gamma,k}$, as desired.
\end{proof}

Now, for fixed $(k,m,T)$, we can apply Lemma~\ref{lem:GL-expansion} to deduce that the unnormalized leftover state on $BC$ conditioned on both Bob and Charlie's outputs being $m$ is 
$$
  \ket{\eta_{k,m,T}}
  =
  \left(
    \widehat V^B_k(m)W^B_{k,T}
    \otimes \widehat V^C_k(m)
  \right)
  \ket{\psi_{k,m}}.
$$
Thus, the correct-output probability is $\|\eta_{k,m,T}\|^2$. 

Now, as informally outlined at the end of Section~\ref{sec:reduction}, a key step before applying Cauchy-Schwarz is to leverage a unitary degree of freedom on Charlie's side. We let 
$$\ket{\eta'_{k,m,T}} =  \left(
    \widehat V^B_k(m)W^B_{k,T}
    \otimes W^C_{k,T} \widehat V^C_k(m)
  \right)
  \ket{\psi_{k,m}} \,.
$$
Then, we have 
\begin{align}
  \|\eta_{k,m,T}\|^2 &= \|\eta'_{k,m,T}\|^2  \nonumber \\
  &\geq
  |\braket{\psi_{k,m}}{\eta'_{k,m,T}}|^2.
  \label{eq:norm-return-bound}
\end{align}

\begin{lemma}
\label{lem:success-overlap}
Let $p_{\mathrm{Search}}$ denote the success probability of the strategy obtained from the
reduction of Reduction~\ref{alg:gl-reduction}.
Then,
\begin{equation*}
  p_{\mathrm{Search}}
  \geq
  \left|
    \E_{(k,m) \leftarrow \pi}
  \bra{\psi_{k,m}}
  \big(\widehat V^B_k(m)\otimes\Id_C \big)
 F_{\gamma,k} \big(
    \Id_B\otimes\widehat V^C_k(m)\big)
  \ket{\psi_{k,m}} 
  \right|^2 \,,
  \label{eq:success-filtered-overlap}
\end{equation*}
\end{lemma}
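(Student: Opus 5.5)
The plan is to start from the exact expression for the success probability of Reduction~\ref{alg:gl-reduction}, lower-bound it term by term using \eqref{eq:norm-return-bound}, and then apply Jensen's inequality twice, once over $T$ and once over $(k,m)$.

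First, by Lemma~\ref{lem:GL-expansion} applied separately on Bob's side (with input state $W^B_{k,T}$ applied to $\ket{\psi_{k,m}}$) and on Charlie's side, the probability that both output $m$ given $(k,m,T)$ is $\|\eta_{k,m,T}\|^2$. Averaging over $(k,m)\leftarrow\pi$ and over $T$ (with $|T|$ geometric as in \eqref{eq:L-distribution} and uniform masks) gives
\[
p_{\mathrm{Search}}=\E_{(k,m)}\E_T\|\eta_{k,m,T}\|^2 .
\]
Since $W^C_{k,T}$ is unitary on $C$, we have $\|\eta_{k,m,T}\|=\|\eta'_{k,m,T}\|$. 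Cauchy--Schwarz then gives \eqref{eq:norm-return-bound}, so
\[
p_{\mathrm{Search}}\geq \E_{(k,m)}\E_T\bigl|\braket{\psi_{k,m}}{\eta'_{k,m,T}}\bigr|^2 .
\]

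Next, I would rewrite the overlap. Because the operators act on different tensor factors, $\widehat V^B_k(m)W^B_{k,T}\otimes W^C_{k,T}\widehat V^C_k(m)$ factors as
\[
\bigl(\widehat V^B_k(m)\otimes\Id_C\bigr)\bigl(W^B_{k,T}\otimes W^C_{k,T}\bigr)\bigl(\Id_B\otimes\widehat V^C_k(m)\bigr).
\]
Hence $\braket{\psi_{k,m}}{\eta'_{k,m,T}}=\bra{\psi_{k,m}}(\widehat V^B_k(m)\otimes\Id)(W^B_{k,T}\otimes W^C_{k,T})(\Id\otimes\widehat V^C_k(m))\ket{\psi_{k,m}}$. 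Two facts are useful here. First, $\widehat V^B_k(m)$ is Hermitian, being a real combination of Hermitian reflections, so the bra side is simply $\bra{b_{k,m}}$. Second, this expression is linear in the operator $W^B_{k,T}\otimes W^C_{k,T}$.

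Finally, I apply convexity of $z\mapsto|z|^2$, i.e.\ $\E|Z|^2\geq|\E Z|^2$, to the joint average over $(k,m,T)$. Linearity lets the expectation over $T$ pass inside to the middle operator, and Lemma~\ref{lem:sequence-power} gives $\E_T[W^B_{k,T}\otimes W^C_{k,T}]=F_{\gamma,k}$. The result is exactly the claimed bound.

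The only delicate point is justifying the exchange of $\E_T$ with the operator expression, since $T$ ranges over an infinite set of lengths. This is fine because every term is bounded in norm by $1$: $\|\widehat V\|\leq1$, the $W$'s are unitary, and the geometric series converges in operator norm. Dominated convergence, or simply absolute convergence of $\sum_\ell(1-\lambda_\gamma)\lambda_\gamma^\ell\Jop_k^\ell$, therefore justifies the interchange.
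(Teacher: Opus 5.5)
Your proposal is correct and follows essentially the same route as the paper: you use unitary invariance on Charlie's side to insert $W^C_{k,T}$, apply Cauchy--Schwarz against $\ket{\psi_{k,m}}$ and then Jensen over $(k,m,T)$, and use Lemma~\ref{lem:sequence-power} to replace $\E_T[W^B_{k,T}\otimes W^C_{k,T}]$ by $F_{\gamma,k}$. Your justification of exchanging $\E_T$ with the operator expression via norm-boundedness fills in a small detail the paper leaves implicit.
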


\begin{proof}
Averaging Equation~\eqref{eq:norm-return-bound} and applying Jensen's inequality
gives
\begin{align}
  p_{\mathrm{Search}}
  &\geq
  \E_{k,m,T}|\braket{\psi_{k,m}}{\eta'_{k,m,T}}|^2
  \nonumber\\
  &\geq
  \left|
    \E_{k,m,T} \braket{\psi_{k,m}}{\eta'_{k,m,T}}
  \right|^2.
  \label{eq:Jensen-amplitude}
\end{align}
By Lemma~\ref{lem:sequence-power}, the average inside the absolute value is
\begin{align*}
  \E_{k,m,T} \braket{\psi_{k,m}}{\eta'_{k,m,T}}
  &= \E_{k,m,T} \bra{\psi_{k,m}} \left(
    \widehat V^B_k(m)W^B_{k,T}
    \otimes W^C_{k,T} \widehat V^C_k(m)
  \right)
  \ket{\psi_{k,m}} \\
  &=\E_{k,m}
  \bra{\psi_{k,m}}
  \big(\widehat V^B_k(m)\otimes\Id_C \big)
 F_{\gamma,k} \big(
    \Id_B\otimes\widehat V^C_k(m)\big)
  \ket{\psi_{k,m}} \,,
\end{align*}
as desired.
\end{proof}

It will be convenient for us to rewrite the lower bound in Equation~\eqref{eq:Jensen-amplitude} more compactly as a filtered ``trace overlap'' (squared) in the form $\left| \Tr(\varrho\Bop F_\gamma\Cop)\right|^2$, for an appropriately chosen  mixed state $\rho$ and operators $\Bop, F_\gamma, \Cop$. Define the following block-diagonal Hermitian operators:
\begin{align}
  \varrho
  &:=
  \bigoplus_{k,m}
  \pi(k,m)\cdot
  \proj{\psi_{k,m}}^{BC}, \nonumber\\
  \Bop
  &:=
  \bigoplus_{k,m}
  \left(
    \widehat V^B_k(m)\otimes\Id_C
  \right), \nonumber\\
  \Cop
  &:=
  \bigoplus_{k,m}
  \left(
    \Id_B\otimes\widehat V^C_k(m)
  \right), \nonumber\\
  \Jop
  &:=
  \bigoplus_{k,m} \, \Jop_k \,,
  \label{eq:block-operators}
\end{align}
where $\Jop_k$ as defined in Equation~\eqref{eq:Jt-Gk}. Note that all of these operators have norm at most $1$. Note also that the $(k,m)$ block of $\Jop$ depends on $k$ but not on $m$. Finally, define also
\begin{equation}
\label{eq:block-operators-F}
  F_\gamma
  =
  (1-\lambda_\gamma)
  \sum_{\ell=0}^{\infty}
  \lambda_\gamma^\ell\Jop^\ell
  =
  \bigoplus_{k,m} F_{\gamma,k} \,.
\end{equation}
Then, we have the following.
\begin{corollary}[Lower bounding search success by a ``filtered'' overlap]
\label{cor:success-overlap-trace}
Let $p_{\mathrm{Search}}$ denote the success probability of the strategy obtained from the
reduction of Reduction~\ref{alg:gl-reduction}.
Then,
\begin{equation*}
  p_{\mathrm{Search}}
  \geq \left|\Tr(\varrho\Bop F_\gamma\Cop)\right|^2 \,.
\end{equation*}
\end{corollary}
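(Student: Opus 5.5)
This corollary is a direct repackaging of Lemma~\ref{lem:success-overlap}, so the plan is to show that the quantity inside the absolute value there equals $\Tr(\varrho\Bop F_\gamma\Cop)$. No new inequality is needed, only the block-diagonal bookkeeping.

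First, all the operators in \eqref{eq:block-operators} and \eqref{eq:block-operators-F} act on the same direct sum $\bigoplus_{k,m}\mathcal H_{BC}$, and each is block diagonal with respect to it. Products of block-diagonal operators are block diagonal, with blocks equal to the products of the corresponding blocks. Hence
\[
\varrho\Bop F_\gamma\Cop=\bigoplus_{k,m}\pi(k,m)\,\proj{\psi_{k,m}}\big(\widehat V^B_k(m)\otimes\Id_C\big)F_{\gamma,k}\big(\Id_B\otimes\widehat V^C_k(m)\big).
\]
For the $F_\gamma$ block, I use that $\Jop^\ell=\bigoplus_{k,m}\Jop_k^\ell$. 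The geometric series then converges blockwise in operator norm, since $\lVert\Jop_k\rVert\le 1$ and $\lambda_\gamma<1$, so the $(k,m)$ block of $F_\gamma$ is $F_{\gamma,k}$, as already recorded in \eqref{eq:block-operators-F}.

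Next, the trace of a block-diagonal operator is the sum of the traces of its blocks. Using $\Tr(\proj{\psi}X)=\bra{\psi}X\ket{\psi}$, this gives
\[
\Tr(\varrho\Bop F_\gamma\Cop)=\sum_{k,m}\pi(k,m)\bra{\psi_{k,m}}\big(\widehat V^B_k(m)\otimes\Id_C\big)F_{\gamma,k}\big(\Id_B\otimes\widehat V^C_k(m)\big)\ket{\psi_{k,m}}.
\]
The right-hand side is exactly the expectation over $(k,m)\leftarrow\pi$ appearing in Lemma~\ref{lem:success-overlap}. Substituting it into that lemma gives the claimed bound.

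The only step needing any care is the convergence and blockwise identification of the infinite series defining $F_\gamma$, and this is immediate from the norm bound. If the key space is infinite, the direct sum should be read with $\pi$ summable, and the trace is still absolutely convergent because every block has norm at most $1$. The rest is routine.
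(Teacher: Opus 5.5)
Your proposal is correct and follows the paper's route: the paper also obtains this corollary by rewriting the $\pi$-weighted expectation in Lemma~\ref{lem:success-overlap} as the trace of the block-diagonal product $\varrho\Bop F_\gamma\Cop$, which is exactly the blockwise bookkeeping you carry out. Your remark about infinite key spaces is unnecessary, since Definition~\ref{def:search-game} takes $\cK$ to be finite.
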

\begin{proof}
This follows immediately from Lemma~\ref{lem:success-overlap}, by rewriting the inner product as a trace and a direct computation.
\end{proof}

\section{Analysis of the reduction, part 2: lower bounding the ``filtered overlap'' by the original winning probability}
\label{sec:overlap-bound}
Having lower-bounded the winning probability $p_{\mathrm{Search}}$ of the search strategy by (the square of) the filtered overlap $\Tr(\varrho\Bop F_\gamma\Cop)$ in Corollary~\ref{cor:success-overlap-trace}, we are left with lower-bounding $\Tr(\varrho\Bop F_\gamma\Cop)$ itself. We will show the following.

\begin{lemma}
\label{lem:2}
Let $\rho, \Bop, \Cop, F_{\gamma}$ be as defined in \eqref{eq:block-operators} and \eqref{eq:block-operators-F}. Let $p_{\mathrm{Pred}}$ be the winning probability of the strategy for $\mathsf{GL}(G)$. Let $\Delta=2p_{\mathrm{Pred}}-1$, and assume $\Delta>0$. Let $0<\gamma\leq\Delta$. Then, 
$$ \left|\Tr(\varrho\Bop F_\gamma\Cop)\right| \geq \gamma\Delta \,.$$
\end{lemma}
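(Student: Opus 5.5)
I will follow the route sketched in the Technical Overview. Everything is made rigorous at the level of the block operators \eqref{eq:block-operators}. Write $\langle X\rangle:=\Tr(\varrho X)$, $D:=\frac12(\Id-\Jop)$, $S:=\frac12(\Bop+\Cop)$ and $R:=\frac12(\Bop-\Cop)$. All of these are Hermitian, since $\widehat V^B_k(m)$ and $\widehat V^C_k(m)$ are real combinations of reflections. The filter $F_\gamma$ is a function of $\Jop$:
\[
F_\gamma=(1-\lambda_\gamma)(\Id-\lambda_\gamma\Jop)^{-1}=f(D),\qquad f(d)=\frac{\gamma}{\gamma+d}.
\]
Here I use $1-\lambda_\gamma=2\gamma\lambda_\gamma$ and $\Id-\lambda_\gamma\Jop=\lambda_\gamma(2\gamma\Id+2D)$. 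Since $0\preceq D\preceq\Id$, we get $F_\gamma\succ0$ and $F_\gamma^{-1}=\Id+D/\gamma$.

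\textbf{Step 1 (decomposition).} Expanding $\Bop=S+R$ and $\Cop=S-R$ gives
\[
\Tr(\varrho\Bop F_\gamma\Cop)=\langle SF_\gamma S\rangle-\langle RF_\gamma R\rangle+\bigl(\langle RF_\gamma S\rangle-\langle SF_\gamma R\rangle\bigr).
\]
The bracketed term is purely imaginary, because the two summands are complex conjugates ($\varrho$, $S$, $R$, $F_\gamma$ all Hermitian and $\varrho\succeq0$). Hence $|\Tr(\varrho\Bop F_\gamma\Cop)|\ge \langle SF_\gamma S\rangle-\langle RF_\gamma R\rangle$.

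\textbf{Step 2 (advantage identity).} Expanding the correct-answer projectors, exactly as in the overview computation of $p_{\mathrm{Pred}}$, gives
\[
p_{\mathrm{Pred}}=\tfrac14\langle \Id+\Bop+\Cop+\Jop\rangle,\qquad\text{hence}\qquad \Delta=\langle S\rangle-\langle D\rangle.
\]

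\textbf{Step 3 (positive term).} Apply Cauchy--Schwarz with $\varrho^{1/2}$ and $F_\gamma^{\pm1/2}$:
\[
|\langle S\rangle|^2=\bigl|\Tr\bigl((F_\gamma^{-1/2}\varrho^{1/2})^\dagger F_\gamma^{1/2}S\varrho^{1/2}\bigr)\bigr|^2\le \langle F_\gamma^{-1}\rangle\,\langle SF_\gamma S\rangle .
\]
Now $\langle F_\gamma^{-1}\rangle=(\gamma+d)/\gamma$ with $d:=\langle D\rangle\ge0$, and $\langle S\rangle=\Delta+d\ge\gamma+d>0$. Therefore
\[
\langle SF_\gamma S\rangle\ge \frac{\gamma(\Delta+d)^2}{\gamma+d}\ge\gamma(\Delta+d).
\]

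\textbf{Step 4 (negative term, main obstacle).} I need $\langle RF_\gamma R\rangle\le\gamma\langle D\rangle$, which together with Steps 1--3 yields $\gamma(\Delta+d)-\gamma d=\gamma\Delta$. This breaks into two sub-claims.

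The first sub-claim is the operator domination $-D\preceq R\preceq D$, proved blockwise for fixed $(k,m)$. The plan is to use the Fourier identity
\[
\widehat V^B_k(m)\otimes\Id-\Id\otimes\widehat V^C_k(m)=\E_r\chi_m(r)\bigl(V^B_{k,r}\otimes\Id-\Id\otimes V^C_{k,r}\bigr).
\]
For each fixed $r$, the operators $A=V^B_{k,r}\otimes\Id$ and $C=\Id\otimes V^C_{k,r}$ are commuting reflections. Their joint eigenvalues $(a,c)\in\{\pm1\}^2$ give $\pm\frac12(a-c)\le\frac12(1-ac)$, so $\pm\frac12(A-C)\preceq\frac12(\Id-AC)=\frac12(\Id-\Jop_{k,r})$. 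Averaging over $r$, with signs $\chi_m(r)$ absorbed by the $\pm$, gives $\pm R\preceq D$.

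The second sub-claim is that $-D\preceq R\preceq D$ implies $RF_\gamma R\preceq DF_\gamma D$ when $R$ and $D$ need not commute. I would not try to prove this directly. Instead I will bound $RF_\gamma R\preceq\gamma D$ through a block-matrix/Schur-complement argument. Note that $\gamma D-DF_\gamma D=\gamma D-\gamma D(\gamma+D)^{-1}D=\gamma^2 D(\gamma+D)^{-1}\succeq0$. It therefore suffices to show
\[
R(\gamma+D)^{-1}R\preceq D,\qquad\text{equivalently}\qquad \begin{pmatrix}D& R\\ R&\gamma+D\end{pmatrix}\succeq0 .
\]
Given $D\pm R\succeq0$, write
\[
\begin{pmatrix}D&R\\R&D\end{pmatrix}=\frac12\begin{pmatrix}\Id&\Id\\ \Id&-\Id\end{pmatrix}\begin{pmatrix}D+R&0\\0&D-R\end{pmatrix}\begin{pmatrix}\Id&\Id\\ \Id&-\Id\end{pmatrix}\succeq0 .
\]
Adding $0\oplus\gamma\Id$ keeps it PSD, and the Schur complement with respect to the invertible block $\gamma+D$ gives the claim. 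Conjugating by $\varrho$ yields $\langle RF_\gamma R\rangle\le\gamma\langle D\rangle$.

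I expect the non-commuting domination step, and making sure the blockwise operators are all Hermitian so that Step 1's imaginary cancellation is valid, to be the points needing the most care.
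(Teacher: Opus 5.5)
Your proof is correct and is essentially the paper's argument: the same $S/R$ decomposition of $\RePart\Tr(\varrho\Bop F_\gamma\Cop)$, the identity $\Delta=\Tr(\varrho S)-\Tr(\varrho D)$, the Cauchy--Schwarz bound with $F_\gamma^{-1}=\Id+D/\gamma$, and the Schur-complement bound $RF_\gamma R\preceq\gamma D$ from $D\pm R\succeq0$. Your joint-eigenvalue check on commuting reflections is just a restatement of the paper's $D+R=2\Gamma_{+,-}\succeq0$ and $D-R=2\Gamma_{-,+}\succeq0$. Your closing step $\frac{\gamma(\Delta+d)^2}{\gamma+d}\ge\gamma(\Delta+d)$ is also a clean way to finish, whereas the intermediate inequality $\frac{\gamma(\Delta+p_{\neq})^2}{\gamma+p_{\neq}}\ge\Delta(\gamma+p_{\neq})$ displayed in the paper is not valid in general for $\gamma<\Delta$ (though the paper's final conclusion is unaffected).
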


\noindent Together, Lemma~\ref{lem:2} and Corollary~\ref{cor:success-overlap-trace} prove Theorem~\ref{thm:parametric-reduction} (which completes the proof of Theorem~\ref{thm:main}). So, the rest of this section is devoted to the proof of Lemma~\ref{lem:2}. 

\subsection{Some setup and useful facts}
Recall the definition of $\rho, \Bop, \Cop, \Jop, F_{\gamma}$ from Equations~\eqref{eq:block-operators} and \eqref{eq:block-operators-F}. For $u,v\in\{+1,-1\}$, define
\[
  \Gamma_{u,v}
  :=
  \frac14
  \left(
    \Id+u\Bop+v\Cop+uv\Jop
  \right).
\]
Every $\Gamma_{u,v}$ is positive semidefinite.  Indeed, its $(k,m)$ block is
\[
  \E_t
  \left[
    \frac{\Id+u\cdot \chi_m(t) \cdot V^B_{k,t}}2
    \otimes
    \frac{\Id+v\cdot \chi_m(t) \cdot V^C_{k,t}}2
  \right],
\]
and each tensor factor is a projection since the $V^B_{k,t}$ and $V^C_{k,t}$ are reflections. The signs $u$ and
$v$ record whether Bob and Charlie are correct or incorrect. In particular, note that the winning probability of the strategy in $\mathsf{GL}(G)$ is exactly
\[
  p_{\mathrm{Pred}}
  =
  \Tr(\varrho\Gamma_{+,+}).
\]
Set
\[
  S:=\frac12(\Bop+\Cop),
  \qquad
  D:=\frac12(\Id-\Jop),
  \qquad
  R:=\frac12(\Bop-\Cop).
\]
Letting $\Delta=2p_{\mathrm{Pred}}-1$, and using the expression for
$p_{\mathrm{Pred}}$ above, and $\Tr(\varrho)=1$, we can write the
advantage (over $\frac12$) directly in terms of $S$ and $D$:
\begin{align}
  \Delta
  &=
  2p_{\mathrm{Pred}}-1
  \notag\\
  &=
  \frac12\Tr\!\left[
    \varrho(\Id+\Bop+\Cop+\Jop)
  \right]-1
  \notag\\
  &=
  \frac12\Tr\!\left[
    \varrho(\Bop+\Cop+\Jop-\Id)
  \right]
  \notag\\
  &=
  \Tr\!\left(\varrho(S-D)\right).
  \label{eq:decision-identity}
\end{align}
This identity nicely separates the sum $S$ of the two local averages $\Bop$ and $\Cop$ from
the ``disagreement'' operator $D$.

Since $\Jop\preceq\Id$, we have $D\succeq0$.  Importantly, we also have
\begin{equation}
\label{eq:DplusR}
  D+R=2\Gamma_{+,-}\succeq0,
  \qquad
  D-R=2\Gamma_{-,+}\succeq0.
\end{equation}

\subsection{A key idea: the geometric filter as an inverse}
Recall that
\[
  F_\gamma
  =
  (1-\lambda_\gamma)
  \sum_{\ell=0}^{\infty}\lambda_\gamma^\ell\Jop^\ell,
  \qquad
  \lambda_\gamma=\frac{1}{1+2\gamma},
  \qquad
  D=\frac12(\Id-\Jop).
\]
Since $\|\Jop\|\leq1$ and $0<\lambda_\gamma<1$, we have
$\|\lambda_\gamma\Jop\|<1$, which implies that the geometric series converges in operator norm. In particular, the classic geometric sum identity, extended to sums of commuting operators, gives
\[
  \sum_{\ell=0}^{\infty}
  \lambda_\gamma^\ell\Jop^\ell
  =
  (\Id-\lambda_\gamma\Jop)^{-1}.
\]
Next, substituting $\lambda_\gamma=(1+2\gamma)^{-1}$ gives
\[
  1-\lambda_\gamma
  =
  \frac{2\gamma}{1+2\gamma},
  \qquad
  \Id-\lambda_\gamma\Jop
  =
  \frac{1}{1+2\gamma}
  \bigl((1+2\gamma)\Id-\Jop\bigr).
\]
Combining these identities and then using
$(1+2\gamma)\Id-\Jop=(\Id-\Jop)+2\gamma\Id
=D+2\gamma\Id$, we obtain
\begin{align}
  F_\gamma
  &=
  (1-\lambda_\gamma)
  (\Id-\lambda_\gamma\Jop)^{-1}
  \nonumber\\
  &=
  2\gamma
  \bigl((1+2\gamma)\Id-\Jop\bigr)^{-1}
  \nonumber\\
  &=
  \gamma(D+\gamma\Id)^{-1}.
  \label{eq:F-resolvent}
\end{align}
Because $D\succeq0$, we have $D+\gamma\Id \succ 0$.
Moreover, $D+\gamma\Id\succeq\gamma\Id$, so the last expression also
shows that $0\prec F_\gamma\preceq\Id$. We are now ready to tackle the proof that $\left|\Tr(\varrho\Bop F_\gamma\Cop)\right| \geq \gamma\Delta$. In fact, we will show the stronger statement that $\RePart\Tr(\varrho\Bop F_\gamma\Cop)$ already satisfies the bound.
\begin{lemma}
\label{lem:filtered-overlap}
  $\RePart\Tr(\varrho\Bop F_\gamma\Cop)
  \geq
  \gamma\Delta$.
\end{lemma}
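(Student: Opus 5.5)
Write $F:=F_\gamma$. The plan is to split $\Bop=S+R$ and $\Cop=S-R$, bound the two resulting terms separately, and combine them using the identity \eqref{eq:decision-identity}. Note that $S$, $R$, $F$, $\varrho$ are all Hermitian. Expanding,
\[
\Tr(\varrho\Bop F\Cop)=\Tr(\varrho SFS)-\Tr(\varrho RFR)+\Tr(\varrho RFS)-\Tr(\varrho SFR).
\]
Since $\Tr(\varrho SFR)=\overline{\Tr(\varrho RFS)}$, the last two terms combine to something purely imaginary. Hence
\[
\RePart\Tr(\varrho\Bop F\Cop)=\Tr(\varrho SFS)-\Tr(\varrho RFR).
\]
The target is then $\Tr(\varrho SFS)\ge\gamma\Tr(\varrho S)$ and $\Tr(\varrho RFR)\le\gamma\Tr(\varrho D)$. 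Subtracting these and using $\Tr(\varrho(S-D))=\Delta$ gives the claim.

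\emph{The $S$ term.} Use Cauchy--Schwarz with respect to the positive semidefinite form $\langle X,Y\rangle=\Tr(\varrho X^\dagger Y)$. Write $\Tr(\varrho S)=\Tr\bigl(\varrho\, F^{-1/2}F^{1/2}S\bigr)$, which gives
\[
|\Tr(\varrho S)|^2\le\Tr(\varrho F^{-1})\,\Tr(\varrho SFS).
\]
By \eqref{eq:F-resolvent}, $F^{-1}=(D+\gamma\Id)/\gamma$, so $\Tr(\varrho F^{-1})=(\gamma+d)/\gamma$ with $d:=\Tr(\varrho D)\ge0$. Also $\Tr(\varrho S)=\Delta+d>0$. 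Therefore
\[
\Tr(\varrho SFS)\ge\frac{\gamma(\Delta+d)^2}{\gamma+d}\ge\gamma(\Delta+d),
\]
where the last step uses $\Delta\ge\gamma$, so that $\Delta+d\ge\gamma+d$. This equals $\gamma\Tr(\varrho S)$.

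\emph{The $R$ term, which I expect to be the main obstacle.} The claim is the operator inequality $RFR\preceq\gamma D$. It would be immediate if $R$ and $D$ commuted, but they need not. By \eqref{eq:DplusR}, $-D\preceq R\preceq D$. My first attempt is a block-matrix/Schur-complement argument. The inequality $RFR\preceq\gamma D$ is equivalent to the positivity of
\[
M:=\begin{pmatrix}\gamma D & R\\ R & F^{-1}\end{pmatrix}=\begin{pmatrix}\gamma D & R\\ R & \Id+D/\gamma\end{pmatrix}\succeq0,
\]
where the equivalence is the Schur complement with respect to the invertible block $F^{-1}\succ0$. To check $M\succeq0$, I would decompose
\[
M=\begin{pmatrix}\gamma D&0\\0&D/\gamma\end{pmatrix}+\begin{pmatrix}0&R\\R&\Id\end{pmatrix}
\]
and handle the two pieces separately. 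That decomposition does not close directly, however, since the second piece is not positive semidefinite. A cleaner route is to test $M$ on a vector $(x,y)$. One needs
\[
\gamma\langle x,Dx\rangle+2\RePart\langle x,Ry\rangle+\langle y,y\rangle+\tfrac1\gamma\langle y,Dy\rangle\ge0.
\]
Write $R=\tfrac12(P_+-P_-)$ with $P_+:=D+R\succeq0$ and $P_-:=D-R\succeq0$, so that $D=\tfrac12(P_++P_-)$. For any PSD $P$, the $2\times2$ block matrix $\begin{pmatrix}\gamma P & \pm P\\ \pm P & P/\gamma\end{pmatrix}$ is PSD, since it equals $P\otimes\begin{pmatrix}\gamma&\pm1\\\pm1&1/\gamma\end{pmatrix}$ and the right factor is PSD with determinant $0$. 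Taking half of the $P_+$ matrix with sign $+$ and half of the $P_-$ matrix with sign $-$ and adding them gives
\[
\begin{pmatrix}\gamma D & R\\ R & D/\gamma\end{pmatrix}\succeq0.
\]
Adding $\begin{pmatrix}0&0\\0&\Id\end{pmatrix}\succeq0$ then yields $M\succeq0$, so the $\Id$ block is not even needed. The Schur complement then gives $RFR\preceq\gamma D$. Taking the trace against $\varrho\succeq0$ gives $\Tr(\varrho RFR)\le\gamma d$.

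Combining the two bounds gives $\RePart\Tr(\varrho\Bop F_\gamma\Cop)\ge\gamma(\Delta+d)-\gamma d=\gamma\Delta$.
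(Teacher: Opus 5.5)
Your proposal is correct and follows the paper's proof essentially step for step. You use the same $S\pm R$ split with the cross terms dropping out of the real part, the same Cauchy--Schwarz bound with $\Tr(\varrho F_\gamma^{-1})=1+\Tr(\varrho D)/\gamma$, and the same Schur-complement proof of $RF_\gamma R\preceq\gamma D$: your matrix $\begin{pmatrix}\gamma D & R\\ R & D/\gamma\end{pmatrix}$ is the paper's $\begin{pmatrix}D & R\\ R & D\end{pmatrix}$ conjugated by $\mathrm{diag}(\sqrt{\gamma}\,\Id,\Id/\sqrt{\gamma})$, and the paper's unitary equivalence to $(D+R)\oplus(D-R)$ is exactly your $P_\pm$ decomposition.

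Your closing step $\frac{\gamma(\Delta+d)^2}{\gamma+d}\ge\gamma(\Delta+d)$ is the one used in the technical overview. It is preferable to the main-body chain through $\Delta(\gamma+p_{\neq})$, whose first inequality is equivalent to $(\Delta-\gamma)(\gamma\Delta-p_{\neq}^2)\ge0$ and can therefore fail when $\gamma<\Delta$.
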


\begin{proof}
First, since
$\Bop=S+R$ and $\Cop=S-R$, we have
\begin{align*}
  \Tr(\varrho\Bop F_\gamma\Cop)
  ={}&
  \Tr(\varrho SF_\gamma S)
  -\Tr(\varrho SF_\gamma R)
  \\
  &+
  \Tr(\varrho RF_\gamma S)
  -\Tr(\varrho RF_\gamma R).
\end{align*}
Since $S$, $R$, and $F_\gamma$ are Hermitian, the first and last terms are
real, while
\[
  \Tr(\varrho RF_\gamma S)
  =
  \overline{\Tr(\varrho SF_\gamma R)}.
\]
Thus the mixed terms are purely imaginary, and taking real parts gives
\begin{equation*}
\label{eq:s-r}
  \RePart\Tr(\varrho\Bop F_\gamma\Cop)
  =
  \Tr\!\left[\varrho SF_\gamma S\right] -\Tr\!\left[\varrho RF_\gamma R\right].
\end{equation*}
We will spend the rest of the proof bounding $\Tr\!\left[\varrho SF_\gamma S\right]$ and $\Tr\!\left[\varrho RF_\gamma R\right]$ separately. 

We first lower-bound $\Tr\!\left[\varrho SF_\gamma S\right]$. Since $F_\gamma\succ0$ (by \eqref{eq:F-resolvent}), Cauchy--Schwarz for the Hilbert-Schmidt inner product and the cyclic property of the trace give 
\begin{align*}
  \bigl|\Tr(\varrho S)\bigr|^2
  &=
  \left|\Tr\left((F_\gamma^{-1/2} \rho^{1/2}) \cdot (F_\gamma^{1/2}\, S \, \rho^{1/2})^\dagger\right)\right|^2
  \\
  &=
  \left|
    \left\langle
      F_\gamma^{-1/2}\varrho^{1/2},
      F_\gamma^{1/2}S\varrho^{1/2}
    \right\rangle_{\mathrm{HS}}
  \right|^2
  \\
  &\leq
  \Tr(\varrho F_\gamma^{-1})
  \Tr(\varrho SF_\gamma S).
\end{align*}
Rearranging gives
\begin{equation}
\label{eq:s}
  \Tr(\varrho SF_\gamma S)
  \geq
  \frac{\bigl|\Tr(\varrho S)\bigr|^2}
       {\Tr(\varrho F_\gamma^{-1})}
\end{equation}
We will simplify both numerator and denominator on the RHS. Let $p_{\neq}$ denote the probability that Bob and Charlie output different bits. Let $X$ and $Y$ denote the $\pm1$-valued random variables for Bob and Charlie respectively guessing their own inner product correctly (where $+1$ denotes a correct guess). So, for fixed
$(k,m,t)$, we have
\begin{align*}
  \E[XY\mid k,m,t]
  &=
  \bra{\psi_{k,m}}
  \left(
    \chi_m(t)V^B_{k,t}\otimes
    \chi_m(t)V^C_{k,t}
  \right)
  \ket{\psi_{k,m}}
  \\
  &=
  \bra{\psi_{k,m}}
  \left(
    V^B_{k,t}\otimes V^C_{k,t}
  \right)
  \ket{\psi_{k,m}},
\end{align*}
Averaging over $(k,m,t)$
and using the definitions of $\varrho$ and $\Jop$ (which can be thought of as an ``agreement'' operator), gives
\[
  \E[XY]=\Tr(\varrho\Jop).
\]
Moreover, $XY=+1$ with probability $1-p_{\neq}$ and $XY=-1$ with probability
$p_{\neq}$, so $\E[XY]=1-2p_{\neq}$.  Recalling that $D=\frac 12(\Id-\Jop)$, we therefore obtain
\begin{equation}
\label{eq:rhoD}
  \Tr(\varrho D)
  =
  \frac12\bigl(1-\Tr(\varrho\Jop)\bigr)
  =
  \frac12\bigl(1-\E[XY]\bigr)
  =
  p_{\neq}.
\end{equation}
Combining Equation~\eqref{eq:decision-identity} with
the above gives
\begin{equation}
\label{eq:trace-rhoS}
  \Tr(\varrho S)=\Delta+p_{\neq} \,. 
\end{equation}
By Equation~\eqref{eq:F-resolvent} and the invertibility of
$D+\gamma\Id$,
\[
  F_\gamma^{-1}
  =
  \frac{1}{\gamma}(D+\gamma\Id)
  =
  \Id+\frac{D}{\gamma}.
\]
Hence, since $\varrho$ is a state and $\Tr(\varrho D)=p_{\neq}$,
\begin{equation}
\label{eq:5}
  \Tr(\varrho F_\gamma^{-1})
  =
  1+\frac{p_{\neq}}{\gamma}.
\end{equation}
Substituting \eqref{eq:trace-rhoS} and \eqref{eq:5} back into \eqref{eq:s} yields
\begin{equation}
\label{eq:sfs-bound}
\Tr(\varrho SF_\gamma S)
  \geq
  \frac{4\gamma(\Delta+p_{\neq})^2}{\gamma+p_{\neq}}.
\end{equation}

We now turn to upper bounding $\Tr\!\left[\varrho RF_\gamma R\right]$. First, note that
\[
  \begin{pmatrix}
    D & R\\
    R & D
  \end{pmatrix}
  \succeq 0 \,,
\]
since the operator is unitarily equivalent to
$(D+R)\oplus(D-R)$, and both $D+R$ and $D-R$ are positive semidefinite by \eqref{eq:DplusR}. Since $\begin{pmatrix}
    0 & 0\\
    0 & \gamma\Id
  \end{pmatrix}
  \succ0$,
we also have
\[
  \begin{pmatrix}
    D & R\\
    R & D+\gamma\Id
  \end{pmatrix}
  \succeq0.
\]
Because $D\succeq0$ and $\gamma>0$, the lower-right block
$D+\gamma\Id$ is strictly positive (and hence invertible). Thus, the Schur-complement criterion applied to
\[
  \begin{pmatrix}
    D & R\\
    R & D+\gamma\Id
  \end{pmatrix}
  \succeq0
\]
gives
\[
  D-R(D+\gamma\Id)^{-1}R\succeq0,
\]
or equivalently $R(D+\gamma\Id)^{-1}R\preceq D$. Multiplying by $\gamma$ and using $F_\gamma=\gamma(D+\gamma\Id)^{-1}$ yields 
\begin{equation*}
\label{eq:rfr}
RF_\gamma R\preceq \gamma D \,.
\end{equation*}

Since $\varrho\succeq0$, the latter implies
\begin{equation}
\label{eq:rfr-bound}
     \Tr(\varrho RF_\gamma R)
  \leq
  \gamma\Tr(\varrho D)
  =
  \gamma p_{\neq} \,,
\end{equation}
where the last equality is by Equation \eqref{eq:rhoD}.

Finally, combining the $S$-term lower bound of \eqref{eq:sfs-bound} with the $R$-term upper bound of \eqref{eq:rfr-bound} yields
\begin{align*}
  \RePart\Tr(\varrho\Bop F_\gamma\Cop)
  &= \Tr\!\left[\varrho SF_\gamma S\right] -\Tr\!\left[\varrho RF_\gamma R\right] \nonumber \\
  &\geq
  \frac{\gamma(\Delta+p_{\neq})^2}{\gamma+p_{\neq}}-\gamma p_{\neq} \,. \label{eq:delta-p}
\end{align*}
Finally, since $0<\gamma\leq\Delta$ and $p_{\neq}\geq0$,
we have
\[
  \frac{\gamma(\Delta+p_{\neq})^2}{\gamma+p_{\neq}}-\gamma p_{\neq} 
  \geq
  \Delta(\gamma+p_{\neq}) - \gamma p_{\neq} \geq \gamma \Delta,
\]
which yields
\[
  \RePart\Tr(\varrho\Bop F_\gamma\Cop)
  \geq
  \gamma\Delta.
\]
\end{proof}
Lemma~\ref{lem:2} implies Lemma~\ref{lem:filtered-overlap}, which, together with Corollary~\ref{cor:success-overlap-trace}, completes the proof of Theorem~\ref{thm:parametric-reduction} (and Theorem~\ref{thm:main}).

\section{Uniformity and efficiency analysis}
\label{sec:completion}
Theorem~\ref{thm:parametric-reduction} establishes the success guarantee for Reduction~\ref{alg:gl-reduction}. In this section, we study its runtime. Theorem~\ref{thm:efficient} below says that, given a lower bound on $0 < \gamma \leq \Delta$, this reduction is black-box, uniform, and has an \emph{expected} runtime polynomial in $\frac{1}{\gamma}$ and in the runtime of the original strategy. Here, ``black-box'' means that, apart from the numerical parameter $\gamma$, the reduction uses the strategy for $\GL(G)$ only via an application of the splitting channel $\mathcal A_n$, and coherent controlled applications of the eflections $V^B_{k,r}$ and $V^C_{k,r}$; it does not use any additional structure of the strategy. Theorem~\ref{thm:efficient-capped} says that an appropriate truncation of Reduction~\ref{alg:gl-reduction} has the desired \emph{worst-case} runtime. 

Finally, we describe a uniform reduction that does \emph{not} require knowing a lower bound on $\Delta$. Theorem~\ref{thm:unknown-delta} says that this reduction can be made to have expected runtime runtime \emph{linear} in that of the original strategy, at the cost of a polynomial loss in the resulting search success probability. 

\begin{theorem}
\label{thm:efficient}
Suppose the strategy for $\GL(G)$ has a uniform circuit implementation, and that an efficiently computable parameter $0<\gamma\leq\Delta$ is given. Then Reduction~\ref{alg:gl-reduction} is uniform and has expected running time polynomial in $1/\gamma$ and in the running time of the strategy for $\GL(G)$.
\end{theorem}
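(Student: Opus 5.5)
The plan is to account for every step of Reduction~\ref{alg:gl-reduction} separately and then bound the expected total cost. Let $\tau$ denote the running time of the strategy for $\GL(G)$: the splitting channel $\mathcal A$ together with Bob's and Charlie's circuits implementing the binary measurements, and hence the reflections $V^B_{k,r}, V^C_{k,r}$ on input $(k,r)$. The splitting phase is left unchanged, so it costs at most $\tau$.

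First I will check that a reflection $V^B_{k,r}$ can be implemented from a measurement circuit in time $O(\tau)$. Run the measurement coherently, writing the answer bit $b$ into an ancilla. Apply the phase $(-1)^b$ with a single $Z$ gate. Then uncompute by running the circuit in reverse. Because the measurement is projective, this implements exactly $\Pi_0-\Pi_1$. If the paper's model requires it, I will assume the ancillas are included in the dilation supplied by $\mathcal A$.

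The controlled version $\CV^B_k=\sum_r \proj r\otimes V^B_{k,r}$ is obtained by running the same circuit with the classical input $r$ replaced by the quantum register $R$. This gives a uniform circuit of size $O(\tau+n)$, where $n\le\tau$ since the strategy must at least read $r$. Adding the two layers of Hadamards gives $\GL^B_k$ and $\GL^C_k$ in time $O(\tau)$ each. Uniformity then follows because every component is produced by a fixed, efficiently describable transformation of the given uniform circuit family, with $\gamma$ supplied as input.

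The only unbounded component is Bob's sequence $W^B_{k,T}$. It consists of $L$ reflections, each costing $O(\tau)$, plus $O(nL)$ time to sample the masks $t_i$. So the expected cost is $O(\tau\cdot\E[L])$. For the geometric law \eqref{eq:L-distribution} we have
\[
\E[L]=\frac{\lambda_\gamma}{1-\lambda_\gamma}=\frac{1}{2\gamma}.
\]
The total expected running time is therefore $O(\tau(1+1/\gamma))$, which is polynomial in $\tau$ and $1/\gamma$.

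The remaining step, which I expect to be the only real obstacle, is sampling $L$ exactly with a uniform procedure, since $\lambda_\gamma=1/(1+2\gamma)$ is an arbitrary efficiently computable real. I would sample $L$ as the number of consecutive ``continue'' outcomes of Bernoulli$(\lambda_\gamma)$ trials. Each trial is decided by comparing lazily generated uniform bits against the binary expansion of $\lambda_\gamma$, which takes $O(1)$ expected bits and a computation of the expansion to the needed precision. This gives expected time $O(\E[L]\cdot\mathrm{poly})=\mathrm{poly}(1/\gamma)$. If exact sampling is awkward, the fallback is to round $\gamma$ down to a dyadic rational $\gamma'\in[\gamma/2,\gamma]$. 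Then $\lambda_{\gamma'}$ is rational, the analysis of Theorem~\ref{thm:parametric-reduction} still applies since $\gamma'\le\Delta$, and only a constant factor is lost in the success bound.
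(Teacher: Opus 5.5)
Your proposal is correct and follows essentially the same route as the paper: account for each step of Reduction~\ref{alg:gl-reduction}, observe that the only unbounded cost is the $L$ reflections (plus $O(nL)$ mask sampling) in $W^B_{k,T}$, and use $\E[L]=\lambda_\gamma/(1-\lambda_\gamma)=1/(2\gamma)$ for the geometric length law. Your additional details are sound and make explicit points the paper's proof takes for granted: implementing each reflection by compute--phase--uncompute, obtaining the controlled version by making $r$ a quantum register, and sampling the Bernoulli$(\lambda_\gamma)$ trials exactly (or rounding $\gamma$ down to a dyadic $\gamma'\le\Delta$).
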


\begin{proof}
We examine Reduction~\ref{alg:gl-reduction} step by step.
\begin{enumerate}[label=\arabic*.,leftmargin=2em,itemsep=0.6em]
  \item The splitting procedure is exactly the same as in the common-mask GL unclonable-encryption game. Thus, the reduction uses the original uniform circuit for $\mathcal A_n$ without modification.

  \item Bob samples $L$ and $T$. Conditioned on $L=\ell$, he samples $\ell$ independent uniform $n$-bit masks and applies
  \[
    W^B_{k,T}=V^B_{k,t_\ell}\cdots V^B_{k,t_1}
  \]
  to register $B$. He then initializes a register $R$ to $\ket{0^n}$, applies the Goldreich--Levin unitary $\GL^B_k$ to registers $RB$, measures $R$ in the computational basis, and returns the outcome. Conditioned on $L=\ell$, this requires $\ell$ applications of Bob's eflections for $W^B_{k,T}$, one further coherent controlled application inside $\GL^B_k$, and time $O(n\ell)$ to sample the masks. Hence, its running time is polynomial in $n$, $\ell$, and the running time of Bob's decoder.

  \item Charlie applies $\GL^C_k$ to his registers and measures the output register in the computational basis. This has a uniform circuit implementation with running time polynomial in $n$ and the running time of Charlie's decoder.
\end{enumerate}

The geometric random variable $L$ can be sampled by repeated independent Bernoulli trials with success probability $1-\lambda_\gamma$. Since $L$ has the distribution in Equation~\eqref{eq:L-distribution} and $\lambda_\gamma=(1+2\gamma)^{-1}$,
\[
  \E[L]
  =
  \frac{\lambda_\gamma}{1-\lambda_\gamma}
  =
  \frac1{2\gamma}.
\]
Thus, the expected number of sampled masks and decoder-reflection applications is $O(1/\gamma)$. The claimed uniform expected-running-time bound follows. In particular, if $\gamma$ is inverse polynomial in $n$ and the original strategy is polynomial time, then the resulting strategy is expected polynomial time in $n$.
\end{proof}

We can obtain a worst-case polynomial-time variant by just modifying Bob's sampling of $L$ (by setting a maximum value). This will achieve the desired worst-case runtime at the cost of a small loss in the search success probability. Define
\begin{equation*}
  L_{\max}
  :=
  \left\lceil
    \frac{\log(2/\gamma^4)}
         {\log(1+2\gamma)}
  \right\rceil.
  \label{eq:Lmax}
\end{equation*}
Bob generates at most $L_{\max}$ independent Bernoulli bits, each equal to $0$ with probability $\lambda_\gamma$ and to $1$ with probability $1-\lambda_\gamma$. If the first $1$ occurs after exactly $\ell<L_{\max}$ preceding zeros, he sets $L=\ell$ and continues; if all $L_{\max}$ bits are zero, he aborts. Let $p_{\mathrm{Search}}^{(<L_{\max})}$ denote the joint winning probability of this truncated procedure.

\begin{theorem}[Truncated variant]
\label{thm:efficient-capped}
Suppose the strategy for $\GL(G)$ has a uniform circuit implementation, and that an efficiently computable parameter $0<\gamma\leq\Delta$ is given. Then, the ``truncated'' procedure described above is uniform, has worst-case running time polynomial in $1/\gamma$ and in the running time of the strategy for $\GL(G)$, and satisfies
\begin{equation*}
  p_{\mathrm{Search}}^{(<L_{\max})}
  \geq
  \frac12\gamma^2\Delta^2
  \geq
  \frac12\gamma^4.
  \label{eq:efficient-success}
\end{equation*}
\end{theorem}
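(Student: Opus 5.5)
The plan is to compare the truncated procedure with the untruncated Reduction~\ref{alg:gl-reduction}, branch by branch on the value of $L$. Write $p_\ell := \E_{k,m}\E_{T:|T|=\ell}\|\eta_{k,m,T}\|^2$ for the joint success probability conditioned on $L=\ell$. Then
\[
  p_{\mathrm{Search}}=\sum_{\ell\geq0}(1-\lambda_\gamma)\lambda_\gamma^\ell p_\ell .
\]
The truncated procedure produces $L=\ell$ with exactly the same probability $(1-\lambda_\gamma)\lambda_\gamma^\ell$ for each $\ell<L_{\max}$. It aborts, and hence loses, only on the event that all $L_{\max}$ Bernoulli bits are zero. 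Therefore
\[
  p_{\mathrm{Search}}^{(<L_{\max})}
  =\sum_{\ell<L_{\max}}(1-\lambda_\gamma)\lambda_\gamma^\ell p_\ell
  \;\geq\; p_{\mathrm{Search}}-\sum_{\ell\geq L_{\max}}(1-\lambda_\gamma)\lambda_\gamma^\ell
  = p_{\mathrm{Search}}-\lambda_\gamma^{L_{\max}},
\]
where the inequality uses $p_\ell\leq 1$.

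Next I bound the tail. From the definition of $L_{\max}$ we have $L_{\max}\log(1+2\gamma)\geq\log(2/\gamma^4)$, so
\[
  \lambda_\gamma^{L_{\max}}=(1+2\gamma)^{-L_{\max}}\leq \tfrac12\gamma^4 .
\]
By Theorem~\ref{thm:parametric-reduction}, $p_{\mathrm{Search}}\geq\gamma^2\Delta^2$. Using $\gamma\leq\Delta$ gives $\gamma^4\leq\gamma^2\Delta^2$, and so
\[
  p_{\mathrm{Search}}^{(<L_{\max})}\geq\gamma^2\Delta^2-\tfrac12\gamma^4\geq\tfrac12\gamma^2\Delta^2\geq\tfrac12\gamma^4 .
\]
This is the claimed bound. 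The cap $L_{\max}$ was chosen precisely so that the tail is at most $\tfrac12\gamma^4$, which lets the $\gamma\leq\Delta$ comparison absorb it.

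For efficiency, I reuse the step-by-step accounting in the proof of Theorem~\ref{thm:efficient}. The only change is that $\ell$ is now deterministically at most $L_{\max}$, so it suffices to show $L_{\max}$ is polynomial in $1/\gamma$. We have $\log(2/\gamma^4)=O(\log(1/\gamma))$. Also $\log(1+2\gamma)\geq c\,\gamma$ for $\gamma\leq 1$, using $\log(1+x)\geq x/2$ on $[0,1]$ and handling $\gamma>1/2$ separately with a constant. Hence $L_{\max}=O(\gamma^{-1}\log(1/\gamma))$. Sampling at most $L_{\max}$ Bernoulli bits with bias $1-\lambda_\gamma=2\gamma/(1+2\gamma)$ is uniform and efficient for efficiently computable $\gamma$. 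Bob then makes at most $L_{\max}+1$ reflection applications, so the worst-case runtime is polynomial in $1/\gamma$, $n$, and the strategy's runtime.

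The only point needing care is the first step. The truncated procedure must not reweight the surviving branches: it should not condition on not aborting, and it should not redistribute mass across $\ell$. Otherwise $p_{\mathrm{Search}}^{(<L_{\max})}$ could not be compared term-by-term with the geometric mixture analysed in Theorem~\ref{thm:parametric-reduction}. As described, the procedure is just the untruncated one with the tail replaced by an abort, so the comparison above goes through.
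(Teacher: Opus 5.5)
Your proposal is correct and follows the paper's proof essentially step for step. You bound the abort probability by $\lambda_\gamma^{L_{\max}} = (1+2\gamma)^{-L_{\max}} \leq \gamma^4/2 \leq \gamma^2\Delta^2/2$, subtract it from the guarantee of Theorem~\ref{thm:parametric-reduction}, and get $L_{\max} = O(\gamma^{-1}\log(2/\gamma))$ from a linear lower bound on $\log(1+2\gamma)$. Your explicit branch-by-branch decomposition over $\ell$ is a more detailed version of the paper's one-line remark that aborting on the tail event costs at most its probability.
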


\begin{proof}
The probability that the original geometric sampler produces $L\geq L_{\max}$ is
\[
  (1-\lambda_\gamma)
  \sum_{\ell=L_{\max}}^{\infty}
  \lambda_\gamma^\ell
  =
  \lambda_\gamma^{L_{\max}}.
\]
Aborting on this event can decrease the winning probability by at most its probability. Therefore, Theorem~\ref{thm:parametric-reduction} gives
\begin{align*}
  p_{\mathrm{Search}}^{(<L_{\max})}
  &\geq
  p_{\mathrm{Search}}-\lambda_\gamma^{L_{\max}} \\
  &\geq
  \gamma^2\Delta^2-\lambda_\gamma^{L_{\max}}.
\end{align*}
By the definition of $L_{\max}$,
\[
  \lambda_\gamma^{L_{\max}}
  =
  (1+2\gamma)^{-L_{\max}}
  \leq
  \frac{\gamma^4}{2}
  \leq
  \frac{\gamma^2\Delta^2}{2},
\]
where the last inequality uses $\gamma\leq\Delta$. Consequently,
\[
  p_{\mathrm{Search}}^{(<L_{\max})}
  \geq
  \frac12\gamma^2\Delta^2
  \geq
  \frac12\gamma^4.
\]

Finally, since $0<\gamma\leq\Delta\leq1$ and
$\log(1+2\gamma)\geq 2\gamma/(1+2\gamma)\geq2\gamma/3$,
\[
  L_{\max}
  =
  O\!\left(\gamma^{-1}\log(2/\gamma)\right)
  =
  O(\gamma^{-2}).
\]
Thus, the truncated procedure is uniform, and uses at most polynomially many masks and decoder-reflection applications. In particular, it has uniform worst-case polynomial in $n$ running time whenever $1/\gamma$ and the running time of the original strategy are polynomial in $n$.
\end{proof}

Finally, we show that it is not necessary to know any lower bound on
$\Delta$, thereby making the expected-time reduction completely uniform.

\begin{theorem}[Variant when a lower bound on $\Delta$ is not known]
\label{thm:unknown-delta}
Let $\Delta>0$. There exists a uniform reduction that does not take
$\Delta$, or any lower bound on $\Delta$, as input. Its expected running
time is linear in the running time of the strategy for
$\GL(G)$ (and independent of $\Delta$), and the resulting strategy for $G$ succeeds with probability at least
\begin{equation*}
    \frac{3}{16}\Delta^6.
    \label{eq:unknown-delta-success}
\end{equation*}
\end{theorem}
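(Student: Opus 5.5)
The plan is to remove the need for a known lower bound by \emph{randomizing} the parameter $\gamma$ of Reduction~\ref{alg:gl-reduction}. The reduction first samples $\gamma\in(0,1]$ from a fixed, $\Delta$-independent distribution $\mu$. It then runs Reduction~\ref{alg:gl-reduction} with this $\gamma$: Bob samples $L$ geometrically with $\lambda_\gamma=(1+2\gamma)^{-1}$, applies $W^B_{k,T}$ and then $\GL^B_k$, while Charlie applies $\GL^C_k$. The resulting strategy for $G$ is a mixture over $\gamma$, so its success probability is the $\mu$-average of the success probabilities of the individual reductions.

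\textbf{Success.} Theorem~\ref{thm:parametric-reduction} gives success at least $\gamma^2\Delta^2$ whenever $\gamma\le\Delta$. On the event $\gamma>\Delta$ we simply bound the success by $0$, so no guarantee is needed there. Hence
\[
  p_{\mathrm{Search}}\;\ge\;\Delta^2\int_0^{\Delta}\gamma^2\,d\mu(\gamma).
\]

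\textbf{Runtime.} Conditioned on $\gamma$, Theorem~\ref{thm:efficient} gives $\E[L\mid\gamma]=1/(2\gamma)$. So the expected number of decoder reflections is $2+\E_\mu[1/(2\gamma)]$, namely $L$ reflections for $W^B_{k,T}$ plus one controlled reflection inside each of the two $\GL$ circuits. The extra cost is sampling $O(nL)$ mask bits, plus $\gamma$ and the Bernoulli trials. The constraint on $\mu$ is therefore that $\E_\mu[1/\gamma]$ be an absolute constant. The natural choice is the density $2\gamma$ on $[0,1]$. Then
\[
  \E_\mu\!\left[\frac{1}{2\gamma}\right]=\int_0^1 \frac{1}{2\gamma}\cdot 2\gamma\,d\gamma=1,
\]
so the expected runtime is a constant multiple of one run of the strategy for $\GL(G)$, independent of $\Delta$. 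The success bound becomes
\[
  \Delta^2\int_0^\Delta 2\gamma^3\,d\gamma=\frac{\Delta^6}{2}\ \ge\ \frac{3}{16}\Delta^6,
\]
using $\Delta\le1$.

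\textbf{Main obstacle: exact sampling.} A uniform reduction cannot sample a real $\gamma$ with density $2\gamma$ and then flip a coin with bias $2\gamma/(1+2\gamma)$ exactly. I would first try to avoid real numbers altogether by computing the \emph{marginal} law of $L$:
\[
  \Pr[L=\ell]=\int_0^1 2\gamma\cdot\frac{2\gamma}{1+2\gamma}\Bigl(\frac{1}{1+2\gamma}\Bigr)^{\ell}d\gamma .
\]
Since $L$ depends on $\gamma$ only through the geometric law, the whole analysis only needs $\E_T[W^B_{k,T}\otimes W^C_{k,T}]$ averaged over this mixture. One can then sample $L$ lazily, one Bernoulli trial at a time, refining a dyadic approximation of $U$ with $\gamma=\sqrt U$ only as far as each comparison requires. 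This takes expected $O(1)$ random bits per trial.

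\textbf{Fallback: a discrete grid.} If exact lazy sampling is awkward, I would discretize instead. Take $\gamma\in\{2^{-j}\}_{j\ge1}$ with $\Pr[\gamma=2^{-j}]\propto 4^{-j}$, so that $\E[1/\gamma]<\infty$ still holds. Sampling $j$ needs only fair coins, and the Bernoulli bias $2\gamma/(1+2\gamma)$ is then rational and exactly samplable. The success calculation becomes a geometric sum over $j$ with $2^{-j}\le\Delta$. The dominant term is the largest grid point $\gamma^\star\in(\Delta/2,\Delta]$, which has weight on the order of $\Delta^2$ and contributes at least $(\Delta/2)^2\Delta^2$. Summing the geometric tail gives a bound of the form $c\,\Delta^6$. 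I expect that tuning the grid weights and the normalization so that the constant comes out as at least $3/16$ is the only delicate bookkeeping. The remaining ingredients, Theorem~\ref{thm:parametric-reduction} conditionally on each $\gamma$ and linearity of expectation for the runtime, are immediate.
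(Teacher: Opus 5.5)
Your fallback is precisely the paper's proof. The paper samples $I$ with $\Pr[I=i]=3\cdot4^{-i}$ (draw pairs of fair bits until one is not $00$) and sets $\gamma=2^{-I}$. It then keeps only the single branch $I=j$ with $\Delta/2\le 2^{-j}\le\Delta$, which gives $3\cdot4^{-j}\,(2^{-j}\Delta)^2=3(2^{-j})^4\Delta^2\ge\frac{3}{16}\Delta^6$, and $\E[L]=\sum_i 3\cdot4^{-i}2^{i-1}=\frac32$. So no tail sum or tuning is needed: the constant comes directly from the normalization $\sum_{i\ge1}4^{-i}=\frac13$.

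Your primary route, the density $2\gamma$ on $[0,1]$, is a genuinely different and analytically nicer variant. It integrates the guarantee of Theorem~\ref{thm:parametric-reduction} over every $\gamma\le\Delta$ instead of one dyadic scale, giving $\Delta^6/2$ with $\E[L]=1$. The price is that it moves all the difficulty into exact sampling: Bob needs coins of irrational bias $2\sqrt U/(1+2\sqrt U)$, all sharing one lazily refined $U$. Your sketch can be made rigorous. Since $\sqrt{\cdot}$ is $\frac12$-H\"older, deciding trial $i$ needs about $2\log_2(1/|V_i-p|)$ bits of $U$. Wald's identity over the stopping time $L+1$ then bounds the total in expectation by a constant. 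But this is exactly the bookkeeping the dyadic grid avoids, since there $1-\lambda_\gamma=1/(2^{I-1}+1)$ is rational.

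One remark of yours needs correcting. It is not true that the analysis ``only needs'' the mixture-averaged $\E_T[W^B_{k,T}\otimes W^C_{k,T}]$. Plugging $\int F_\gamma\,d\mu(\gamma)$ into Corollary~\ref{cor:success-overlap-trace} is not justified. For $\gamma>\Delta$, the proof's lower bound $\gamma(\Delta^2+2\Delta p_{\neq}-\gamma p_{\neq})/(\gamma+p_{\neq})$ on $\RePart\Tr(\varrho\Bop F_\gamma\Cop)$ can be negative, so nothing rules out cancellation against the good scales. Sampling $L$ from its marginal is legitimate for a different reason. The success probability is linear in the law of $(L,T)$, so it equals $\int p_{\mathrm{Search}}(\gamma)\,d\mu(\gamma)$, and Theorem~\ref{thm:parametric-reduction} is applied conditionally on each $\gamma$. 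That is what your Success paragraph correctly does.
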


\begin{proof}
The reduction first samples an integer $I\geq1$ according to
\begin{equation}
    \Pr[I=i]=3\cdot 4^{-i},
    \qquad i=1,2,\ldots,
    \label{eq:unknown-delta-distribution}
\end{equation}
sets $\gamma=2^{-I}$, and then runs the procedure in
Reduction~\ref{alg:gl-reduction} with this value of $\gamma$. The
procedure is well defined for every $\gamma>0$; the condition
$\gamma\leq\Delta$ is needed only for its success guarantee.

The distribution in \eqref{eq:unknown-delta-distribution} can be sampled
uniformly as follows: repeatedly sample two bits, and stop at the
first pair that is not $00$. If this occurs on the $i$th trial, set
$I=i$.

Let $j\geq1$ be such that
\begin{equation*}
    \frac{\Delta}{2}\leq 2^{-j}\leq\Delta.
    \label{eq:matched-dyadic-scale}
\end{equation*}
Such a $j$ exists for every $0<\Delta\leq1$ and is used only in the
analysis; the reduction does not need to know it. Conditioned on the
event $I=j$, Theorem~\ref{thm:parametric-reduction} applies and gives
success probability at least $(2^{-j}\Delta)^2$. Ignoring all other
branches, which contribute nonnegative success probability, we obtain
\begin{align*}
    p_{\mathrm{Search}}
    &\geq
    \Pr[I=j]\,(2^{-j}\Delta)^2 \\
    &=
    3\cdot4^{-j}\,(2^{-j}\Delta)^2 \\
    &=
    3(2^{-j})^4\Delta^2 \\
    &\geq
    3\left(\frac{\Delta}{2}\right)^4\Delta^2
    =
    \frac{3}{16}\Delta^6.
\end{align*}

It remains to bound the expected running time. Conditioned on $I=i$, we
have $\gamma=2^{-i}$, and hence the expected sequence length in
Reduction~\ref{alg:gl-reduction} is
\begin{equation*}
    \E[L\mid I=i]
    =
    \frac{1}{2\gamma}
    =
    2^{i-1}.
\end{equation*}
Therefore,
\begin{equation*}
    \E[L]
    =
    \sum_{i=1}^{\infty}
    3\cdot4^{-i}\,2^{i-1} =
    \frac{3}{2}
    \sum_{i=1}^{\infty}2^{-i}
    =
    \frac{3}{2}.
\end{equation*}
Thus, the expected number of sampled masks and decoder-reflection calls
is constant, and the overall expected running time is linear in the running time of the original strategy for $\GL(G)$.
\end{proof}

\printbibliography

@inproceedings{GL89,
  author    = {O. Goldreich and L. A. Levin},
  title     = {A Hard-Core Predicate for All One-Way Functions},
  booktitle = {Proceedings of the Twenty-First Annual ACM Symposium on Theory of Computing ({STOC} '89)},
  pages     = {25--32},
  publisher = {ACM Press},
  year      = {1989},
  doi       = {10.1145/73007.73010},
  sortkey   = {01}
}

@article{bhattacharyya2026statistically,
  title={Statistically secure uncloneable encryption of arbitrary messages},
  author={Bhattacharyya, Archishna and Broadbent, Anne and Culf, Eric},
  journal={arXiv preprint arXiv:2607.28561},
  year={2026}
}

@inproceedings{AC02,
  author    = {M. Adcock and R. Cleve},
  title     = {A Quantum {Goldreich--Levin} Theorem with Cryptographic Applications},
  editor    = {H. Alt and A. Ferreira},
  booktitle = {{STACS} 2002},
  series    = {Lecture Notes in Computer Science},
  volume    = {2285},
  pages     = {323--334},
  publisher = {Springer},
  year      = {2002},
  doi       = {10.1007/3-540-45841-7_26},
  sortkey   = {02}
}

@inproceedings{AKL23,
  author    = {P. Ananth and F. Kaleoglu and Q. Liu},
  title     = {Cloning Games: A General Framework for Unclonable Primitives},
  editor    = {H. Handschuh and A. Lysyanskaya},
  booktitle = {Advances in Cryptology---{CRYPTO} 2023, Part V},
  series    = {Lecture Notes in Computer Science},
  volume    = {14085},
  pages     = {66--98},
  publisher = {Springer},
  year      = {2023},
  doi       = {10.1007/978-3-031-38554-4_3},
  sortkey   = {03}
}

@inproceedings{CLLZ21,
  author    = {A. Coladangelo and J. Liu and Q. Liu and M. Zhandry},
  title     = {Hidden Cosets and Applications to Unclonable Cryptography},
  editor    = {T. Malkin and C. Peikert},
  booktitle = {Advances in Cryptology---{CRYPTO} 2021, Part I},
  series    = {Lecture Notes in Computer Science},
  volume    = {12825},
  pages     = {556--584},
  publisher = {Springer},
  year      = {2021},
  doi       = {10.1007/978-3-030-84242-0_20},
  sortkey   = {04}
}

@article{Got03,
  author       = {Daniel Gottesman},
  title        = {Uncloneable Encryption},
  journaltitle = {Quantum Information and Computation},
  volume       = {3},
  number       = {6},
  pages        = {581--602},
  year         = {2003},
  doi          = {10.26421/QIC3.6-2},
  sortkey      = {05}
}

@article{CV22,
  author       = {Eric Culf and Thomas Vidick},
  title        = {A Monogamy-of-Entanglement Game for Subspace Coset States},
  journaltitle = {Quantum},
  volume       = {6},
  pages        = {791},
  year         = {2022},
  doi          = {10.22331/q-2022-09-01-791},
  sortkey      = {04a}
}

@inproceedings{BL20,
  author    = {Anne Broadbent and S\'ebastien Lord},
  title     = {Uncloneable Quantum Encryption via Oracles},
  editor    = {Steven T. Flammia},
  booktitle = {15th Conference on the Theory of Quantum Computation,
               Communication and Cryptography ({TQC} 2020)},
  series    = {Leibniz International Proceedings in Informatics ({LIPIcs})},
  volume    = {158},
  pages     = {4:1--4:22},
  publisher = {Schloss Dagstuhl--Leibniz-Zentrum f\"ur Informatik},
  year      = {2020},
  doi       = {10.4230/LIPIcs.TQC.2020.4},
  sortkey   = {06}
}

@article{TFKW13,
  author       = {Marco Tomamichel and Serge Fehr and J\k{e}drzej Kaniewski and
                  Stephanie Wehner},
  title        = {A Monogamy-of-Entanglement Game with Applications to
                  Device-Independent Quantum Cryptography},
  journaltitle = {New Journal of Physics},
  volume       = {15},
  number       = {10},
  pages        = {103002},
  year         = {2013},
  doi          = {10.1088/1367-2630/15/10/103002},
  sortkey      = {07}
}

@inproceedings{AKLLZ22,
  author    = {Prabhanjan Ananth and Fatih Kaleoglu and Xingjian Li and
               Qipeng Liu and Mark Zhandry},
  title     = {On the Feasibility of Unclonable Encryption, and More},
  editor    = {Yevgeniy Dodis and Thomas Shrimpton},
  booktitle = {Advances in Cryptology---{CRYPTO} 2022, Part II},
  series    = {Lecture Notes in Computer Science},
  volume    = {13508},
  pages     = {212--241},
  publisher = {Springer},
  year      = {2022},
  doi       = {10.1007/978-3-031-15979-4_8},
  sortkey   = {07a}
}

@article{KT25,
  author       = {Srijita Kundu and Ernest Y.-Z. Tan},
  title        = {Device-Independent Uncloneable Encryption},
  journaltitle = {Quantum},
  volume       = {9},
  pages        = {1582},
  year         = {2025},
  doi          = {10.22331/q-2025-01-08-1582},
  sortkey      = {08}
}

@inproceedings{AKY25,
  author    = {Prabhanjan Ananth and Fatih Kaleoglu and Henry Yuen},
  title     = {Simultaneous {Haar} Indistinguishability with Applications to
               Unclonable Cryptography},
  editor    = {Raghu Meka},
  booktitle = {16th Innovations in Theoretical Computer Science Conference
               ({ITCS} 2025)},
  series    = {Leibniz International Proceedings in Informatics ({LIPIcs})},
  volume    = {325},
  pages     = {7:1--7:23},
  publisher = {Schloss Dagstuhl--Leibniz-Zentrum f\"ur Informatik},
  year      = {2025},
  doi       = {10.4230/LIPIcs.ITCS.2025.7},
  sortkey   = {11}
}

@inproceedings{AB24,
  author    = {Prabhanjan Ananth and Amit Behera},
  title     = {A Modular Approach to Unclonable Cryptography},
  editor    = {Leonid Reyzin and Douglas Stebila},
  booktitle = {Advances in Cryptology---{CRYPTO} 2024, Part VII},
  series    = {Lecture Notes in Computer Science},
  volume    = {14926},
  pages     = {3--37},
  publisher = {Springer},
  year      = {2024},
  doi       = {10.1007/978-3-031-68394-7_1},
  sortkey   = {12}
}

@misc{CHV24,
  author       = {C\'eline Chevalier and Paul Hermouet and Quoc-Huy Vu},
  title        = {Towards Unclonable Cryptography in the Plain Model},
  year         = {2023},
  howpublished = {Cryptology ePrint Archive, Report 2023/1825},
  eprint       = {2311.16663},
  eprinttype   = {arXiv},
  url          = {https://eprint.iacr.org/2023/1825},
  sortkey      = {13}
}

@inproceedings{CLX26,
  author    = {Andrea Coladangelo and Qipeng Liu and Ziyi Xie},
  title     = {The Curious Case of ``{XOR} Repetition'' of
               Monogamy-of-Entanglement Games},
  editor    = {Shubhangi Saraf},
  booktitle = {17th Innovations in Theoretical Computer Science Conference
               ({ITCS} 2026)},
  series    = {Leibniz International Proceedings in Informatics ({LIPIcs})},
  volume    = {362},
  pages     = {41:1--41:20},
  publisher = {Schloss Dagstuhl--Leibniz-Zentrum f\"ur Informatik},
  year      = {2026},
  doi       = {10.4230/LIPIcs.ITCS.2026.41},
  sortkey   = {14}
}

@inproceedings{CKNY25,
  author    = {Jeffrey Champion and Fuyuki Kitagawa and Ryo Nishimaki and
               Takashi Yamakawa},
  title     = {Untelegraphable Encryption and Its Applications},
  booktitle = {Theory of Cryptography---{TCC} 2025, Part III},
  series    = {Lecture Notes in Computer Science},
  volume    = {16270},
  pages     = {3--35},
  publisher = {Springer},
  year      = {2025},
  doi       = {10.1007/978-3-032-12296-4_1},
  sortkey   = {15}
}

@misc{CakanGoyal24,
  author       = {Alper \c{C}akan and Vipul Goyal},
  title        = {Unbounded Leakage-Resilient Encryption and Signatures},
  year         = {2024},
  howpublished = {Cryptology ePrint Archive, Report 2024/1876},
  url          = {https://eprint.iacr.org/2024/1876},
  sortkey      = {16}
}

@article{BC26,
  author       = {Archishna Bhattacharyya and Eric Culf},
  title        = {Uncloneable Encryption from Decoupling},
  journaltitle = {Nature Physics},
  volume       = {22},
  number       = {2},
  pages        = {315--318},
  year         = {2026},
  doi          = {10.1038/s41567-025-03154-7},
  sortkey      = {18}
}

@misc{BBC26,
  author       = {Archishna Bhattacharyya and Anne Broadbent and Eric Culf},
  title        = {The Uncloneable Bit Exists},
  year         = {2026},
  eprint       = {2603.08916},
  eprinttype   = {arXiv},
  url          = {https://arxiv.org/abs/2603.08916},
  sortkey      = {19}
}

@misc{AS26,
  author       = {Prabhanjan Ananth and Amit Sahai},
  title        = {Unconditional Unclonable Encryption},
  year         = {2026},
  howpublished = {Cryptology ePrint Archive, Report 2026/1511},
  eprint       = {2607.21551},
  eprinttype   = {arXiv},
  url          = {https://eprint.iacr.org/2026/1511},
  sortkey      = {20}
}

@misc{Rag26,
  author       = {Seyoon Ragavan},
  title        = {Efficient Unclonable Encryption from {Pauli} Eigenstates},
  year         = {2026},
  howpublished = {Cryptology ePrint Archive, Report 2026/1509},
  eprint       = {2607.21811},
  eprinttype   = {arXiv},
  url          = {https://eprint.iacr.org/2026/1509},
  sortkey      = {21}
}

@inproceedings{anshu2022area,
  title={An area law for 2d frustration-free spin systems},
  author={Anshu, Anurag and Arad, Itai and Gosset, David},
  booktitle={Proceedings of the 54th Annual ACM SIGACT Symposium on Theory of Computing},
  pages={12--18},
  year={2022}
}

@article{arad2012improved,
  title={Improved one-dimensional area law for frustration-free systems},
  author={Arad, Itai and Landau, Zeph and Vazirani, Umesh},
  journal={Physical Review B—Condensed Matter and Materials Physics},
  volume={85},
  number={19},
  pages={195145},
  year={2012},
  publisher={APS}
}

@inproceedings{gilyen2019quantum,
  title={Quantum singular value transformation and beyond: exponential improvements for quantum matrix arithmetics},
  author={Gily{\'e}n, Andr{\'a}s and Su, Yuan and Low, Guang Hao and Wiebe, Nathan},
  booktitle={Proceedings of the 51st annual ACM SIGACT symposium on theory of computing},
  pages={193--204},
  year={2019}
}

\appendix

\section{A common-mask strategy on which plain GL fails completely}
\label{app:cancellation}
We describe a simple two-bit cloning game and a strategy for its common-GL mask game whose prediction probability is strictly larger than $1/2$, while the corresponding plain Goldreich--Levin simultaneous extraction probability is zero (and hence the overlap between vectors $\ket{b_{k,m}}, \ket{c_{k,m}}$ is also zero).

We consider the following distribution over $(k,m)$: uniform over all pairs satisfying $k=m$. Although $k$ completely reveals the message $m$ in this setting, it is still interesting to observe that plain GL fails.

The challenger prepares the following fixed quantum state (regardless  of $k,m$) and Alice does nothing:
\[
  \ket{\psi}^{A_1A_2}
  :=
  \frac{1}{\sqrt6}
  \left(
    \ket{00}+\ket{01}+2\ket{11}
  \right).
\]
Consider a splitting procedure that sends $A_1$ to Bob and $A_2$ to Charlie. Thus, the
post-splitting state is exactly $\ket{\psi}^{BC}$. For masks $r\in\{00,01,10,11\}$, define the local reflections
\[
\begin{array}{c|cccc}
  r       & 00 & 01 & 10 & 11 \\ \hline
  B_r     & Z  & X  & \Id & -X \\
  C_r     & Z  & -Z & \Id & -X
\end{array}
\]
and let $\chi_x(r):=(-1)^{\langle r,x\rangle}$. Then, define Bob's and Charlie's decoder reflections to be
\[
  V^B_{k,r}:=\chi_k(r)B_r,
  \qquad
  V^C_{k,r}:=\chi_k(r)C_r.
\]
Since $k=m$, note that the reflections about the respective correct-output subspaces are exactly
\[
  \chi_m(r)V^B_{k,r}=B_r,
  \qquad
  \chi_m(r)V^C_{k,r}=C_r,
\]
which act as $+1$ on the correct-output subspace and $-1$ on incorrect.
Then, the four projectors corresponding to simultaneous correct guesses, for $r=00,01,10,11$ respectively, are
\[
  \proj{0}\otimes\proj{0},
  \qquad
  \proj{+}\otimes\proj{1},
  \qquad
  \Id,
  \qquad
  \proj{-}\otimes\proj{-}.
\]
Their expectations with respect to $\ket{\psi}$ can be computed to be
\[
  \frac16,
  \qquad
  \frac34,
  \qquad
  1,
  \qquad
  \frac16.
\]
So, the common-mask prediction probability is
\[
  p_{\mathrm{Pred}}
  =
  \frac14
  \left(
    \frac16+\frac34+1+\frac16
  \right)
  =
  \frac{25}{48}
  >
  \frac12.
\]

Now, consider what happens when Bob and Charlie each apply their plain GL extractors associated with the above strategy. Since $k=m$, the relevant Fourier coefficients are
\begin{align*}
  \widehat V^B_k(m)
  &=
  \frac14\sum_r \chi_m(r)V^B_{k,r}
  =
  \frac14\sum_r B_r
  =
  \frac{\Id+Z}{4}
  =
  \frac12\proj{0},\\
  \widehat V^C_k(m)
  &=
  \frac14\sum_r \chi_m(r)V^C_{k,r}
  =
  \frac14\sum_r C_r
  =
  \frac{\Id-X}{4}
  =
  \frac12\proj{-}.
\end{align*}
However, $
  (\bra{0}\bra{-})\ket{\psi}
  =
  \frac{1}{\sqrt2}
  \left(
    \bra{00}-\bra{01}
  \right)\ket{\psi}
  =
  0$, which implies that the entire simultaneous correct-output branch vanishes:
\[
  \left(
    \widehat V^B_k(m)\otimes
    \widehat V^C_k(m)
  \right)
  \ket{\psi}
  =
  0.
\]
Therefore, the plain GL extraction probability is exactly
\[
  p_{\mathrm{plainGL}}
  =
  \left\|
    \left(
      \widehat V^B_k(m)\otimes
      \widehat V^C_k(m)
    \right)
    \ket{\psi}
  \right\|^2
  =
  0.
\]
One can also check that the individual correct-output vectors $
  \ket{b_{k,m}}
  =
  \left(
    \widehat V^B_k(m)\otimes\Id
  \right)\ket{\psi}$ and $\ket{c_{k,m}}
  =
  \left(
    \Id\otimes\widehat V^C_k(m)
  \right)\ket{\psi}$, satisfy
\[
  \braket{b_{k,m}}{c_{k,m}}
  =
  \bra{\psi}
  \left(
    \widehat V^B_k(m)\otimes
    \widehat V^C_k(m)
  \right)
  \ket{\psi}
  =
  0.
\]
Thus, $p_{\mathrm{Pred}}=\frac{25}{48}>\frac12$, but $p_{\mathrm{plainGL}} = \braket{b_{k,m}}{c_{k,m}}=0$.
\end{document}